\newcommand{\supp}{\operatorname{supp}}
\newcommand{\tr}{{\operatorname{tr}}}
\newcommand{\Ran}{{\operatorname{Ran}}}
\newcommand{\R}{{\mathbb{R}}} \newcommand{\Z}{{\mathbb{Z}}}
\newcommand{\C}{{\mathbb{C}}}
\newcommand{\e}{{\rm e}}
\renewcommand{\tr}{{\rm tr}}
\theoremstyle{plain}
\newtheorem{thm}{Theorem}[section]
\newtheorem{proposition}[thm]{Proposition}
\newtheorem{lemma}[thm]{Lemma} \newtheorem{corollary}[thm]{Corollary}
\theoremstyle{definition}
 \newtheorem{remark}[thm]{Remark}
\newtheorem{remarks}[thm]{Remarks}
\newtheorem*{remarks*}{Remarks}
\newtheorem*{remark*}{Remark}
\newtheorem*{defn*}{Definition}
\newtheorem{conjecture}[thm]{Conjecture}
\numberwithin{equation}{section}
\title{The Howland - Kato Commutator Problem}
\author{Ira Herbst, Thomas  L. Kriete}
\address{Department of Mathematics \\
  University of Virginia \\
  Charlottesville \\
  VA 22904\\ U.S.A.}
\email{iwh@virginia.edu, tlk8q@virginia.edu}
\date{\today}
\begin{document}

\maketitle

\begin{abstract}
We investigate the following problem: For what $f$ and $g$ is the commutator $i[f(P),g(Q)]$  positive when $f$ and $g$ are bounded measurable functions?  This problem originated in work of James Howland and was pursued by Tosio Kato who suggested what might be the answer.  So far there is no proof that Kato was correct but in this paper we discuss the problem and give some partial answers to the above question.
\end{abstract}

\tableofcontents

\section{Introduction}

In a paper on spectral theory \cite{JSH}, J. Howland used the positive commutator of two bounded functions of the Heisenberg operators $P$ and $Q$,  $ i[f(P), g(Q)]$, as a technical tool.  Here $ P = -id/dx$ and $Q$ is multiplication by $x$ in $L^2(\R)$.  The functions $f$ and $g$ were specifically
\begin{equation}
f(t) = \tan^{-1}(t/2), g(t) = \tanh(t).
\end{equation}
He sent his paper to T. Kato, his former thesis advisor, who got interested in the more general question: for what bounded real functions is the above commutator positive?  Kato made much progress on this problem and in a beautiful paper, \cite{TK}, he identified a very interesting class of pairs of such functions for which the commutator was positive.  In this paper we will always assume that $f$ and $g$ are bounded measurable real functions.
To state the main result of \cite{TK}, for $a >0$ define 

 $$K_a = \{f:\R \to \R| \ f \text{\ is bounded and has an analytic continuation to the strip}$$
 $$\ |\text{Im}z| < a \text{ with } \text{Im}f(z) \text{Im}z \ge 0 \}.$$
 Then 
 
 \begin{thm}[Kato]\label{K's thm}
 If $f \in K_a$, $g \in K_b$ with $ab \ge \pi/2$, then $ i[f(P), g(Q)] \ge 0$.  
 \end{thm}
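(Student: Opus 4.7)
The strategy has three stages: \textit{(i)} establish a Herglotz--Nevanlinna style integral representation of any $f \in K_a$ as a positive superposition of translates of a fixed atom, \textit{(ii)} reduce the general commutator positivity to a single central atomic commutator via translation covariance of the canonical commutation relation, and \textit{(iii)} prove the atomic positivity inequality, which is the main obstacle.

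\textit{Integral representation.} The map $w = e^{\pi z/a}$ is a biholomorphism of the strip $|\Im z| < a$ onto the slit plane $\C \setminus (-\infty, 0]$, sending the upper half-strip onto the upper half-plane. Accordingly, every $f \in K_a$ pulls back to a bounded Nevanlinna function $F$ on $\{\Im w > 0\}$ with $\Im F \ge 0$ and $F$ real on $(0, \infty)$; the Herglotz--Nevanlinna theorem for such $F$ produces a finite positive measure supported on $(-\infty, 0)$. Substituting $\lambda = -e^{\pi\xi/a}$ and absorbing constants yields
\[
f(x) \;=\; c_f \;+\; \int_\R \phi_a(x-\xi)\, d\mu_f(\xi), \qquad \phi_a(x) \;:=\; \tanh\!\bigl(\pi x/(2a)\bigr),
\]
with $c_f \in \R$ and $\mu_f \ge 0$ a finite Borel measure on $\R$. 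Note $\phi_a \in K_a$, since the poles of $\tanh(\pi z/(2a))$ nearest to $\R$ sit at $z = \pm i a$, and $\tanh$ sends the upper half-plane into itself.

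\textit{Reduction to the central atom.} The canonical commutation relation gives $e^{i\tau P}Q e^{-i\tau P} = Q + \tau$ and $e^{-i\xi Q}P e^{i\xi Q} = P + \xi$, so the Weyl unitary $U_{\xi,\tau} := e^{i\tau P}e^{-i\xi Q}$ satisfies $U_{\xi,\tau}^*\phi_a(P)U_{\xi,\tau} = \phi_a(P-\xi)$ and $U_{\xi,\tau}^*\phi_b(Q)U_{\xi,\tau} = \phi_b(Q-\tau)$. Substituting the integral representations of $f$ and $g$ (the constants $c_f, c_g$ commute out),
\[
i[f(P), g(Q)] \;=\; \iint U_{\xi,\tau}^* \,\bigl(i[\phi_a(P), \phi_b(Q)]\bigr)\, U_{\xi,\tau}\; d\mu_f(\xi)\, d\mu_g(\tau).
\]
Unitary conjugation and integration against a positive product measure both preserve operator positivity, so the theorem is reduced to showing the single central atomic inequality $i[\phi_a(P), \phi_b(Q)] \ge 0$ under the assumption $ab \ge \pi/2$.

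\textit{Atomic positivity; main obstacle.} Rescale $P' := \pi P/(2a)$ and $Q' := \pi Q/(2b)$ so that $[P', Q'] = -i\lambda$ with $\lambda = \pi^2/(4ab)$; the hypothesis $ab \ge \pi/2$ becomes $\lambda \le \pi/2$. Using the distributional Fourier identity $\widehat{\tanh}(s) = -i\pi/\sinh(\pi s/2)$ (in the principal-value sense) together with the Weyl form $e^{isP'}e^{itQ'} = e^{i\lambda st/2}\,e^{i(sP'+tQ')}$, a direct computation gives
\[
i[\tanh(P'), \tanh(Q')] \;=\; \tfrac{1}{2}\iint \frac{\sin(\lambda st/2)}{\sinh(\pi s/2)\,\sinh(\pi t/2)}\; e^{i(sP' + tQ')}\, ds\, dt.
\]
Proving nonnegativity of this Weyl-quantized operator \emph{exactly} in the regime $\lambda \le \pi/2$ is the crux and the main obstacle: the threshold is sharp, so any loose estimate is doomed and one must exploit a precise identity. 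A promising route is the Mittag--Leffler expansion $\tanh(x) = \sum_{k \ge 0} 2x/(x^2 + ((k+\tfrac12)\pi)^2)$, which recasts each $\tanh$ as a sum of resolvents; one then analyzes the commutators $i\bigl[P'/(P'^2 + c^2),\, Q'/(Q'^2 + d^2)\bigr]$ with $c, d$ running over the half-integer grid $\{(k+\tfrac12)\pi\}$ and verifies that after summation the (signed) individual terms assemble into a manifestly nonnegative operator precisely for $\lambda \le \pi/2$. The delicate bookkeeping of cross-terms at the critical value $\lambda = \pi/2$ is where the heart of the argument must live.
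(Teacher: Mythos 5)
Your stages \emph{(i)} and \emph{(ii)} are sound: the representation $f(x)=c_f+\int\tanh\bigl(\hat a(x-\xi)\bigr)\,d\mu_f(\xi)$ with $\mu_f$ finite and positive is exactly Kato's representation (Eq.\ (\ref{integral rep}) of this paper), and the reduction by Weyl-unitary conjugation together with integration against the positive product measure $d\mu_f\,d\mu_g$ correctly reduces the theorem to positivity of the atomic commutator. (The Herglotz step is stated a bit breezily --- one must check that boundedness of $f$ forces the representing measure to be finite with no mass escaping to the ends of the slit plane --- but this is exactly what Kato's representation encodes, so I accept it.) The genuine gap is that your stage \emph{(iii)}, the atomic positivity, is precisely where the theorem lives, and you do not prove it: you reformulate it as nonnegativity of a Weyl-quantized symbol, call it ``the crux and the main obstacle,'' propose a Mittag--Leffler/resolvent decomposition, and leave ``the delicate bookkeeping of cross-terms'' at $\lambda=\pi/2$ unresolved. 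As written, this is a plan whose decisive step is missing, not a proof.

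The gap is avoidable, because the missing step is an explicit computation rather than a sharp estimate. Since $K_a\supset K_c$ for $0<a<c$, you may assume $ab=\pi/2$ from the start; then your atoms satisfy $\hat a\hat b=\pi/2$, i.e.\ only the critical value $\lambda=\pi/2$ ever occurs, and there is no need to treat general $\lambda\le\pi/2$ or the half-integer resolvent grid. At the critical value the atomic commutator is explicitly rank one and positive: by the kernel formula of Section \ref{finiterankC} and the known Fourier transform of $\cosh^{-2}$, the operator $i[\tanh\hat a P,\tanh\hat b Q]$ with $\hat a\hat b=\pi/2$ has integral kernel
\begin{equation*}
\frac{\hat b}{\pi}\,\frac{\tanh\hat b x-\tanh\hat b y}{\sinh\bigl(\hat b(x-y)\bigr)}
=\frac{\hat b}{\pi}\,\frac{1}{\cosh\hat b x\,\cosh\hat b y},
\end{equation*}
which is manifestly a positive rank-one operator; this is the $n=1$ case of Proposition \ref{tanh commutator}. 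With that computation inserted, your stages \emph{(i)}--\emph{(ii)} complete the argument, and the route is then essentially Kato's original one (note the present paper does not reprove Theorem \ref{K's thm}; it quotes it from \cite{TK}, while the atomic computation you need appears in its Section \ref{finiterankC}).
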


As we will see later, if $f$ and $g$ are two bounded real measurable functions for which $ i[f(P), g(Q)] = C \ge 0 $, then as noted by Kato, $C$ is trace class.  It is thus natural to look at the case where $C$ is a non-zero, rank one operator.  Kato does this in  \cite{TK} and shows that in this case there exist $a$ and $b$ with $ab = \pi/2$ such that $\pm f \in K_a$ and $\pm g \in K_b$ (the signs are correlated).  In fact $f(x) = c_1 + d_1 \tanh(\hat{a} (x - t_1)) $ and $g(x) = c_2 + d_2\tanh(\hat{b} (x - t_2)) $ where $c_j$, $d_j$ and $t_j$ are real, $d_1 d_2 > 0$, and (following Kato) $\hat{a} = \pi/2a$ and $\hat{b} = \pi/2b$, so that $\hat{a} = b$ and $\hat{b} = a$.  (Kato assumes that $f$ and $g$ are absolutely continuous with derivatives in $L^1(\R)$ but this is not necessary as we will see later.)  It is clear that from these functions, more pairs of functions with positive commutators can be constructed by convolution with a positive measure.  In fact as Kato shows, the family of $f$ of the form   

\begin{equation} \label{integral rep}
f(x) = \int_{\R} \tanh \hat{a} (x -t) d\nu(t) + c
\end{equation}
exhausts all of $K_a$, if $c$ is real and $\nu$ is a finite positive measure. 
These results led Kato to state (in \cite{TK}) \\

``In fact there is some reason to believe that these [$f\in K_a, g\in K_b$ with $ab\ge \pi/2$] are the only solutions to [ $ i[f(P), g(Q)] \ge 0 $]." \\

Note that since $K_a \supset K_c$ whenever $0 < a < c$, the meaning of Kato's statement just quoted, as well as the meaning of the statement of Theorem 1.1, are unchanged if the inequality $ab \ge \pi/2$ is replaced by the equality $ab = \pi/2$.\\

Kato also shows that if   $ i[f(P), g(Q)] =0$ and both $f$ and $g$ are absolutely continuous with $L^1(\R)$ derivatives then at least one of them is constant.  In much of this paper we will relax this assumption and only assume that $f$ and $g$ are bounded measurable real functions. This opens up another interesting possibility.  We will show that for $f$ and $g$ bounded, real, and measurable, 

\begin{thm} \label{commutes}
The commutator $ [f(P), g(Q)] =0$ if and only if either $f$ or $g$ is almost everywhere constant or both have periodic versions with periods $\tau_f$ and $\tau_g$ satisfying $\tau_f\tau_g = 2\pi$.
\end{thm}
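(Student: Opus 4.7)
My plan splits into two directions. For the easy direction: if $f$ or $g$ is a.e.\ constant, the corresponding operator is a scalar multiple of the identity and commutes with any bounded operator. If both are periodic with $\tau_f\tau_g=2\pi$, set $U=e^{2\pi iP/\tau_f}$ and $V=e^{2\pi iQ/\tau_g}$; a direct computation (using that $e^{isP}$ acts by translation by $s$ and $e^{itQ}$ by multiplication by $e^{itx}$) gives
\[
UV = e^{4\pi^{2}i/(\tau_{f}\tau_{g})}VU = e^{2\pi i}VU = VU.
\]
Since $f$ is $\tau_f$-periodic it factors as $f(\xi)=F(e^{2\pi i\xi/\tau_f})$ for a bounded Borel $F$ on $\T$, so $f(P)=F(U)$ by functional calculus, and similarly $g(Q)=G(V)$; these commute as functions of commuting unitaries.

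For the nontrivial direction, assume $[f(P),g(Q)]=0$. The Schwartz kernel of the commutator is the tempered distribution $T(x,y)=K(x-y)[g(y)-g(x)]$ on $\R^2$ with $K=\check f$. A direct distributional Fourier transform in both variables (justified by duality against Schwartz test functions, since $T$ is the kernel of a bounded operator) gives $\hat T(\xi,\eta)=\hat g(\xi+\eta)[f(\xi)-f(-\eta)]$. Vanishing of $T$ together with the change of variables $\alpha=\xi+\eta$, $\beta=\xi$ then produces the key identity
\[
\hat g(\alpha)\bigl[f(\beta)-f(\beta-\alpha)\bigr]=0 \quad \text{in } \mathcal{S}'(\R^{2}). \qquad (\ast)
\]

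Pairing $(\ast)$ with products $a(\alpha)b(\beta)$, $a\in C_c^{\infty}(\R)$, $b\in\mathcal{S}(\R)$, gives $\langle\hat g,aD_b\rangle=0$ where $D_b(\alpha):=\int b(\beta)[f(\beta)-f(\beta-\alpha)]\,d\beta$. Since $b\in\mathcal{S}$ and $f\in L^\infty$, differentiation under the integral yields $D_b\in C^\infty(\R)$ with $D_b(0)=0$; varying $a$ shows $\hat g\cdot D_b=0$ in $\mathcal{S}'(\R)$ for every $b$. If $\alpha_0\in\supp\hat g$ but $D_b(\alpha_0)\neq 0$, smoothness produces a neighborhood $\mathcal{U}$ of $\alpha_0$ on which $D_b$ is nonvanishing, and locally multiplying $\hat g\cdot D_b=0$ by the smooth invertible $1/D_b$ forces $\hat g|_{\mathcal{U}}=0$, contradicting $\alpha_0\in\supp\hat g$. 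Hence $D_b(\alpha_0)=0$ for every $b$, which by $L^1$-$L^\infty$ duality gives $f(\beta)=f(\beta-\alpha_0)$ a.e.\ in $\beta$: \emph{every element of $\supp\hat g$ is a period of $f$}. Conjugating the commutator by the Fourier transform (which sends $P\mapsto Q$, $Q\mapsto -P$ and exchanges the roles of $f$ and $g$) yields the symmetric statement that every element of $\supp\hat f$ is a period of $g$.

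If $g$ is not a.e.\ constant then $\supp\hat g$ contains a nonzero $\alpha_0$, making $f$ nontrivially periodic; symmetrically for $g$. When neither is a.e.\ constant, the closed subgroups of periods are of the form $\tau_f^{\min}\Z$ and $\tau_g^{\min}\Z$ with $\tau_f^{\min},\tau_g^{\min}>0$, and $\supp\hat g\subset\tau_f^{\min}\Z$, $\supp\hat f\subset\tau_g^{\min}\Z$. Minimality of $\tau_f^{\min}$ forces $\supp\hat f$ to generate the full lattice $(2\pi/\tau_f^{\min})\Z$, hence $(2\pi/\tau_f^{\min})\Z\subset\tau_g^{\min}\Z$, i.e.\ $2\pi/\tau_f^{\min}=N\tau_g^{\min}$ for some positive integer $N$. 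Taking $\tau_f:=N\tau_f^{\min}$ (still a period of $f$) and $\tau_g:=\tau_g^{\min}$ gives $\tau_f\tau_g=2\pi$, as required. The main technical obstacle is the rigorous handling of $(\ast)$ and of the localization argument converting it into the pointwise periodicity assertion, since $f,g\in L^\infty$ only, so their Fourier transforms are general tempered distributions and care is required with the product of distributions and Fourier transforms that underlies the whole computation.
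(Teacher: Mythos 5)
Your proposal is correct in outline and, in the nontrivial direction, follows a genuinely different route from the paper. The paper first mollifies $f$ and $g$, proves the key fact $\supp\hat g\subset P_f$ (the set of periods of $f$) by testing the commutator against $\phi_1,\phi_2$ with $\widehat{\phi_j}$ supported near $k_1,k_2$ and composing $f$ with a cutoff $F$ separating $f(k_1)$ from $f(k_2)$, then expands $\hat g$ into point masses on $\tau_f\Z$ to read off $\tau_g=2\pi/\tau_f$, and finally removes smoothness by a Gaussian convolution that preserves $\supp\hat f$ and $\supp\hat g$. You work with the $L^\infty$ functions directly and with a.e.\ periods, extract the same key fact (every point of $\supp\hat g$ is a period of $f$, and symmetrically after conjugation by $\mathcal F$) from a bilinear Fourier identity plus a local division argument, and close with a slightly different lattice argument; the arithmetic there ($2\pi/\tau_f^{\min}=N\tau_g^{\min}$, $\tau_f:=N\tau_f^{\min}$, $\tau_g:=\tau_g^{\min}$) is correct. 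Two points need tightening. First, your identity $(\ast)$, and likewise $\hat T(\xi,\eta)=\hat g(\xi+\eta)[f(\xi)-f(-\eta)]$, is not literally meaningful: it multiplies the tempered distribution $\hat g$ by a function that is merely bounded measurable. What is true, and is all your argument uses, is the pairing form: from $(\phi,[f(P),g(Q)]\psi)=0$, Plancherel and Fubini give (up to Fourier conventions)
\begin{equation*}
\Big\langle \hat g,\ \alpha\mapsto\int\overline{\hat\phi(\xi)}\,\hat\psi(\xi+\alpha)\,\big[f(\xi)-f(\xi+\alpha)\big]\,d\xi\Big\rangle=0,
\end{equation*}
and since the functions $(\alpha,\xi)\mapsto\overline{\hat\phi(\xi)}\hat\psi(\xi+\alpha)$ are pure tensors in the variables $(\xi,\xi+\alpha)$ and hence span a dense subspace of $\mathcal S(\R^2)$, while $W\mapsto\big\langle\hat g,\int W(\cdot,\xi)[f(\xi)-f(\xi+\cdot)]\,d\xi\big\rangle$ is Schwartz-continuous because $f\in L^\infty$, you do obtain $\langle\hat g,aD_b\rangle=0$ for all $a\in C_c^\infty$, $b\in\mathcal S$; this derivation should replace the formal kernel computation, after which your division/localization step is fine. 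Second, the claim that minimality of $\tau_f^{\min}$ forces $\supp\hat f$ to generate all of $(2\pi/\tau_f^{\min})\Z$ needs the converse lemma that a bounded function whose Fourier transform is supported in a lattice $\lambda\Z$ is $2\pi/\lambda$-periodic (derivatives of point masses must be excluded); the paper relies on the same lemma when it writes $\hat g=\sum_j c_j\delta_{n_j\tau_f}$, so this is a shared level of detail rather than a defect of your route. The easy direction is essentially the paper's, phrased through the Weyl relation. What your route buys is that it avoids the preliminary smoothing and the $F\circ f$ trick by working with a.e.\ periods throughout; what the paper's route buys is that each multiplication of $\hat g$ is by a compactly supported smooth cutoff, so no density-of-twisted-tensors argument is needed.
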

Actually the theorem is still true without the assumption that $f$ and $g$ are real.\\

There is a striking difference in the set of allowed $f$ and $g$ when  $C = i[f(P),g(Q)] \ge 0 $ and in addition we impose $C \ne 0$.  We will show

\begin{thm} \label{monotone}
Suppose $C= i[f(P),g(Q)] \ge 0 $ and in addition $C \ne 0$.  Then there are versions of $f$ and $g$ which are both monotone (either both increasing or both decreasing).
\end{thm}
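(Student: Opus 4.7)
My plan is to test positivity of $C$ against Gaussian coherent states of variable width and extract pointwise information on $f'$ and $g'$ in a semiclassical limit, from which monotonicity follows. First reduce to smooth $f, g$ by mollification: for positive Schwartz mollifiers $\rho_\delta, \rho_\epsilon$ on $\R$, set $f_\delta := f * \rho_\delta$ and $g_\epsilon := g * \rho_\epsilon$. Since $V_s := e^{isQ}$ satisfies $V_s f(P) V_s^{*} = f(P-s)$ and $U_t := e^{-itP}$ satisfies $U_t g(Q) U_t^{*} = g(Q-t)$, one verifies
\[
C_{\delta,\epsilon} := i[f_\delta(P), g_\epsilon(Q)] = \iint \rho_\delta(s)\rho_\epsilon(t)\, V_s U_t\, C\, U_t^{*} V_s^{*}\, ds\, dt \geq 0,
\]
a Bochner integral of unitary conjugates of $C$, hence trace class with $\operatorname{tr}(C_{\delta,\epsilon}) = \operatorname{tr}(C) > 0$. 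If we prove monotonicity (with a common orientation) for such smooth pairs, then passing $\delta, \epsilon \to 0^+$ and using that a.e.\ limits of monotone functions are monotone yields the conclusion for $f, g$.

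For the semiclassical step, fix the coherent state $\phi_\lambda^{q_0,p_0}(x) := (\lambda/\pi)^{1/4} e^{ip_0 x} e^{-\lambda(x-q_0)^2/2}$. Translation--modulation covariance gives $\langle\phi_\lambda^{q_0,p_0}, C_{\delta,\epsilon}\phi_\lambda^{q_0,p_0}\rangle = \langle\phi_\lambda^{0,0}, i[f_\delta(P + p_0), g_\epsilon(Q + q_0)]\phi_\lambda^{0,0}\rangle \geq 0$. A direct Fourier computation, using the commutator identity $[Q, f(P)] = i f'(P)$ iteratively to expose successive derivatives of $f_\delta$, expresses this expectation as an odd sum of products $(f_\delta^{(k)} * N_{1/\lambda})(p_0) \cdot (g_\epsilon^{(k)} * N_\lambda)(q_0)$, where $N_\sigma(t) := (\sigma/\pi)^{1/2}e^{-\sigma t^2}$. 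As $\lambda \to \infty$ the $x$-Gaussian concentrates so $g_\epsilon' * N_\lambda \to g_\epsilon'$ pointwise, while the $\xi$-Gaussian spreads, giving $(f_\delta^{(k)} * N_{1/\lambda})(p_0) \sim (\pi\lambda)^{-1/2}\int f_\delta^{(k)}$. Since $\int f_\delta' = \Delta_{f_\delta} := f_\delta(\infty) - f_\delta(-\infty)$ while $\int f_\delta^{(k)} = 0$ for $k \geq 2$ (under decay of $f_\delta^{(k-1)}$ at $\pm\infty$), only the $k = 1$ term contributes to leading order, yielding
\[
\sqrt{\pi\lambda}\, \langle\phi_\lambda^{q_0,p_0}, C_{\delta,\epsilon}\phi_\lambda^{q_0,p_0}\rangle \longrightarrow \Delta_{f_\delta}\cdot g_\epsilon'(q_0) \quad \text{as } \lambda \to \infty.
\]
Positivity forces $\Delta_{f_\delta}\cdot g_\epsilon'(q_0) \geq 0$ for every $q_0$; the dual limit $\lambda \to 0^+$ gives $\Delta_{g_\epsilon}\cdot f_\delta'(p_0) \geq 0$ for every $p_0$. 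Combining with the trace formula $\operatorname{tr}(C_{\delta,\epsilon}) = \Delta_{f_\delta}\Delta_{g_\epsilon}/(2\pi) > 0$ shows $\Delta_{f_\delta}$ and $\Delta_{g_\epsilon}$ have the same nonzero sign, so $f_\delta', g_\epsilon'$ are either both $\geq 0$ or both $\leq 0$: $f_\delta$ and $g_\epsilon$ are monotone in a common direction.

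The main obstacle I anticipate is making the semiclassical asymptotic above rigorous when the mollified derivatives fail to lie in $L^1$---for instance, if $f(p) = \sin p$, then $f_\delta' = c_\delta \cos$ is oscillatory with $\Delta_{f_\delta}$ undefined, and the asymptotic collapses to zero without useful information. Handling such pathological $f, g$ requires a separate a priori input---most plausibly derived from the trace-class property of $C$ emphasized by Kato---ruling out $C \geq 0$, $C \neq 0$ unless $f$ and $g$ have limits at $\pm\infty$ and their mollifications have sufficient derivative decay. With that input, the argument above delivers the conclusion in the setting encompassing all of Kato's examples and their natural approximations.
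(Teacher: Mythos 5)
Your overall strategy (test positivity against Gaussian states localized at a point $q_0$, extract a factorized leading term ``$(\text{regularized total variation of }f)\times g'(q_0)\ge 0$'', then mollify and pass to a.e.\ limits) is in fact the same skeleton as the paper's proof. But the step you yourself flag as the ``main obstacle'' is a genuine gap, and it is exactly the point the theorem is about: the hypothesis is only that $f,g$ are bounded and measurable, so the mollified $f_\delta$ is smooth with bounded derivatives but there is no reason for $f_\delta'$ to be integrable, for $\Delta_{f_\delta}=f_\delta(\infty)-f_\delta(-\infty)$ to exist, or for $\int f_\delta^{(k)}$ to vanish (or even converge) for $k\ge 2$; your semiclassical asymptotics, and the trace identity $\operatorname{tr}C_{\delta,\epsilon}=\Delta_{f_\delta}\Delta_{g_\epsilon}/2\pi$ you use to fix the common orientation, all presuppose these. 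The patch you propose --- an a priori input, ``most plausibly derived from the trace-class property of $C$'', ruling out $f,g$ without limits at $\pm\infty$ --- is not supplied, and it is not a routine fact: trace-classness of $C$ by itself does not control the behavior of $f$ at infinity, and in the paper that kind of conclusion (existence of $[f]$, $[g]$, integrability of derivatives of suitable versions) is only available \emph{after} monotonicity is known. So as it stands your argument proves the theorem only under extra decay/limit hypotheses that the statement does not grant, and the deferred lemma is essentially as hard as the theorem itself.

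The paper's proof shows how to avoid ever needing $\Delta_{f}$: it keeps the $f$-side factor regularized, $\beta(\sigma)=\int e^{-(\sigma\xi)^2}f'(\xi)\,d\xi$ (a finite distributional pairing for every $\sigma>0$, whatever the behavior of $f$ at infinity), proves the smeared diagonal identity $J_\sigma=\beta(\sigma)\int g'\psi_\sigma\,dy+O(\sigma^{2\theta})$ with $\beta(\sigma)$ independent of the test function, and then argues by contradiction: if $g'$ took both signs one could choose test functions making $\int g'|\phi_1|^2>0$ and $\int g'|\phi_2|^2<0$, and positivity of $J_\sigma$ would force $\beta(\sigma)=O(\sigma^{2\theta})$ from both sides, hence $J_\sigma=O(\sigma^{2\theta})$ for \emph{every} test function, hence $(\phi,C\phi)=0$ for all $\phi$ and $C=0$, contradicting $C\ne 0$. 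No limit of $\beta(\sigma)$ is ever taken before monotonicity is established, and the common orientation then follows because, once $f$ and $g$ are monotone, $\beta(\sigma)\to[f]$ exists by monotone convergence. If you replace your limit $\lambda\to\infty$ (which needs $\Delta_{f_\delta}$) by this fixed-$\sigma$ contradiction scheme, your coherent-state computation can be salvaged; without it, the proof does not cover the generality claimed.
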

For convenience let us formulate a conjecture which we shall call K (after Kato):
\begin{conjecture}
Suppose $C= i[f(P),g(Q)] \ge 0 $ and in addition $C \ne 0$.  Suppose both $f$ and $g$ are increasing.  Then there exist $a$ and $b$ with $ab=\pi/2$ such that $f\in K_a$ and $g\in K_b$.
\end{conjecture}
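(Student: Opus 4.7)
Given $C=i[f(P),g(Q)]\ge 0$ with $C\ne 0$ and $f,g$ both increasing (Theorem \ref{monotone} already supplies monotone versions), the target is the Kato representation
\begin{equation*}
f(x) = c_f + \int_{\R} \tanh(\hat a(x-t))\,d\nu(t),\qquad g(x) = c_g + \int_{\R} \tanh(\hat b(x-s))\,d\mu(s),
\end{equation*}
with $\nu,\mu$ positive finite Borel measures and $\hat a\hat b=\pi/2$, equivalently $ab=\pi/2$; by the characterization \eqref{integral rep} this places $f\in K_a$ and $g\in K_b$. I would first upgrade regularity: since $f,g$ are bounded and monotone their distributional derivatives $df,dg$ are positive finite Borel measures, and trace-class positivity of $C$ (noted by Kato) should force absolute continuity with $f',g'\in L^1(\R)$. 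The operator $i[f(P),g(Q)]$ is then represented by an integral kernel of the symbolic form $i(g(y)-g(x))\check f(x-y)$, which can be recast entirely in terms of the positive measures $f'(x)\,dx$ and $g'(y)\,dy$.

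\textbf{Core strategy.} Evaluating $\inp{\psi,C\psi}\ge 0$ on Schwartz $\psi$ and passing to the Fourier side converts operator positivity into a Bochner-type positivity statement coupling $\widehat{f'}$ and $\widehat{g'}$ through a universal kernel inherited from the canonical commutation relation $[P,Q]=-i$. Since $f'$ and $g'$ are positive finite measures, $\widehat{f'},\widehat{g'}$ are continuous positive definite functions, so the task reduces to extracting from the joint inequality the separate factorizations
\begin{equation*}
\widehat{f'}(\xi)=c_1\,\widehat{\mathrm{sech}^2(\hat a\,\cdot)}(\xi)\,\hat\nu(\xi),\qquad \widehat{g'}(\eta)=c_2\,\widehat{\mathrm{sech}^2(\hat b\,\cdot)}(\eta)\,\hat\mu(\eta)
\end{equation*}
with $\hat\nu,\hat\mu$ positive definite. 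Bochner's theorem then delivers $\nu,\mu$, and hence the integral representations above.

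\textbf{Main obstacle.} The heart of the conjecture is synthesizing $\hat a$ and $\hat b$ simultaneously from the single operator $C$ and proving the saturation $\hat a\hat b=\pi/2$. Unlike Kato's rank-one analysis, where $(\hat a,\hat b)$ is read off from one eigenfunction of $C$, one must now extract the strip widths from the joint positivity of an operator of possibly infinite rank. A natural route is to examine the asymptotic decay of $\widehat{f'}(\xi)$ and $\widehat{g'}(\eta)$, interpret it as a quantitative analytic-continuation statement, and derive $\hat a\hat b=\pi/2$ from a sharp Cauchy--Schwarz inequality saturated precisely because $C\ne 0$. The fundamental difficulty---and presumably the reason the conjecture has remained open---is this separation of variables: the positivity of $C$ entangles $f$ and $g$ through the non-commutativity of $P$ and $Q$, and isolating independent strip-analytic factors with matching widths is exactly what Kato could not verify.
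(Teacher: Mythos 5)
This statement is Conjecture K, which the paper explicitly does not prove (``Unfortunately we are far from proving K''), and your proposal does not prove it either: it is a plan whose decisive steps are left as acknowledged obstacles, so there is a genuine gap rather than a variant argument. Concretely, two things are missing. First, your regularity step asserts that trace-class positivity of $C$ ``should force'' $f,g$ to be absolutely continuous with $f',g'\in L^1(\R)$; nothing in the paper (or in your sketch) yields this in general. The paper only obtains continuity, strict monotonicity, and absolute continuity of the \emph{inverse} functions $f^{-1},g^{-1}$ (via Putnam's theorem), and it proves smoothness of $f$ and $g$ only under the additional hypothesis that $C$ has finite rank. So the very starting point of your Fourier-side reformulation, treating $f'\,dx$ and $g'\,dy$ as absolutely continuous measures, is unjustified.

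Second, and more fundamentally, the ``core strategy'' is an assertion, not an argument: you posit that positivity of $C$ forces the factorizations $\widehat{f'}=c_1\,\widehat{\mathrm{sech}^2(\hat a\,\cdot)}\,\hat\nu$ and $\widehat{g'}=c_2\,\widehat{\mathrm{sech}^2(\hat b\,\cdot)}\,\hat\mu$ with $\hat a\hat b=\pi/2$, but no mechanism is offered for extracting the two strip widths from the single operator inequality, and you yourself label this the ``main obstacle.'' That step \emph{is} the conjecture. For comparison, the paper's strongest partial results in this direction are Theorem \ref{half}, which assumes $g\in K_b$ together with an exponential-moment condition on its representing measure and deduces $f\in K_{\hat b}$ by a Douglas range-inclusion argument combined with Hardy-space boundary behavior (Lindel\"of's theorem), and the $2\times 2$ positivity analysis of Theorem \ref{2x2}, which yields the differential inequality $-\psi''+\hat a^2\psi\ge 0$ for $\psi=(G')^{-1/2}$ -- sharp in the rank-one case but far from implying the representation (\ref{integral rep}). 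Neither of these is recovered or replaced by your outline, so the proposal should be regarded as a restatement of the problem rather than a proof.
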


Unfortunately we are far from proving K.  But in the rest of this paper we will give several results which illuminate the properties of the set of $f$ and $g$ for which the commutator is positive.  In addition to the theorems of this Introduction and their proofs, see in particular the section on $2 \times 2$ positivity which gives inequalities which $f$ and $g$ must satisfy under a mild assumption.  

\begin{thm}
Suppose $C= i[f(P),g(Q)] \ge 0 $ and in addition $C \ne 0$.    Suppose both $f$ and $g$ are increasing.  Then both $f$ and $g$ have continuous versions which are  strictly increasing.  Taking $f$ and $g$ to be continuous, their inverse functions are absolutely continuous.
\end{thm}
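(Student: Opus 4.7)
The plan is to use the $2\times 2$ positivity developed in the later section of the paper. By Theorem~\ref{monotone} we may assume $f$ and $g$ are both increasing (after replacing them by a.e.-equal monotone versions). The three conclusions to establish are (a) the existence of continuous versions, (b) strict increase of those versions, and (c) absolute continuity of their inverses.

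The central step would be to extract from $C\ge 0$ an inequality of the shape
\[
(f(p_2)-f(p_1))(g(x_2)-g(x_1)) \ge \Psi(p_1,p_2;x_1,x_2)
\]
valid for all $p_1<p_2$ and $x_1<x_2$, with $\Psi>0$ in an appropriate regime and satisfying a lower bound roughly of the form $c(p_2-p_1)(x_2-x_1)$ on bounded subregions of phase space. This is what the $2\times 2$ positivity should provide: for any $\psi_1,\psi_2\in L^2(\R)$ the matrix $M_{ij}=\langle\psi_i,C\psi_j\rangle$ is positive semidefinite, and choosing $\psi_i$ localized in phase space near $(x_i,p_i)$ (for example coherent states, or modulated indicator functions of small momentum intervals) arranges that the diagonal entries $M_{ii}$ reflect increments of $f$ and $g$ on the relevant intervals, while the off-diagonal is controlled by the Cauchy--Schwarz-type condition $|M_{12}|^2\le M_{11}M_{22}$.

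Once such an inequality is in hand, two of the three conclusions are direct. For (b), a flat interval $[p_1,p_2]$ of $f$ forces the left side to vanish while the right side is strictly positive (for some nontrivial $[x_1,x_2]$, which exists because $g$ cannot be essentially constant when $C\ne 0$); this contradiction shows strict increase of both $f$ and $g$. For (c), dividing by $(p_2-p_1)(x_2-x_1)$ and applying the Lebesgue differentiation theorem yields $f'(p)g'(x)\ge\tilde\Psi(p,x)>0$ at joint Lebesgue points of $f'$ and $g'$ on bounded regions; since $g'$ must be strictly positive on a set of positive Lebesgue measure, this forces $f'(p)>0$ for a.e.\ $p$. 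A continuous, strictly increasing $f$ with $f'>0$ a.e.\ has absolutely continuous inverse (equivalently, $f$ sends positive-measure sets to positive-measure sets), which gives (c).

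The main obstacle I anticipate is (a), the absence of jumps. The inequality above is a lower bound and does not by itself preclude a jump at $p_0$: a jump contributes a large positive term to $f(p_2)-f(p_1)$ whenever $p_1<p_0<p_2$, which is compatible with rather than contradictory to a lower bound of order $(p_2-p_1)(x_2-x_1)$. To rule out jumps I would either try to sharpen the $2\times 2$ positivity into a finer quantitative inequality, or pursue the spectral-theoretic observation that a jump of size $\lambda>0$ at $p_0$ produces a spectral gap $(f(p_0^-),f(p_0^+))$ in $f(P)$; writing $f(P)=f_c(P)+\lambda E_{[p_0,\infty)}(P)$ isolates a Hankel-type contribution $\lambda\,i[E_{[p_0,\infty)}(P),g(Q)]$, and one would then try to show that this contribution is incompatible with $C\ge 0$ (and $C\ne 0$) unless $\lambda=0$. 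This last step is where I expect the real work to lie.
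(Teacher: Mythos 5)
You have correctly isolated where the difficulty lies, but the two load-bearing steps of your plan are exactly the ones that are not supplied. First, the inequality $(f(p_2)-f(p_1))(g(x_2)-g(x_1))\ge c\,(p_2-p_1)(x_2-x_1)$ is asserted, not derived, and it is doubtful that $2\times 2$ positivity yields it in this form: the kernel of $C$ is $K(x,y)=(2\pi)^{-1/2}\frac{g(x)-g(y)}{x-y}\widehat{df}(y-x)$, whose diagonal is $g'(x)[f]/2\pi$ and whose $2\times2$ minors involve $f$ only through the global object $\widehat{df}(y-x)$, not through increments of $f$ over momentum intervals; coherent-state matrix elements likewise do not factor into an $f$-increment times a $g$-increment with a locally uniform constant. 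In the paper, quantitative pointwise lower bounds of this flavor (Theorem \ref{2x2}) are obtained only under an extra hypothesis (a comparison function of finite variance) and only after mollification, precisely because pointwise $2\times2$ positivity of $K$ presupposes continuity of the kernel --- part of what is to be proved here. Second, you concede that (a), the absence of jumps, is open in your approach; since your arguments for (b) and (c) are both routed through the unproved inequality, none of the three conclusions is actually established. (Your real-analysis endgame for (c) --- continuous, strictly increasing, $f'>0$ a.e.\ implies absolutely continuous inverse --- is a correct fact, but its hypothesis is not available.)

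The missing ingredient is the observation, noted already in the introduction, that a positive commutator $C=i[f(P),g(Q)]$ is trace class with square-integrable kernel and $\tr C=[f][g]/2\pi$ (proved by mollifying $g$, using translation invariance of the trace, and Fatou). This rules out jumps cheaply: if $g$ jumped by $k'>0$ at a point, then on the corner region where $x<-\epsilon$, $y>\epsilon$ and $y-x$ is small one has $g(y)-g(x)\ge k'$ and $|\widehat{df}(y-x)|\ge\delta>0$ (since $\widehat{df}$ is continuous and $\widehat{df}(0)=[f]/\sqrt{2\pi}\ne0$), so $|K(x,y)|^2\ge c\,(y-x)^{-2}$, which is not integrable there, contradicting $K\in L^2(\R^2)$. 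Strict monotonicity then needs no quantitative inequality: if $g$ were constant on an interval $I$, then $(\psi,C\psi)=0$ for all $\psi\in C_0^\infty(I)$, hence $C\psi=0$ by positivity, and the kernel identity forces $g$ to be constant on $I+(-c,c)$ where $\widehat{df}\ne0$; induction propagates constancy to all of $\R$, contradicting $C\ne0$. Finally, absolute continuity of $g^{-1}$ comes from Putnam's theorem: $\mathcal H_{ac}(g(Q))\supset e^{iaQ}\,\Ran C$ for every real $a$, and these translates are total in $L^2(\R)$, so $g(Q)$ has purely absolutely continuous spectrum; for a continuous strictly increasing $g$ this says exactly that $g^{-1}$ maps Lebesgue-null sets to null sets, i.e.\ is absolutely continuous. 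By the symmetry $\mathcal F^{-1}C_0\, i[f(P),g(Q)]\, C_0^{-1}\mathcal F=i[g(P),f(Q)]$ the same conclusions hold for $f$. Your fallback idea of analyzing the Hankel-type piece $\lambda\, i[E_{[p_0,\infty)}(P),g(Q)]$ is where you expect the real work to lie; the trace-class/Hilbert--Schmidt route replaces that work entirely.
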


A hint that K might be true is the following result:

\begin{thm} \label{half}
Suppose $g$ is non-constant, lies in $K_b$, and has the integral representation 
\begin{equation}\label{tom}
g(x) = \int_{\R} \tanh \hat{b} (x -t) d\mu(t) + d
\end{equation}
where $d$ is real and $\mu$ is a finite positive measure such that $$\int |t|e^{2\hat{b}t} d\mu(t)< \infty$$  or $$\int |t|e^{-2\hat{b}t} d\mu(t)< \infty.$$  If $f \in L^{\infty}(\R)$ and $i[f(P),g(Q)] \ge 0 $ then $f \in K_{\hat b}$.
\end{thm}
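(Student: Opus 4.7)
The strategy is to work in the momentum representation, recast the positivity of $C=i[f(P),g(Q)]$ as positive-definiteness of a concrete integral kernel, and use the moment hypothesis to separate out a positivity statement for the Kato-type kernel that characterizes $K_{\hat b}$ via Loewner's theorem on the strip. By translation covariance $e^{-itP}C_0 e^{itP} = i[f(P),\tanh\hat b(Q-t)]$ with $C_0 := i[f(P),\tanh(\hat b\,Q)]$, the integral representation (\ref{tom}) yields
\begin{equation*}
C \;=\; \int e^{-itP}\,C_0\,e^{itP}\,d\mu(t).
\end{equation*}
In the momentum representation this reads as a Schur (Hadamard) product of integral kernels,
\begin{equation*}
K_C(\xi,\eta) \;=\; \hat\mu(\xi-\eta)\cdot K_0(\xi,\eta),
\end{equation*}
where $\hat\mu(\xi-\eta)$ is positive-definite by Bochner and
\begin{equation*}
K_0(\xi,\eta) \;=\; \frac{\pi}{\hat b}\cdot\frac{f(\xi)-f(\eta)}{\sinh(\pi(\xi-\eta)/(2\hat b))}
\end{equation*}
is the kernel of $C_0$, obtained using the distributional Fourier transform $\tilde T(\zeta) = -i\pi/(\hat b\sinh(\pi\zeta/(2\hat b)))$ of $\tanh(\hat b\,\cdot)$ (a direct contour calculation). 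Loewner's theorem, transported to the strip via the conformal map $w = e^{\pi z/\hat b}$ from $\{0<\Im z<\hat b\}$ onto the upper half-plane, shows that positive-definiteness of $K_0$ is equivalent to $f\in K_{\hat b}$. The problem therefore reduces to promoting $K_C\ge 0$ to $K_0\ge 0$ --- i.e.\ to ``removing'' the Schur factor $\hat\mu$.

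The moment hypothesis enters here. Assuming $\int|t|e^{2\hat b t}d\mu(t)<\infty$ (the other case is symmetric via the lower strip), $\hat\mu(\zeta)$ extends analytically and $C^{1}$-continuously to the closed strip $\{0\le\Im\zeta\le 2\hat b\}$, with $\hat\mu(2i\hat b)=\int e^{2\hat b t}d\mu(t)>0$, and the first pole of $\tilde T$ above $\R$ is exactly at $\zeta = 2i\hat b$ with residue $2i$. I would test the positivity $\langle\psi,C\psi\rangle\ge 0$ against a dense family of wave packets $\psi$ whose momentum transforms $\hat\psi$ lie in a suitable Hardy class with rapid strip decay, and deform the $\eta$-contour downward from $\R$ so that $\xi-\eta$ sweeps into the upper strip, picking up (via a Plemelj/residue limiting argument) a boundary contribution at $\Im\eta = -2\hat b$. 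The contribution splits into (i) a residue piece proportional to $\hat\mu(2i\hat b)\cdot\langle\tilde\psi,C_0\tilde\psi\rangle$ for an explicit related vector $\tilde\psi$, and (ii) a remainder integral along the shifted contour rendered absolutely convergent by the $|t|$-weight in the moment hypothesis. Since the prefactor $\hat\mu(2i\hat b)$ is strictly positive, a density argument as $\psi$ ranges over the admissible test class then yields $C_0\ge 0$, and hence $f\in K_{\hat b}$.

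The principal obstacle is making the contour-deformation argument rigorous given only $f\in L^\infty(\R)$, so that $K_0$ is \emph{a priori} a distribution (and $\tilde T$ itself has a principal-value singularity at the origin). One must carefully choose smoothed test vectors so that the shifted integrals converge absolutely; the sharpness of the hypothesis --- the exponent $2\hat b$ matching the distance from $\R$ to the nearest pole of $\tilde T$, and the weight $|t|$ supplying exactly one derivative of $\hat\mu$ on the boundary line --- is precisely what is needed to bound the shifted contour integral. A secondary point is a density lemma: one must verify that the family of vectors $\tilde\psi$ arising from the admissible test class is dense enough in $L^2$ that $C_0\ge 0$ on this subfamily upgrades to $C_0\ge 0$ in general. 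If the contour-shift fails at a technical level, a reasonable fallback is to exploit additional cancellations from the explicit form of the remainder integral, or to replace the sharp shift to $\Im\eta = -2\hat b$ by a family of shifts $\Im\eta = -(2\hat b - \varepsilon)$ and take a careful limit $\varepsilon\to 0^+$.
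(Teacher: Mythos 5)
Your setup is fine: the covariance identity $C=\int e^{-itP}C_0e^{itP}\,d\mu(t)$ with $C_0=i[f(P),\tanh(\hat b\,Q)]$, the momentum-space kernel of $C$ as the Schur product $\hat\mu(\xi-\eta)\,K_0(\xi,\eta)$, and the observation that positivity of $C_0$ would give $f\in K_{\hat b}$ by Loewner's theorem transported to the strip via $w=e^{\pi z/\hat b}$ (modulo the regularity needed to pass between operator positivity of a principal-value kernel and matrix positive-definiteness of $\bigl(f(\xi)-f(\eta)\bigr)/\sinh(\pi(\xi-\eta)/2\hat b)$ when $f$ is only in $L^\infty$). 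But the entire content of the theorem is the step you leave unproved: deducing $C_0\ge 0$ from $C\ge 0$. A positive-definite Schur factor cannot be divided out in general, so everything rests on your contour argument, and as sketched it does not work. Since $f$ is merely bounded measurable, the integrand is not analytic in the variable you shift wherever $f$ depends on that variable; you are forced to split the kernel into its $f(\xi)$ and $f(\eta)$ parts and shift different variables, and the boundary/residue contributions you then collect are of the form $\hat\mu(2i\hat b)\,\mathrm{Re}\int f(\xi)\,\overline{\hat\psi(\xi)}\,\hat\psi(\xi-2i\hat b)\,d\xi$ --- linear functionals of $f$ --- not the quadratic form $\hat\mu(2i\hat b)\,\langle\tilde\psi,C_0\tilde\psi\rangle$ you assert. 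More seriously, even granting absolute convergence of everything, a decomposition of $\langle\psi,C\psi\rangle$ into ``residue piece plus shifted-contour remainder'' transfers no positivity: the remainder carries no sign and you give no mechanism (a scaling family of test vectors, an identified cancellation, a limit in which it is negligible relative to the residue piece) by which $\langle\psi,C\psi\rangle\ge0$ would force the residue piece to be nonnegative. This is a missing idea, not a technical loose end, and it is exactly where the exponential-moment hypothesis has to earn its keep.

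For comparison, the paper's proof takes a different and essentially functional-analytic route. One first smooths: $f_r=f*u_r$ with a Fej\'er kernel is increasing (by Theorem \ref{monotone}) with $\widehat{f_r'}$ compactly supported, which permits writing $f_r'=\phi'*w_r$, $\phi(x)=\tanh bx$, with $w_r\in L^1(\R)\cap L^\infty(\R)$ continuous; the problem becomes showing $w_r\ge0$. Writing $w=w_1-w_2$, positivity of the commutator is recast as $\|A_2\psi\|\le\|A_1\psi\|$ for explicit windowed-Fourier-type operators into $L^2(\nu_j\times\mu)$, so by Douglas's lemma $\mathrm{Ran}\,A_2^*\subset\mathrm{Ran}\,A_1^*$. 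The hypothesis $\int|t|e^{2\hat bt}\,d\mu(t)<\infty$ is used precisely to show that $e^{\hat bx}(A_j^*h)(x)$ splits into pieces that are continuous and vanish at infinity plus a piece in the Hardy space $H^2_-$; taking $h_2=1_F$ produces a jump discontinuity that the $H^2_-$ difference cannot absorb, by Lindel\"of's theorem, forcing $w_r\ge0$. A weak-$*$ limit of the measures $w_{r_n}(s)\,ds$ on $[-\infty,\infty]$ then yields the $\tanh$-representation of $f$, i.e.\ $f\in K_{\hat b}$. Your proposal has no analogue of either the smoothing reduction or the Douglas/Hardy-space/Lindel\"of step, so as it stands the proof has a genuine gap at its central point.
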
 
There is a reason that a condition such as this assumption of exponential decay is required.  Just because we have assumed $g \in K_b$ does not mean that $g$ is not in a smaller class  $K_{e}$  with
 $e>b$.  Suppose $c$ is the largest number $e$ such that $g \in K_e$.  Then $\hat{c} = \pi/2 c < \hat{b}$.  We could not be able to prove that $f \in K_{\hat{b}}$ when only $f \in K_{\hat{c}}$ is required.  To see that our exponential decay assumption eliminates this possibility, note that if $b$ and $c$ are as above, then besides (\ref{tom}) there is another representation of $g$ of the form
 $$ g(x ) =  \int_{\R} \tanh \hat {c}(x-t) d\mu_1(t) + d_1$$ where $\mu_1$ is a finite positive measure and $d_1$ is a real constant.  We have an explicit formula for the imaginary part of $g$ on the line $z = x + ib$,
  $$\text{Im} g(x+ ib)= \int_{\R} \sin( b\pi/c)(\cosh(2\hat{c}(x-t)) + \cos(b\pi/c))^{-1}d\mu_1(t) \ge k (\cosh 2\hat{c}x)^{-1}$$ for some positive $k$.
 If we note that (see \cite{TK}) $d\mu(t) = (2\pi)^{-1}\text{Im} g(t+ ib)dt$, it follows that $ \int |t|e^{\pm 2\hat{b}t} d\mu(t) = \infty$.  \\
 
The analyticity requirements of the conjecture K remind us of a theorem of Loewner \cite{Loe}:

\begin{thm}
The real measurable function $g$ defined on the interval $(a,b)$ has the property that for any two self-adjoint operators $A$ and $B$ with spectrum in $(a,b)$ and satisfying $A\ge B$ we have $g(A) \ge g(B)$ if and only if $g$ has an analytic continuation to $\{\text{Im}z \ne 0\}\cup (a,b)$ satisfying $\text{Im}g(z)\text{Im}z \ge 0$.
\end{thm}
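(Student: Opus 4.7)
For \emph{sufficiency} (analytic continuation implies operator monotonicity), if $g$ has the stated analytic continuation with $\mathrm{Im}\,g(z)\,\mathrm{Im}\,z \ge 0$, then $g$ is a Pick (Nevanlinna) function on the upper half-plane whose boundary values are real on $(a,b)$. I would invoke the Nevanlinna integral representation to write
$$g(z) = \alpha + \beta z + \int_{\R\setminus(a,b)} \left(\frac{1}{t-z} - \frac{t}{1+t^2}\right) d\mu(t),$$
where $\alpha \in \R$, $\beta \ge 0$, and $\mu$ is a positive Borel measure supported outside $(a,b)$ (the support restriction coming from the fact that $g$ extends analytically across $(a,b)$, so $\mathrm{Im}\,g$ vanishes there). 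Operator monotonicity of $g$ then follows by integrating over $t$: the linear part $\beta z$ is trivially operator monotone, and for each $t \notin (a,b)$ the resolvent $z \mapsto 1/(t-z)$ is operator monotone on $(a,b)$. The latter is immediate: if $A \ge B$ have spectra in $(a,b)$, then $t-A$ and $t-B$ have the same definite sign and $t-A \le t-B$, so the standard inequality $0 < S \le T \Rightarrow T^{-1} \le S^{-1}$ gives $(t-B)^{-1} \le (t-A)^{-1}$.

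For \emph{necessity}, the classical route runs through Loewner matrices. First, by applying operator monotonicity to diagonal $2\times 2$ matrices perturbed by a positive rank-one matrix, deduce that $g$ is $C^1$ on $(a,b)$. Extending this argument using $n \times n$ matrices and taking appropriate derivatives in the rank-one perturbation, show that for every $n$ and every choice of distinct points $\lambda_1,\ldots,\lambda_n \in (a,b)$, the Loewner matrix
$$L_{ij} = \frac{g(\lambda_i) - g(\lambda_j)}{\lambda_i - \lambda_j} \quad (i \ne j), \quad L_{ii} = g'(\lambda_i)$$
is positive semi-definite. The classical Loewner-Pick characterization then says that a real function on $(a,b)$ with all such Loewner matrices positive semi-definite is the restriction of a Pick function on the upper half-plane, and the reflection principle delivers the full analytic continuation to $\{\mathrm{Im}\,z \ne 0\}\cup(a,b)$ with the required sign of $\mathrm{Im}\,g(z)\,\mathrm{Im}\,z$.

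The main obstacle is the last step of the necessity direction: distilling positive-semi-definiteness of all Loewner matrices (a discrete, finite-dimensional condition repeated countably many times) into a single holomorphic object with nonnegative imaginary part. The standard route is via Pick interpolation --- each Loewner matrix encodes an interpolation problem whose solvability is equivalent to that matrix being positive semi-definite, and a normal-families argument on the Pick class extracts a limiting analytic function. Verifying that the extracted function agrees with $g$ on $(a,b)$ and controlling behavior near the endpoints $a$ and $b$ requires further care. The sufficiency direction, by contrast, is essentially bookkeeping once the Nevanlinna representation is at hand.
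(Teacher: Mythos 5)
The paper does not actually prove this statement: it is Loewner's 1934 theorem, quoted with a citation as background motivation for the conjecture, so there is no proof of the paper's own to compare yours against. Judged on its own terms, your sufficiency direction is essentially right and complete in outline: the Nevanlinna representation with $\mu$ supported off $(a,b)$ (by Stieltjes inversion, since $\mathrm{Im}\,g$ vanishes on $(a,b)$), plus operator monotonicity of $z\mapsto (t-z)^{-1}$ for $t\notin(a,b)$, does the job. Two small points deserve a line each: for $t\le a$ both $t-A$ and $t-B$ are negative, so you need the inversion inequality for negative operators (apply the positive case to $-(t-A)\ge -(t-B)>0$), and you should note why the operator inequality survives integration in $t$ (uniform boundedness of the integrand on the spectra, so the integral converges in norm and positivity is preserved).

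The necessity direction, however, is a roadmap rather than a proof, and the gaps are exactly at the hard points. You assert, without argument, (i) that operator monotonicity forces $g\in C^1$ on $(a,b)$, (ii) that all Loewner matrices $L_{ij}$ are positive semi-definite, and (iii) that positivity of all Loewner matrices yields a Pick extension. Step (i) is itself a nontrivial regularity theorem (the usual proofs go through mollification and the closedness of the matrix-monotone classes, or through careful $2\times2$ and $3\times3$ matrix arguments); step (ii) requires a genuine computation differentiating $g(B+tE)$ along a rank-one direction and is where the divided-difference structure actually enters; and step (iii) is the analytic heart of Loewner's theorem --- passing from countably many finite-dimensional positivity conditions at real nodes to a single holomorphic function with $\mathrm{Im}\,g\ge 0$ in the upper half-plane. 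Your appeal to ``the classical Loewner--Pick characterization'' at that point essentially cites the theorem you are trying to prove; the Pick-interpolation/normal-families route you gesture at also needs the nodes moved off the real axis (Pick's theorem interpolates at interior points of the half-plane), an argument that the limit function is independent of the exhaustion and agrees with $g$ on $(a,b)$, and control near the endpoints --- all of which you explicitly leave open. So the proposal should be regarded as a correct plan for the standard proof, with the decisive steps of the necessity half not yet carried out.
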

The function $g$ is called operator monotone.\\
We can formulate the positivity of the commutator $ i[f(P),g(Q)]$ in a way that makes a connection with Loewner's theorem in the following way, at least formally (since we have not shown that $f$ is absolutely continuous for example).  Since
$$d/dt \big(e^{itf(P)} g(Q) e^{-itf(P)}\big) = e^{itf(P)} i[f(P),g(Q)] e^{-itf(P)} ,$$
if the commutator is positive and non-zero we have 
$$g(e^{itf(P)}Qe^{-itf(P)}) = g(Q + tf'(P)) \ge g(Q)$$ for positive $t$.  As we have seen we can assume that $f$ (and $g$) are increasing so that at least formally $Q +t f'(P) \ge Q $ for positive $t$.  Thus with $A = Q + tf'(P)$ and $B=Q$ we have $A\ge B$ for $t$ positive while 
$$g(A) \ge g(B) \ \text {for all such} \ t$$ 
is the same as the positivity of the commutator. \\

In the following sections we prove theorems 1.2, 1.3, 1.5, and 1.6 and add some further information about this fascinating problem.

If $C$ is an operator and $C \ge 0$ we will say that $C$ is positive, although perhaps non-negative would be more accurate.  Similarly we sometimes call a non-decreasing function an increasing function.  For the inner product of two vectors $h$ and $k$ in Hilbert space we write $(h,k)$, linear in $k$ and conjugate-linear in $h$.  The Fourier transform of a function $h$ on $\R$ is denoted by $\hat h$ and includes the factor $(2\pi)^{-1/2}$ while $\check h$ denotes the inverse Fourier transform.  

\section{Acknowledgement}
We are grateful to Brian Hall for many useful conversations about this problem.

\section{Finite rank commutators: $[i\tanh \alpha P, \tanh \beta Q]$}\label{finiterankC}

In this section we consider the commutator $[if(P), g(Q)]$ where $f$ and $g$ are the basic functions from which all functions in the Kato classes $K_a$ are constructed.  We see that 
$$[i\tanh \alpha P, \tanh \beta Q] = 4i[(1+ e^{2\alpha P})^{-1}, (1 + e^{2\beta Q})^{-1})]$$
so that the positivity of the commutator with $\tanh$ is just a statement about the positivity of the commutator of resolvents of the exponential function.  For that reason it is interesting to note the easily verified fact that for $\alpha$ and $\beta$ real and $(2\alpha)(2 \beta) = 2n\pi$ with $n \in \Z$
$$e^{2i\alpha P}e^{2i\beta Q}e^{-2i\alpha P} = e^{2i\beta (Q +2\alpha) } = e^{2i\beta Q} $$
and thus
$$ [(\lambda + e^{2i\alpha P})^{-1}, (\lambda + e^{2i\beta Q})^{-1})] = 0.$$
for $|\lambda| \ne1$.  A formal calculation with unbounded operators would yield 
$$e^{2\alpha P}e^{2\beta Q}e^{-2\alpha P} = e^{2\beta (Q -2i\alpha) } = e^{2\beta Q} $$
thus leading to
$$ [(1+ e^{2\alpha P})^{-1}, (1 + e^{2\beta Q})^{-1})] = 0$$
if $(2\alpha)( 2\beta) = 2n\pi$.  But this is incorrect as we see from the next proposition.  

\begin{proposition}  \label{tanh commutator}
The integral kernel of the commutator $ i[\tanh \alpha P, \tanh \beta Q] =4i[(1 + e^{2\alpha P})^{-1}, (1 + e^{2\beta Q})^{-1}]$, for $\alpha$ and $\beta$ real and  $(2\alpha)(2\beta) = 2\pi n > 0$,  is given by 

\begin{equation}
(\beta/n\pi)\sum_{k = 0}^{n-1}\psi_k(x) \phi_k(y)
\end{equation}
where $$\psi_k(x) = (\cosh \beta x)^{-1}e^{(n-1 - 2k)\beta x/n}$$ and  $$\phi_k(x) = (\cosh \beta x)^{-1}e^{-(n-1 - 2k)\beta x/n}$$ 
\end{proposition}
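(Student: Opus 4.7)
The plan is to compute the distributional integral kernel $K(x-y)$ of $\tanh\alpha P$ explicitly, apply the general formula that the commutator of a convolution-type operator with kernel $K(x-y)$ and a multiplication operator $M_g$ has integral kernel $K(x-y)(g(y)-g(x))$, and then use the arithmetic hypothesis $(2\alpha)(2\beta)=2\pi n$ together with the elementary telescoping identity for $\sinh(nu)/\sinh u$ to reveal the rank-$n$ structure.

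The first step is to establish that $\tanh\alpha P$ has (principal-value) integral kernel
$$K(x-y) \;=\; \frac{i}{2\alpha\,\sinh\bigl(\pi(x-y)/(2\alpha)\bigr)}.$$
This comes from the distributional Fourier transform $\int e^{-i\xi x}\tanh(\alpha x)\,dx = -i\pi\alpha^{-1}\mathrm{csch}(\pi\xi/(2\alpha))$, which can be obtained by contour integration over the poles of $\tanh(\alpha z)$ at $z = i\pi(2k+1)/(2\alpha)$. To sidestep the distributional subtleties one can instead write $\tanh\alpha P = 1 - 2(1+e^{2\alpha P})^{-1}$, compute the kernel of the bounded resolvent $(1+e^{2\alpha P})^{-1}$ from an absolutely convergent geometric expansion, and reassemble.

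The second step is the commutator computation. Using the elementary identity $\tanh\beta y - \tanh\beta x = -\sinh\beta(x-y)/(\cosh\beta x\cosh\beta y)$ and the hypothesis $(2\alpha)(2\beta)=2\pi n$ (which gives $\pi/(2\alpha)=\beta/n$ and $1/(2\alpha)=\beta/(n\pi)$), the commutator kernel becomes
$$-iK(x-y)\,\frac{\sinh\beta(x-y)}{\cosh\beta x\,\cosh\beta y} \;=\; \frac{\beta}{n\pi}\cdot\frac{\sinh\beta(x-y)}{\sinh\bigl(\beta(x-y)/n\bigr)\,\cosh\beta x\,\cosh\beta y}.$$
The apparent singularity of $K$ at $x=y$ is cancelled by the linear zero of $\sinh\beta(x-y)$ there, so the result is a genuine (in fact real-analytic) kernel. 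Now apply the identity
$$\frac{\sinh(nu)}{\sinh u} \;=\; \sum_{k=0}^{n-1} e^{(n-1-2k)u},$$
which follows at once from $e^{nu}-e^{-nu} = (e^u-e^{-u})\sum_{k=0}^{n-1}e^{(n-1-2k)u}$, with $u=\beta(x-y)/n$. Splitting each exponential as $e^{(n-1-2k)\beta(x-y)/n}= e^{(n-1-2k)\beta x/n}\cdot e^{-(n-1-2k)\beta y/n}$ and distributing the factor $1/(\cosh\beta x\cosh\beta y)$ produces exactly $\psi_k(x)\phi_k(y)$, yielding the stated finite-rank formula.

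The main technical obstacle is step one: the kernel $K$ is only a principal-value distribution, and its pointwise product with the vanishing factor $\tanh\beta y-\tanh\beta x$ must be interpreted correctly. The resolvent detour avoids any distributional gymnastics, since $[(1+e^{2\alpha P})^{-1},(1+e^{2\beta Q})^{-1}]$ can be analysed with bona fide $L^2$ kernels and then converted to the $\tanh$ commutator via the algebraic identity $4i[(1+e^{2\alpha P})^{-1},(1+e^{2\beta Q})^{-1}] = i[\tanh\alpha P,\tanh\beta Q]$ already noted at the start of the section. Everything that follows step one is symbolic manipulation, and the rank-$n$ conclusion drops out automatically from the geometric-series identity.
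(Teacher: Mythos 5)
Your proposal is correct and follows essentially the same route as the paper's proof: identify the convolution kernel of $\tanh \alpha P$ by Fourier transform, use $\tanh \beta x - \tanh \beta y = \sinh \beta(x-y)(\cosh\beta x\cosh\beta y)^{-1}$, and factor the resulting kernel through the identity $\sinh(nu)/\sinh u = \sum_{k=0}^{n-1}e^{(n-1-2k)u}$ with $u=\beta(x-y)/n$. The only cosmetic difference is that the paper avoids the principal-value subtlety by invoking its kernel lemma, which expresses the kernel through $\widehat{f'}$ (the transform of the genuine integrable function $\alpha\cosh^{-2}(\alpha\xi)$), rather than through the PV transform of $\tanh$ itself or a resolvent expansion.
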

Thus  $ i[\tanh \alpha P, \tanh \beta Q] $ is a rank $n$ operator if $(2\alpha)( 2\beta) = 2\pi n$.  Note that for the rank one case (the one considered by Kato) where 
$(2\alpha)( 2\beta) = 2\pi$, the commutator is positive.  We have assumed $n > 0$.  Reversing the sign of $n$ just amounts to a sign  change in the commutator.
If $n>1$ this operator is not positive.  This can be seen by looking at the matrix $K_{ij}: = K(x_i,x_j)$ in the $2\times 2$ case.  We have
\begin{equation}
K_{ij} = (\beta/n\pi)(\cosh \beta x_i \cosh \beta x_j)^{-1} \sum_{k=0}^{n-1}f_k(x_i-x_j)
\end{equation}
where $f_k(x) = e^{(n-(2k +1))\beta x/n}$
and thus in the $2 \times 2 $ case 
\begin{align}
&\det (K_{ij}) = K_{11}K_{22} - K_{12}^2=\\
&(\beta/n\pi)^2 (\cosh \beta x_1 \cosh\beta x_2)^{-2}(n^2 - |\sum_{k=0}^{n-1}f_k(x_1-x_2)|^2)
\end{align} 
Using the fact that for a positive number $a \ne 1$ we have $a + a^{-1} > 2$, we see that  if $x_1-x_2 \ne 0$ and $n>1$, $\sum_{k = 0}^{n-1} f_k(x_1-x_2)> n$ and thus the determinant is negative.

\begin{remark}
The non-commutativity of $e^{\alpha P}$ and $e^{\beta Q}$ for $\alpha \beta = \pm 2\pi$ has been a source of counterexamples for the uniqueness of the representation of the canonical commutation relations \cite{Fu} and the hypotheses under which the so called virial theorem is true \cite {GG}.  In the latter reference one can also see in what sense these operators commute.  
\end{remark}

Before proving Proposition \ref{tanh commutator} we need

\begin{lemma} \label{kernel}

If $f$ and $g$ are bounded, $f$ is monotone increasing, and $g$ is smooth with bounded derivatives then for $\psi \in L^2(\R) $

\begin{equation} \label{kernel}
( i[f(P), g(Q)] \psi)(x) =  \int K(x,y) \psi(y) dy  
\end{equation}

where 

\begin{equation} \label{kernel2}
K(x,y) = \frac{1}{\sqrt{2\pi}} \frac{g(x) - g(y)}{x-y} \widehat {df}(y-x).
\end{equation} \\

If $g$ is smooth with bounded derivatives, and $f$ is bounded then if $\psi \in \mathcal{S}(\R)$ (the Schwartz space) we have 

$$( i[f(P), g(Q)] \psi)(x) = \int  \frac{1}{\sqrt{2\pi}} \frac{g(x) - g(y)}{x-y} \widehat {f'}(y-x)\psi(y)dy$$
where $f$ is considered a tempered distribution.

\end{lemma}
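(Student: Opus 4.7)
The plan is to reduce to the case of smooth $f$ by mollification, derive the kernel formula directly from the Fourier representation, and pass to the limit; the second statement follows by an analogous argument in $\mathcal{S}'(\R)$. Let $\rho\in C_c^\infty(\R)$ be a nonnegative symmetric mollifier with $\int\rho=1$, set $\rho_\epsilon(t)=\epsilon^{-1}\rho(t/\epsilon)$, and define $f_\epsilon = f*\rho_\epsilon$. Since $f$ is bounded and nondecreasing, each $f_\epsilon$ is smooth, uniformly bounded by $\|f\|_\infty$, nondecreasing, and has classical derivative $f_\epsilon'\in L^1(\R)$ with total integral $f(\infty)-f(-\infty)$. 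The spectral theorem together with the pointwise a.e.\ convergence $f_\epsilon\to f$ gives $f_\epsilon(P)\to f(P)$ in the strong operator topology on $L^2(\R)$, hence $i[f_\epsilon(P),g(Q)]\to i[f(P),g(Q)]$ strongly.

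For smooth $f_\epsilon$ with integrable $f_\epsilon'$, the Fourier representation $f_\epsilon(P)=\mathcal{F}^{-1}M_{f_\epsilon}\mathcal{F}$ identifies the distributional kernel of $f_\epsilon(P)$ as $(2\pi)^{-1/2}\check{f_\epsilon}(x-y)$, so the kernel of $[f_\epsilon(P),g(Q)]$ is $(2\pi)^{-1/2}\check{f_\epsilon}(x-y)[g(y)-g(x)]$. The elementary Fourier identities $\widehat{f_\epsilon'}(\eta)=i\eta\,\widehat{f_\epsilon}(\eta)$ and $\check{f_\epsilon}(u)=\widehat{f_\epsilon}(-u)$ yield $(x-y)\check{f_\epsilon}(x-y)=i\,\widehat{df_\epsilon}(y-x)$, so the kernel of $i[f_\epsilon(P),g(Q)]$ equals
\[
K_\epsilon(x,y) \;=\; \frac{1}{\sqrt{2\pi}}\,\frac{g(x)-g(y)}{x-y}\,\widehat{df_\epsilon}(y-x).
\]
Since $df_\epsilon=df*\rho_\epsilon$, the convolution rule gives $\widehat{df_\epsilon}(\eta)=\sqrt{2\pi}\,\widehat{df}(\eta)\,\widehat{\rho}(\epsilon\eta)\to\widehat{df}(\eta)$ pointwise, uniformly bounded by $(2\pi)^{-1/2}[f(\infty)-f(-\infty)]$. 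Combined with $|(g(x)-g(y))/(x-y)|\le\|g'\|_\infty$, this yields pointwise convergence $K_\epsilon\to K$ with a uniform bound.

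For $\phi,\psi\in C_c^\infty(\R)$, dominated convergence on the double integral $\iint\overline{\phi(x)}K_\epsilon(x,y)\psi(y)\,dx\,dy$ combined with the strong convergence of the commutators yields $(\phi,i[f(P),g(Q)]\psi)=\iint\overline{\phi(x)}K(x,y)\psi(y)\,dx\,dy$. For $\psi\in L^2$ of compact support, $\int K(x,y)\psi(y)\,dy$ converges absolutely and agrees with $(i[f(P),g(Q)]\psi)(x)$; the general $L^2$ case follows by approximation using the boundedness of the commutator. For the second statement $f'$ is only a tempered distribution, so $\widehat{f'}\in\mathcal{S}'(\R)$; for $\psi\in\mathcal{S}(\R)$ and fixed $x$ the map $y\mapsto[(g(x)-g(y))/(x-y)]\psi(y)$ lies in $\mathcal{S}(\R)$ (using $(g(x)-g(y))/(x-y)=\int_0^1 g'(sx+(1-s)y)\,ds$ and the bounded derivatives of $g$), so the pairing with $\widehat{f'}(y-x)$ is well defined, and the mollification argument carries through with $\widehat{f_\epsilon'}\to\widehat{f'}$ in $\mathcal{S}'(\R)$.

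The main obstacle I anticipate is the final extension from test-function matrix elements to the pointwise formula on general $\psi\in L^2$: since $K$ is merely bounded (not Schur or Hilbert--Schmidt), the integral need not converge absolutely for an arbitrary $L^2$ function, so the statement must first be verified for compactly supported $\psi$ via absolute convergence and then extended via density and the boundedness of $i[f(P),g(Q)]$.
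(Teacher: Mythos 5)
Your argument is correct, but it follows a genuinely different route from the paper. The paper works directly with $\psi\in\mathcal S(\R)$: it writes $(i[f(P),g(Q)]\psi)(x)$ in its Fourier representation, observes that the inner integral is $i(\widehat{h_x})'(\xi)$ for the Schwartz function $h_x(y)=\frac{g(y+x)-g(x)}{y}\psi(y+x)/\sqrt{2\pi}$, and then uses Lebesgue--Stieltjes integration by parts together with $\int \widehat{h_x}\,df=\int h_x\,\widehat{df}$ to land on the kernel formula, finishing with a limiting argument for general $\psi\in L^2$; the second statement falls out of the same first identity read as a tempered-distribution pairing, with no regularization at all. You instead mollify $f$, identify the Schwartz kernel of $i[f_\epsilon(P),g(Q)]$ from the convolution representation of $f_\epsilon(P)$ via the identity $(x-y)\check f_\epsilon(x-y)=i\,\widehat{df_\epsilon}(y-x)$, and pass to the limit using strong convergence of the commutators plus dominated convergence of the kernels; this avoids Stieltjes calculus entirely (the jumps of $f$ never enter) at the cost of an extra limiting layer and some care with distributional kernels. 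Two small points: your worry about the final extension to general $\psi\in L^2$ evaporates once you use the boundedness of $g$ itself (not only $g'$) to get $|K(x,y)|\le C\min\bigl(1,|x-y|^{-1}\bigr)$, so $K(x,\cdot)\in L^2(\R)$ uniformly in $x$ and $\int K(x,y)\psi(y)\,dy$ converges absolutely for every $\psi\in L^2$, after which your density argument identifies it with $(C\psi)(x)$ a.e.; and for the second statement the mollification buys you little, since $f_\epsilon'$ is still not integrable and the fixed-$\epsilon$ identity already requires the distributional pairing argument (essentially the paper's one-line computation with $h_x\in\mathcal S$), so you should either spell that step out for smooth bounded $f_\epsilon$ or simply run the direct argument for $f$ itself, and note that both sides are continuous in $x$ to upgrade the a.e.\ equality you get from $L^2$-convergence to the pointwise statement.
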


\begin{remarks}
If $f$ is monotone increasing, we take the version of $f$ which is right continuous in the definition of $df$.
See Lemma \ref{kernel2} for a more complete result.
\end{remarks}

\begin{proof}

We first assume $\psi \in \mathcal{S}(\mathbb{R})$.  $$(i[f(P), g(Q)]\psi)(x) =i\int \left(\int f(\xi) (g(y) - g(x)) \psi(y) e^{-i\xi y } dy\right) e^{i\xi x} d\xi /2\pi.$$ Let 
$$h_x(y) = \frac{g(y+x) - g(x)}{y} \psi (y+ x)/\sqrt{2\pi}.$$  Then 
$$(i[f(P), g(Q)]\psi)(x) =i\int f(\xi) i(\widehat {h_x})'(\xi) d\xi = \int \widehat{h_x}(\xi) df(\xi) = \int h_x(y) \widehat {df}(y)dy$$
where in the second equality we have used the integration by parts formula \cite{F} in the case where we assume $f$ is monotone.
After a change of variable this gives the first result (\ref{kernel}) for $\psi \in  \mathcal{S}(\mathbb{R})$.  A limiting argument gives (\ref{kernel}) for $\psi \in L^2(\R)$.  If we only assume $f$ is bounded, the first equality in the last equation gives our result when we note that $h_x \in \mathcal{S}(\R)$ and remember the definition of the derivative and the Fourier transform of a tempered distribution.

\end{proof}
\begin{proof}[Proof of Proposition \ref{tanh commutator}] 

The kernel $K(x,y)$ is given by $$(2\pi)^{-1/2} \frac{\tanh \beta x - \tanh \beta y}{x-y} \hat f'(y-x)$$ with $f(\xi) = \tanh \alpha \xi$.  Using the known Fourier transform of $(\cosh x)^{-2}$ (see (\ref{FT})) we obtain $$K(x,y) = (\beta/n\pi)\frac{\tanh \beta x - \tanh \beta y}{\sinh( \beta(x-y)/n)}.$$  We calculate $$\tanh \beta x - \tanh \beta y = \sinh \beta (x-y) (\cosh \beta x \cosh \beta y)^{-1}.$$ Using $$a^n - b^n = (a -b) ( a^{n-1} + a^{n-2} b + \cdots + b^{n-1})$$ with $a = e^{\beta x/n}, b = e^{-\beta x/n}$ we obtain $$ \frac{\sinh \beta x }{ \sinh \beta x/n} = \sum_{k = 0}^{n-1} e^{(n-(2k +1))\beta x/n}$$ which gives the result.  
\end{proof}

\section{Theorem \ref{commutes} - the case $[f(P),g(Q)] =0$}

In this section we consider the case where for real bounded measurable functions $f$ and $g$, we have $C=i[f(P),g(Q)] = 0$.

\begin{proof}[Proof of theorem \ref{commutes}]
If $f$ is periodic with period $a = \tau_f$ then $e^{iaQ} f(P) e^{-iaQ} = f(P-a) = f(P)$. Similarly $e^{-iaQ}$ commutes with $f(P)$.  If $g$ is periodic with period $\tau_g = 2\pi/\tau_f$ then $g(Q)$ is a function of $e^{iaQ}$ and thus commutes with $f(P)$.

For the converse first assume $f$ and $g$ are $C^{\infty}$ with bounded derivatives and $f(P)$ and $g(Q)$ commute.  Suppose $f$ is not constant and suppose $k_1$ and $k_2$ are two points so that $f(k_1) \ne f(k_2)$.  Choose open neighborhoods $N_j$ of $k_j$ with $\overline {N_j}$ compact such that $f(\overline {N_1})\cap f(\overline {N_2}) = \emptyset$.  Choose a bounded continuous  function $F$ with $F= 0$ on $f(\overline {N_1})$ and $F = 1$ on $f(\overline {N_2})$.  Note that $(F\circ f)(P)$ commutes with $g(Q)$ so that  if $\widehat {\phi_j} \in C_0^{\infty}(N_j)$ 

$$0 = (\phi_1, [F\circ f(P), g(Q)]\phi_2) = -(\phi_1, g(Q) \phi_2).$$
Considering $g$ as a tempered distribution and denoting $\psi_1 = \overline {\hat{\phi_1}} \in  C_0^{\infty}(N_1), \psi_2 = \hat{\phi_2} \in  C_0^{\infty}(N_2)$ we have 

$$ 0 = g(\overline{\phi_1}\phi_2) = \hat{g}_r(\widehat{\overline{\phi_1}\phi_2})= \int \hat{g_r}(\psi_2(\xi + \cdot) \psi_1(\xi)d\xi/\sqrt{2\pi}$$ where $h_r(x) = h(-x)$.
Since this is true for all $\psi_1  \in  C_0^{\infty}(N_1), \ \hat{g}_r(\psi_2(\cdot + \xi)) = 0$ for all $\xi \in N_1$ and thus $\hat{g}_r(\psi) = 0$ for all $\psi \in  C_0^{\infty}(N_2 - \xi)$ for all $\xi \in N_1$ or more concisely  $\hat{g}_r(\psi) = 0 $ for all $\psi \in C_0^{\infty}(N_2 - N_1)$.  Thus $(\supp \hat{g}) \cap (N_1 - N_2) = \emptyset.$  We have thus proved that

\begin{equation}
f(k_1)\ne f(k_2) \ \text{implies} \ k_1 - k_2 \not\in \supp \hat{g} \nonumber
\end{equation}
or
\begin{equation}
k_1 -k_2 \in \supp \hat{g} \ \text{implies} \ f(k_1) = f(k_2)  \nonumber
\end{equation}
or
\begin{equation}
\supp \hat{g} \subset P_f: = \  \text{the set of periods of } f
\end{equation}
It follows that either $\supp \hat{g} = \{0\}$ in which case $g$ is constant, or $f$ is periodic.  We have already assumed that $f$ is not constant; now assume $g$ is not constant.  Then $f$ has a smallest positive period, $\tau_f$, and 
$$ \supp \hat {g} \subset \tau_f \mathbb{Z}.$$
If we choose $\phi \in C_0^{\infty}((-\tau_f/2,\tau_f/2))$ with $\phi = 1$ in a neighborhood of the origin, then there is a sequence of integers $\{n_j\}$ such that $\hat {g} = \sum_j \phi_j \hat{g}$ where $\phi_j(\xi)= \phi(\xi - n_j \tau_f)$.  Thus $\hat {g} = \sum_j c_j \delta_{n_j\tau_f}$ with $\delta_k$ the Dirac delta at $k$. It follows that $g= \sum_j c_j'e^{i n_j \tau_f x}$ which implies $g$ has period $\tau_g= 2\pi/\tau_f$.  This proves the result when $f$ and $g$ are smooth with bounded derivatives.  In the general case convolve $f$ and $g$ with the gaussian $\delta_0^{\sigma}(x) = \frac{1}{\sqrt{2\pi} \sigma}e^{-|x|^2/2\sigma^2}$.  The resulting functions of $P$ and $Q$ still commute.  Since 
\begin{align*}
\supp (\widehat{\delta_0^{\sigma}*f})&= \supp \hat {f} \\
\supp (\widehat{\delta_0^{\sigma}*g})&= \supp \hat {g},
\end{align*}
it follows that $\delta_0^{\sigma}*f$ and $f$ have the same periods as do  $\delta_0^{\sigma}*g$ and $g$.  This completes the proof.
\end{proof}

\begin{remark}
Actually the result is true without the requirement that the functions $f$ and $g$ be real.  The complex case follows from the result just proved and a result of Fuglede \cite {Fu1} (see also \cite {R} for a very simple proof)  which states that if a normal operator commutes with another operator then so does its adjoint.  Thus the real and imaginary parts of $f$ and $g$ commute with one another.  With this information the proof is straight-forward.  
\end{remark}

\section{Theorem  \ref{monotone} - monotonicity}

If we assume that $f$ and $g$ are absolutely continuous with derivatives in $L^1(\R)$, the proof of monotonicity is given by Kato in \cite {TK} and is straightforward. We give a sketch of the proof:  The integral kernel of the positive operator $ i[f(P), g(Q)]$ is 
 $$K(x,y) = (\sqrt{2\pi})^{-1}\frac{g(x) -g(y)}{x-y} \widehat {f'}(y-x).$$  With $[f] = f(\infty) - f(-\infty)$
 the condition of positivity implies  $K(x,x) = g'(x)[f]/2\pi \ge 0$ and $K(x,x)K(y,y) \ge |K(x,y)|^2$ or $ g'(x) g'(y)[f]^2 \ge (2\pi)^2 |K(x,y)|^2$.  Thus unless $K$ is identically zero, $[f]  \ne 0$ which implies $g$ is monotone.   Using complex conjugation, $C_0$, and the Fourier transform, $\mathcal{F}$, we note that 
  \begin{equation}\label{f&g}
 \mathcal{F}^{-1}C_0 i[f(P),g(Q)] C_0^{-1}\mathcal{F} = i[g(P),f(Q)]
 \end{equation}
 which is therefore positive. Thus the same argument gives that $f$ is also monotone and it follows they are both either increasing or both decreasing. 
  
\begin{proof} [Proof of Theorem \ref{monotone}]
We first assume that $f$ and $g$ are infinitely differentiable with bounded derivatives.  In the proof we use the distribution kernel of the operator $C = $ given by 

$$ K(x,y) = (\sqrt{2\pi})^{-1}\frac{g(x) -g(y)}{x-y} \widehat{f'}(y-x)$$\\
We do not know that $f'$ is integrable which accounts for the distribution nature of the kernel.  

Since the continuity of the kernel is not yet known, we use a smeared out version of the inequality $|K(x_1,x_2) |^2 \le  K(x_1,x_1) K(x_2,x_2) $.  Put 

$$ \delta_a^{\sigma}(x) = \frac{1}{\sqrt{2\pi} \sigma}e^{-|x-a|^2/2\sigma^2}$$\\
and note that 
$$\delta_a^{\sigma}(x)\delta_a^{\sigma}(y)= \delta_0^{\sqrt 2\sigma}(x-y)\delta_a^{\sigma/\sqrt 2}((x+ y)/2).$$\\
For $\phi \in C_0^{\infty}(\mathbb{R})$, let 
\begin{align}
I_{\sigma} &= (\delta_0^{\sigma}*\phi,C\delta_0^{\sigma}*\phi) = \int (\delta_{x_1}^{\sigma},C\delta_{x_2}^{\sigma})\overline{\phi(x_1)}\phi(x_2)dx_1 dx_2 
\end{align}
and note $$ |(\delta_{x_1}^{\sigma},C\delta_{x_2}^{\sigma})| \le (\delta_{x_1}^{\sigma},C\delta_{x_1}^{\sigma})^{1/2}(\delta_{x_2}^{\sigma},C\delta_{x_2}^{\sigma})^{1/2}$$  so that
\begin{align}
I_{\sigma} &\le (\int  (\delta_{x}^{\sigma},C\delta_{x}^{\sigma})^{1/2}|\phi(x)| dx)^2 \le \int_{\supp \phi} 1dx \int  (\delta_{x}^{\sigma},C\delta_{x}^{\sigma})|\phi(x)|^2 dx \nonumber \\
&= c\int \delta_0^{\sqrt 2\sigma}(x-y)K(x,y)\delta_0^{\sigma/\sqrt 2}((x+ y)/2 - x_1)|\phi(x_1)|^2 dx_1dxdy \nonumber\\
&=c\int \delta_0^{\sqrt 2\sigma}(x-y)K(x,y)\delta_0^{\sigma/\sqrt 2}*|\phi|^2((x+y)/2) dxdy \nonumber\\
&=:cJ_{\sigma}/\sqrt{2\pi}
\end{align}
where $c=  \int_{\supp \phi} 1dx$.  We abbreviate $\psi_{\sigma} = \delta_0^{\sigma/\sqrt 2}*|\phi|^2$ and use Taylor's theorem to write 

\begin{align}
\frac{g(y+t) - g(y)}{t} &= g'(y) + t G_1(y,t)\\
 G_1(y,t) &= \int_0^1 g^{(2)}(y+\theta t) (1-\theta) d\theta.
\end{align}
Thus we obtain
\begin{align}
J_{\sigma}&= \int \delta_0^{\sqrt 2\sigma}(t) g'(y) \widehat {f'}(-t)\psi_{\sigma}(y+ t/2) dt dy \nonumber \\
&+ \int t \delta_0^{\sqrt 2\sigma}(t) G_1(y,t)\widehat {f'}(-t)\psi_{\sigma}(y+ t/2) dt dy
\end{align}
and replacing $\psi_{\sigma}(y+ t/2)$ by $\psi_{\sigma}(y)$ we pick up another error term so that 
\begin{align}
J_{\sigma}&= \int \delta_0^{\sqrt 2\sigma}(t) g'(y) \widehat {f'}(-t)\psi_{\sigma}(y) dt dy \nonumber \\
& + \int t \delta_0^{\sqrt 2\sigma}(t)G_2^{\sigma}(y,t) \widehat {f'}(-t) dt dy \\
&G_2^{\sigma}(y,t) = g'(y)(\psi_{\sigma}(y+ t/2) - \psi_{\sigma}(y))/t + G_1(y,t) \psi_{\sigma}(y+t/2).
\end{align}
We now have 
\begin{align}
J_{\sigma}& = J_{\sigma}^1 + J_{\sigma}^2 \nonumber \\
& J_{\sigma}^1 = \left(\int \delta_0^{\sqrt 2\sigma}(t) \widehat {f'}(-t)dt \right)\left(\int g'(y)\psi_{\sigma}(y)dy\right) \nonumber \\
&  J_{\sigma}^2 = \int I_2^{\sigma}(y) dy \nonumber \\
& I_2^{\sigma}(y) = \int G_2^{\sigma}(y,t) t  \delta_0^{\sqrt 2\sigma}(t)\widehat {f'}(-t)dt = \int F^{\sigma}_y(\xi)f(\xi) d\xi \nonumber \\
&F_y^{\sigma}(\xi) = (-i/\sqrt{2\pi})\int G_2^{\sigma}(y,t)t^2 \delta_0^{\sqrt 2\sigma}(t)e^{i\xi t}dt.
\end{align}
We estimate $D^m\psi_{\sigma} = \delta_0^{\sigma/\sqrt 2}*D^m|\phi|^2$ to find for all $m$ and $n$
\begin{align*}
|D^m\psi_{\sigma} (x)| \le c_{n,m} (1+|x|)^{-n}
\end{align*}
uniformly for  $\sigma \in (0,1)$ .
We use $G_2^{\sigma}(y,t)= \int_0^{1/2} \psi'_\sigma(y+ \theta t) d\theta g'(y) + G_1^{\sigma}(y,t) \psi_{\sigma}(y+ t/2) $ and obtain
\begin{align*}
|G_2^{\sigma}(y,t)| \le c_n((1+|y+t/2|)^{-n} + (1+|y|)^{-n}).
\end{align*}
Thus for $\sigma \in (0,1)$
\begin{align}\label{sigmasquaredbound}
|F_y^{\sigma}(\xi)|&\le c_n'\int ((1+|y+t/2|)^{-n} + (1+|y|)^{-n})t^2 \delta_0^{\sqrt 2\sigma}(t)dt \nonumber \\
&\le d_n \sigma^2(1+|y|)^{-n}.
\end{align}
We also need some decay of $F_y^{\sigma}(\xi)$ in $\xi$.  Thus

\begin{align} \label{integral}
|(1+ \xi^2)F_y^{\sigma}(\xi)| \le c\int |(1-d^2/dt^2)G_2^{\sigma}(y,t)t^2 \delta_0^{\sqrt 2\sigma}(t)|dt.
\end{align}
We easily see that $|d^m/dt^m G_2^{\sigma}(y,t)| \le c_m'$ uniformly for $\sigma \in(0,1)$. Thus differentiating $t^2 \delta_0^{\sqrt 2\sigma}(t)$ and integrating in (\ref{integral}) we find
\begin{align}\label{derivbound}
(1+\xi^2) |F_y^{\sigma}(\xi)| \le c'.
\end{align}
Interpolating between (\ref{sigmasquaredbound}) and (\ref{derivbound})  we obtain for $\theta \in (0,1)$
\begin{align*}
 |F_y^{\sigma}(\xi)|\le c''_n(1+\xi^2)^{-(1-\theta)}\sigma^{2\theta}(1+|y|)^{-n\theta}.
\end{align*}
It thus follows that for $\theta \in (0,1/2)$ and any $n$
\begin{align*}
|I_2^{\sigma}| \le c_{n,\theta} \sigma^{2\theta} (1+|y|)^{-n\theta}
\end{align*}
which gives 
\begin{align*}
|J_2^{\sigma}| \le c_{\theta} \sigma^{2\theta}.
\end{align*}
We have thus shown
\begin{align} \label{Jsigma}
J_{\sigma}/\sqrt{2\pi} = \int (\delta_x^{\sigma}, C\delta_x^{\sigma})|\phi(x)|^2 dx= \left(\int  \delta_0^{\sqrt 2\sigma}(t) \widehat {f'}(-t)dt \right) \left(\int g'(y) \psi_{\sigma}(y) dy \right) + O( \sigma^{2\theta})
\end{align}
for $\theta \in (0,1/2)$.
Notice that 
\begin{align}\label{beta sigma}
\beta(\sigma): = \sqrt{2\pi}\int  \delta_0^{\sqrt 2\sigma}(t) \widehat {f'}(-t)dt = \int e^{-(\sigma \xi)^2}f'(\xi) d\xi
\end{align}
is independent of $\phi$.  Let us assume that $g'$ is not of constant sign.  Choose $\phi_1 \in C_0^{\infty}(\mathbb R)$ so that $\int g'(y) |\phi_1(y)|^2 dy >0$.  Then with $\psi^1_{\sigma} =  \delta_0^{\sigma/\sqrt 2}* |\phi_1|^2$ we have \\
$\lim _{\sigma \to 0} \int g'(y)\psi^1_{\sigma}(y) dy  = \int g'(y) |\phi_1(y)|^2 dy >0$ and thus 

\begin{align}
\beta(\sigma) = J_{\sigma}\big(\int g'(y)\psi^1_{\sigma}(y) dy\big)^{-1} + O(\sigma^{2\theta})
\end{align}
which implies  that for small $\sigma$, $\beta(\sigma) \ge - c_1 \sigma^{2\theta}$ for some $c_1 >0$ since $J_{\sigma} \ge 0$.  Similarly choosing $\phi_2 \in C_0^{\infty}(\mathbb R)$ so that $\int g'(y) |\phi_2(y)|^2 dy <0$ we find for small $\sigma$
 $\beta(\sigma) \le c_2 \sigma^{2\theta}$ for some $c_2 >0$. It follows that for small $\sigma$, for any $\phi\in C_0^{\infty}(\mathbb R)$, $J_{\sigma} = O(\sigma^{2\theta})$.  But since $I_{\sigma} = (\delta_0^{\sigma}*\phi,C\delta_0^{\sigma}*\phi) \le cJ_{\sigma}/\sqrt{2\pi}$ we can take the limit $\sigma \to 0$ and conclude $(\phi,C\phi) = 0$ for all $\phi\in C_0^{\infty}(\mathbb R)$ and thus $C=0$.  Hence if $C\ge 0$ and $C \ne 0$, $g$ is a monotone function.  By (\ref{f&g}) the same holds for $f$.
Combining (\ref{Jsigma}) and (\ref{beta sigma}) we see that 

\begin{align*}
I_{\sigma}&= (\delta_0^{\sigma}*\phi,C\delta_0^{\sigma}*\phi) \le cJ_{\sigma} /\sqrt{2\pi}\\
&\le c(1/\sqrt{2\pi})\left(\int e^{-(\sigma \xi)^2}f'(\xi) d\xi \right) \left(\int g'(y) \psi_{\sigma}(y) dy\right) + O( \sigma^{2\theta}).
\end{align*}
Taking $\sigma \to 0$ we obtain
\begin{align*}
(\phi,C \phi)\le c(1/\sqrt{2\pi})\int f'(\xi) d\xi \int g'(y) |\phi(y)|^2dy
\end{align*}
(where again $c = \int_{\supp \phi} 1dx$) so that $g'(x)f'(\xi) \ge 0$. 

To deal with the general case where neither $f$ nor $g$ is known to be smooth note that if $i[f(P),g(Q)] \ge 0$ and non-zero the same is 
true if we replace $g$ with $g*\rho$ where $\rho$ is a smooth non-negative approximation to the identity.  Similarly for $f$ replaced with $f*\zeta$ .  It then follows that both $g*\rho$ and $f*\zeta$ are monotone.  By taking $\rho_n$ a suitable sequence, $g*\rho_n \to g$ on a set $E$ of full Lebesgue measure and thus $g$ is monotone on $E$.  If we take $G(x) = \lim _{u\in E, u\uparrow x} g(u)$, then $G$ is monotone and equals $g$ a.e.  Clearly the same idea works for $f$.  It is easy to see that $G$ and $F$, the corresponding version of $f$, are either both increasing or both decreasing.
\end{proof}

 \section{Theorem 1.5 - Continuity of $\mathrm {g}$, absolute continuity of the inverse of $\mathrm{g}$ }
 
In this section we assume that the commutator $ i[f(P),g(Q)]$ is non-zero and positive.

\begin{lemma} \label{kernel2}
Suppose  $ i[f(P),g(Q)] = C$ where $C$ is positive and $f$ and $g$ are monotone non-decreasing and bounded.  Then for all $\psi \in L^2(\R)$, $( i[f(P), g(Q)]\psi)(x) =  \int K(x,y) \psi(y) dy$  where 
\begin{equation} \label{kernel1} 
K(x,y) = \frac{1}{\sqrt{2\pi}} \frac{g(x) - g(y)}{x-y} \widehat {df}(y-x).
\end{equation}   The operator $C$ is trace class with $\tr C = [f][g]/2\pi$, and the kernel $K$ 
is square integrable.
\end{lemma}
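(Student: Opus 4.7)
The plan is to mollify $g$ so as to bring the problem within the scope of Lemma~\ref{kernel}, then to establish trace-class for the mollified commutator via truncation plus Mercer's theorem, and finally to transfer trace-class and the exact trace to $C$ itself using the positivity $C \ge 0$ together with a unitary conjugation-averaging identity.

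Let $\rho_\epsilon$ be a smooth non-negative approximate identity, and set $g_\epsilon := g * \rho_\epsilon$: this is $C^\infty$, monotone non-decreasing, bounded by $\|g\|_\infty$, with $[g_\epsilon] = [g]$ and all derivatives bounded. Put $C_\epsilon := i[f(P), g_\epsilon(Q)]$. By Lemma~\ref{kernel} (first case), $C_\epsilon$ has continuous integral kernel
\[
K_\epsilon(x, y) = \frac{1}{\sqrt{2\pi}}\, \frac{g_\epsilon(x) - g_\epsilon(y)}{x - y}\, \widehat{df}(y - x),\qquad K_\epsilon(x, x) = \frac{g_\epsilon'(x)[f]}{2\pi}.
\]
Using $e^{iuP} Q e^{-iuP} = Q + u$ we have $g_\epsilon(Q) = \int \rho_\epsilon(u)\, e^{-iuP} g(Q) e^{iuP}\, du$, so
\[
C_\epsilon = \int \rho_\epsilon(u)\, e^{-iuP}\, C\, e^{iuP}\, du \ge 0.
\]
Letting $\chi_R$ denote multiplication by $\mathbf{1}_{[-R, R]}$, the truncation $\chi_R C_\epsilon \chi_R$ is positive with continuous kernel on a compact set, so classical Mercer gives $\tr(\chi_R C_\epsilon \chi_R) = \int_{-R}^R K_\epsilon(x, x)\, dx$. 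Since $(e_k, \chi_R C_\epsilon \chi_R e_k) \to (e_k, C_\epsilon e_k)$ as $R \to \infty$ (because $\chi_R \to I$ strongly), Fatou on a positive series gives $\tr C_\epsilon \le \lim_R \tr(\chi_R C_\epsilon \chi_R) = [f][g]/(2\pi) < \infty$, so $C_\epsilon$ is trace class; then $\chi_R C_\epsilon \chi_R \to C_\epsilon$ in trace norm (standard for $\chi_R \to I$ strongly and $C_\epsilon$ trace class) yields the equality $\tr C_\epsilon = [f][g]/(2\pi)$.

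For any orthonormal basis $\{e_n\}$, unitarity of $e^{iuP}$ and Tonelli on non-negative summands give
\[
\tr C_\epsilon = \int \rho_\epsilon(u) \sum_n (e^{iuP} e_n, C\, e^{iuP} e_n)\, du = \int \rho_\epsilon(u)\, \tr C\, du = \tr C,
\]
so $\tr C = [f][g]/(2\pi) < \infty$ and $C$ is trace class; the Hilbert--Schmidt bound $\|C\|_{\mathrm{HS}}^2 \le \|C\|_1 \|C\|_{\mathrm{op}}$ then forces $K \in L^2(\R^2)$. To identify $K$: once $C$ is trace class, $u \mapsto e^{-iuP} C e^{iuP}$ is trace-norm continuous, so $\|C_\epsilon - C\|_1 \le \int \rho_\epsilon(u) \|e^{-iuP} C e^{iuP} - C\|_1\, du \to 0$, hence $K_\epsilon \to K$ in $L^2(\R^2)$. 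Since $K_\epsilon(x, y) \to (2\pi)^{-1/2}(g(x) - g(y))(x - y)^{-1} \widehat{df}(y - x)$ pointwise at every $(x, y)$ with $x \ne y$ and $g$ continuous at both points---that is, a.e.\ on $\R^2$---the $L^2$-limit coincides with the stated formula.

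The main difficulty is the trace-class assertion for $C_\epsilon$: although its kernel is continuous, positive semi-definite, and has integrable diagonal, classical Mercer's theorem applies only on compact domains, so the truncation-plus-lower-semicontinuity step above (or equivalently an appeal to Brislawn's extension of Mercer's theorem to $\R$) is needed rather than being automatic.
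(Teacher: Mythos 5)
Your proof is correct and follows essentially the same route as the paper: mollify $g$, write $i[f(P),g_\epsilon(Q)]$ as an average of unitary conjugates of $C$, compute its trace from the continuous kernel (your truncation-plus-Mercer argument fills in what the paper calls "easily seen"), and transfer the trace to $C$ by unitary invariance. The only variation is the final kernel identification, where you use trace-norm continuity of $u\mapsto e^{-iuP}Ce^{iuP}$ instead of the paper's Fatou/Hilbert--Schmidt bound with a.e.\ convergence of the mollified kernels; both steps are sound.
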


\begin{proof}
Let $g_t = \phi_t * g$ where $\phi$ is a non-negative smooth function of compact support whose integral is $1$ and $\phi_t(x) = t^{-1} \phi(t^{-1}x)$.  Then
$$ i[f(P),g_t(Q)] = \int_{\mathbb{R}} C(a) \phi_t(a) da$$ 
where $C(a) = e^{-iPa}Ce^{iPa}$.  $ i[f(P),g_t(Q)] $ is a positive operator with continuous kernel  so we can calculate its trace which is easily seen to be $[f][g]/2\pi $.  We have
$$\tr \int_{\mathbb{R}} C(a) \phi_t(a) da  =  \int (\tr e^{-iPa}Ce^{iPa})\phi_t(a)da = \int \tr (C) \phi_t(a) da = \tr C$$ 
which shows $\tr C = [f][g]/2\pi $. Since $ i[f(P),g_t(Q)]$ has constant trace its Hilbert-Schmidt norm is uniformly bounded.  Thus if we abbreviate \\ $K_t (x,y) = \int K(x-a,y-a) \phi_t(a) da$, we have 
$$\infty > \liminf_{t \downarrow 0} \int |K_t(x,y)|^2 dx dy \ge \int |K(x,y)|^2 dx dy$$ 
by Fatou's lemma.  Here we have used that $K_t \rightarrow K$ a.e.  It follows easily that  $K_t \rightarrow K$ in $L^2(\R^2)$.  For an $L^2(\R)$ function $\psi$, let $F_t = \int K_t(x,y)\psi(y) dy $ and $F = \int K(x,y)\psi(y) dy$.  We have $||F_t - F|| \le ||K_t - K||_{L^2(\R^2)} ||\psi ||$.  But $F_t \rightarrow C \psi$ strongly which gives (\ref{kernel1}).
This finishes the proof.

\end{proof}

\begin{lemma}
Suppose $i[f(P),g(Q)]=C$ where $C$ is non-zero and positive while $f$ and $g$ are bounded and monotone increasing.  Then $g$ is continuous.
\end{lemma}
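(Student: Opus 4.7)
The plan is to use the Hilbert--Schmidt property of $C$ established in Lemma 5.1 to forbid any jump discontinuity of $g$. Since $g \in L^\infty(\R)$ is monotone increasing we may replace it by its right-continuous version without changing $g(Q)$, so it suffices to show this version has no jump.

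Suppose, toward a contradiction, that $g(x_0^+) - g(x_0^-) = J > 0$ for some $x_0 \in \R$. By Lemma 5.1 the kernel
\[
K(x,y) = \frac{1}{\sqrt{2\pi}}\,\frac{g(x) - g(y)}{x-y}\,\widehat{df}(y-x)
\]
lies in $L^2(\R^2)$. First I would restrict attention to the wedge $W_\delta = \{(x,y) : y < x_0 < x,\ x - y < \delta\}$: there monotonicity forces $g(x) - g(y) \geq J$, and the change of variables $u = x-y$, $v = (x+y)/2$ (with unit Jacobian) parametrizes $W_\delta$ as $\{0 < u < \delta,\ v \in (x_0 - u/2,\, x_0 + u/2)\}$, i.e.\ a $v$-slice of length $u$ for each $u$. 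Performing the $v$-integration then yields
\[
\|K\|_{L^2(\R^2)}^2 \;\geq\; \frac{J^2}{2\pi}\int_0^\delta \frac{|\widehat{df}(-u)|^2}{u}\,du.
\]

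The last step is to observe that this lower bound is $+\infty$. Since $df$ is a finite positive measure of total mass $[f]$, its Fourier transform $\widehat{df}$ is continuous with $\widehat{df}(0) = [f]/\sqrt{2\pi}$; and since $C \neq 0$ has trace $[f][g]/(2\pi) > 0$ by Lemma 5.1, both $[f]$ and $[g]$ are strictly positive. Hence $|\widehat{df}(-u)|^2$ stays bounded below by a positive constant on a neighborhood of $0$, so the integral dominates $c\int_0^\delta u^{-1}\,du = \infty$, contradicting $K \in L^2(\R^2)$ and forcing $J = 0$. The delicate point --- and the one I view as the essential mechanism of the argument --- is the wedge localization: it is precisely the geometry of the region where $g(x) - g(y)$ does not shrink as $x - y \to 0$ that converts the Hilbert--Schmidt condition into a logarithmic obstruction at the origin of $\widehat{df}$.
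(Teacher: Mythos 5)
Your argument is correct and is essentially the paper's own proof: both use the square integrability of the kernel $K$ from the preceding lemma, the lower bound $g(x)-g(y)\ge J$ on a wedge straddling the jump point, and the fact that $\widehat{df}$ is continuous and nonzero at the origin (since $[f]>0$ when $C\neq 0$) to force a divergent integral of $|K|^2$ near the diagonal. The only difference is cosmetic: you integrate directly over the wedge touching the diagonal via the $(u,v)$ change of variables to exhibit the logarithmic divergence, while the paper works on the region $x<-\epsilon$, $y>\epsilon$ and lets $\epsilon\to 0$.
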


\begin{proof}

We use the square integrability of the kernel $K$.  Suppose $g$ has a jump of $k'$ at $0$.  Then for all $\epsilon >0$ 
$$\int_ {x<-\epsilon, y>\epsilon} |(g(y) - g(x))/(y-x)|^2 |\widehat{df}(y-x)|^2 dxdy < k$$ for some $k$.  Now 
$|\widehat{df}(y-x)| \ge \delta > 0$ for $y-x$ small while $g(y) - g(x) \ge  k' >0$ in the range of integration. This contradicts the above inequality.

\end{proof}

\begin{lemma}
Suppose  $ i[f(P),g(Q)] = C$ where $C$ is positive and $f$ and $g$ are monotone non-decreasing and bounded.  Then $g$  is strictly increasing.  

\end{lemma}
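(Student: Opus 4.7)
The plan is to argue by contradiction: assume $g$ is monotone non-decreasing, bounded, continuous (by the preceding lemma), but not strictly increasing, so that $g$ is equal to a constant $c$ on some closed interval $I = [a,b]$ with $a<b$. I want to push this local flatness to all of $\R$, which would force $K \equiv 0$ and hence $C=0$, contradicting the standing assumption $C \neq 0$ of this section.

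First I would exploit positivity of $C$ via a block-matrix argument. Let $P_I$ denote the orthogonal projection onto $L^2(I)$. The explicit kernel $K(x,y) = (2\pi)^{-1/2}\,\frac{g(x)-g(y)}{x-y}\,\widehat{df}(y-x)$ from Lemma \ref{kernel2} vanishes identically on $I\times I$, since $g(x)-g(y)=0$ there. Hence $P_I C P_I = 0$. Because $C \ge 0$, writing $C = C^{1/2}C^{1/2}$ gives $\|C^{1/2}P_I\|^2 = \|P_I C P_I\| = 0$, so $C^{1/2}P_I = 0$ and therefore $P_I C = 0$. Reading this on the kernel level, $K(x,y)=0$ for almost every $(x,y)$ with $x \in I$.

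Next I would use properties of $\widehat{df}$ near the origin to convert the vanishing of $K$ into further flatness of $g$. Since $f$ is bounded monotone non-decreasing and $C\neq 0$, the trace formula $\tr C = [f][g]/2\pi$ forces $[f]>0$; in particular $\widehat{df}$ is a continuous function with $\widehat{df}(0) = [f]/\sqrt{2\pi} > 0$. Hence there exists $\delta>0$ such that $\widehat{df}(t)\neq 0$ for all $|t|<\delta$. For every $x\in I$, the relation $K(x,y)=0$ a.e.\ then implies $g(y)=g(x)=c$ for almost every $y$ in $(x-\delta, x+\delta)$. By continuity of $g$, this holds for \emph{every} such $y$. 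Taking the union over $x\in I$, $g \equiv c$ on the enlarged open interval $(a-\delta,\,b+\delta)$.

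Now iterate. The same argument, applied to any closed subinterval of $(a-\delta, b+\delta)$ on which $g$ is constant, extends the flatness by another $\delta$ on each side. Since $\delta$ depends only on $f$ (not on $I$), induction yields that $g$ is constant on all of $\R$. But then $g(x)-g(y)\equiv 0$, so $K\equiv 0$ and $C=0$, contradicting our standing hypothesis. Therefore $g$ admits no non-degenerate interval of constancy and is strictly increasing.

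The main obstacle is the rigorous passage $P_I C P_I = 0 \Rightarrow P_I C = 0$, which relies crucially on $C \ge 0$ (it would fail for general self-adjoint $C$) and on already knowing that $C$ is represented by the integrable kernel $K$ provided by Lemma \ref{kernel2}; once that is in place, the rest of the argument is a straightforward propagation using the non-vanishing of $\widehat{df}$ at the origin.
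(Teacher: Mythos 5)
Your argument is correct and is essentially the paper's own proof: the paper likewise uses positivity to pass from the vanishing of the quadratic form on functions supported in the interval of constancy (i.e.\ $(\psi,C\psi)=0$, hence $C\psi=0$) to the vanishing of the kernel, then uses continuity and non-vanishing of $\widehat{df}$ near $0$ to propagate the constancy of $g$ to a larger interval, concluding by induction. Your projection formulation $P_ICP_I=0\Rightarrow C^{1/2}P_I=0$ and your justification that $[f]>0$ via the trace formula are just slightly more explicit renderings of the same steps.
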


\begin{proof}
Suppose $g$ is constant on $I = (a,b)$.  Then by the explicit form of the kernel $K$ of the commutator $C =  i[f(P),g(Q)]$, we see that for $\psi \in C_0^{\infty}(I), (\psi,C\psi) =0$.  Since $C\ge 0$, this means $C\psi = 0$ which in turn implies $(\frac{g(x) - g(y)}{x-y}) \widehat{df}(y-x) = 0 $ for $y \in I$ and $x \in \mathbb R$.  But  $\widehat{df}$ is continuous and non-zero at $0$, so non-zero in an interval $J = (-c,c), c>0$.  Thus $g$ is constant on the interval $I + J$.  The result follows by induction.

\end{proof}

We give a quick proof of Putnam's theorem \cite{CFWK, P} on positive commutators:

\begin{thm} (Putnam)
Suppose $A$ and $H$ are self-adjoint operators with $A$ bounded.  Suppose $i[H, A] = C$ with $C$ self-adjoint and positive. We assume this equality is true in the sense that for each $\psi \in D(H)$ 
$$ i(H\psi, A \psi) - i(A\psi, H\psi) =  ||C^{1/2}\psi ||^2.$$
Then the absolutely continuous subspace of $H$, $\mathcal{H}_{ac}(H) \supset \overline{ \Ran{C}}$.
\end{thm}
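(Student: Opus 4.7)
The plan is to recognize the hypothesis as saying that $C^{1/2}$ is $H$-smooth in the sense of Kato, and then invoke the classical theorem which turns smoothness into absolute continuity. Nothing deep is needed beyond a one-line differentiation identity; the rest is citation.

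For the key computation, take $\psi\in D(H)$ and set $\psi_t = e^{-itH}\psi$, so $\psi_t\in D(H)$ for every $t\in\mathbb R$ and $\dot\psi_t = -iH\psi_t$ in norm by Stone's theorem. Since $A$ is bounded, $t\mapsto (\psi_t,A\psi_t)$ is differentiable, and the quadratic-form hypothesis applied at $\psi_t$ gives
$$\frac{d}{dt}(\psi_t, A\psi_t) = i(H\psi_t,A\psi_t) - i(A\psi_t,H\psi_t) = \|C^{1/2}\psi_t\|^2.$$
Integrating from $-T$ to $T$ and using $|(\psi_s, A\psi_s)|\le\|A\|\|\psi\|^2$ for every $s$,
$$\int_{-T}^{T}\|C^{1/2}e^{-itH}\psi\|^2\,dt = (\psi_T,A\psi_T) - (\psi_{-T},A\psi_{-T}) \le 2\|A\|\|\psi\|^2.$$
Letting $T\to\infty$ delivers the Kato smoothness estimate
$$\int_{\mathbb R}\|C^{1/2}e^{-itH}\psi\|^2\,dt \le 2\|A\|\|\psi\|^2 \qquad \text{for all } \psi\in D(H),$$
and $D(H)$ is a core, so this extends to the full natural domain of $C^{1/2}$.

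To finish, I would invoke the classical Kato smoothness theorem: if a closed operator $T$ obeys such a uniform bound on a dense subspace, then $\overline{\Ran T^*}\subset\mathcal{H}_{ac}(H)$. Applied to the self-adjoint operator $T=C^{1/2}$, this yields $\overline{\Ran C^{1/2}}\subset \mathcal{H}_{ac}(H)$, and since $C$ is a non-negative self-adjoint operator, spectral calculus gives $\overline{\Ran C^{1/2}}=\overline{\Ran C}$, so the desired inclusion follows. The only genuinely substantive step is the derivative identity above; the main potential obstacle is rigorously justifying that differentiation (handled by Stone's theorem on $D(H)$ together with the hypothesis applied pointwise in $t$), and the smoothness-to-absolute-continuity step is a black-box invocation of the classical result.
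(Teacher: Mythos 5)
Your argument is correct, but it takes a genuinely different route from the paper. You prove the Kato smoothness bound $\int_{\mathbb R}\|C^{1/2}e^{-itH}\psi\|^2\,dt \le 2\|A\|\|\psi\|^2$ by differentiating $t\mapsto(\psi_t,A\psi_t)$ along the unitary group (the standard Kato--Putnam argument as in Reed--Simon), and then you quote the classical smoothness theorem ($T$ being $H$-smooth implies $\overline{\Ran T^*}\subset\mathcal{H}_{ac}(H)$) as a black box; the passage from the dense set $D(H)$ to all vectors is fine because $C^{1/2}$ is closed, though your appeal to ``$D(H)$ is a core'' is the one spot that deserves the extra sentence making that closedness argument explicit. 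The paper instead argues statically and self-containedly: for a finite interval $I$ with midpoint $\lambda$ and $E=E_H(I)$, the hypothesis applied to $E\psi$ with $H$ replaced by $H-\lambda$ gives $\|ECE\|\le |I|\,\|A\|$, hence $(C^{1/2}\psi,E_H(I)C^{1/2}\psi)\le |I|\,\|A\|\,\|\psi\|^2$, so the spectral measure of any vector in $\Ran C^{1/2}$ is dominated by a multiple of Lebesgue measure and absolute continuity is immediate, with no time-dependent machinery or external theorem needed. Both proofs exploit the same positivity of the commutator and end with $\overline{\Ran C^{1/2}}=\overline{\Ran C}$; yours buys brevity by leaning on standard smoothness theory, while the paper's buys a short, elementary, quantitative bound on the spectral density that keeps the argument entirely within the paper.
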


\begin{remark}
It will be clear in the proof that $A$ need not be bounded but it rather  suffices that $D(A)$ contain all vectors in the range of $E_{H}(J)$ for all finite intervals $J$.  (Here $E_H(\cdot)$ is the projection valued spectral measure associated with the operator $H$.)  This generalization may have limited applicability so we do not include a proof (although it should be obvious).
\end{remark}

\begin{proof}
For a finite interval $I = (a, b)$, let $\lambda = (a + b)/2$, and $E = E_H(I)$.  We use the notation $|J|$ for the Lebesgue measure of the Borel set $J$. Then note 
\begin{align}
|(E\psi, i[H,A]E\psi)|& = |(E\psi, i[H - \lambda I,A]E\psi)|\\
&\le  |((H-\lambda)E\psi, AE \psi) - (AE\psi, (H-\lambda)E\psi)| \\
&\le 2 |I|/2 ||A|| ||\psi||^2.
\end{align}
Thus $$||ECE|| \le |I| ||A||.$$
We also have 
$$||ECE|| = ||(EC^{1/2})(EC^{1/2})^*|| = ||(EC^{1/2})^*(EC^{1/2})|| = ||C^{1/2}EC^{1/2}||$$
so that if $||\psi|| = 1$,
$$(C^{1/2}\psi,E_H(I)C^{1/2}\psi) \le |I| ||A|| .$$
This inequality extends from finite open intervals, $I$, to arbitrary Borel sets and shows that  $\mathcal{H}_{ac}(H) \supset \Ran{C^{1/2}}$.  But it is easily seen that $\overline{  \Ran{C^{1/2}}} = \overline{\Ran{C}}$ which proves the theorem. 

\end{proof} 

We use Putnam's theorem to show absolute continuity of the inverse of the function $g$.
\begin{proposition}
Suppose $i[f(P),g(Q)] = C$, where $f$ and $g$ are real increasing bounded functions and $C$ is positive and non-zero.  Then the inverse function $g^{-1}$ is absolutely continuous.  
\end{proposition}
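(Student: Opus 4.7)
The plan is to apply Putnam's theorem to $H = g(Q)$ and $A = -f(P)$. Both are bounded self-adjoint, and $i[H,A] = i[g(Q),-f(P)] = i[f(P),g(Q)] = C \ge 0$, so Putnam yields $\mathcal{H}_{ac}(g(Q)) \supset \overline{\Ran C}$.

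The next step is to connect $\mathcal{H}_{ac}(g(Q))$ to the absolute continuity of $g^{-1}$. Set $J := (g(-\infty), g(\infty))$. For $\psi \in L^2(\R)$, the spectral measure of $g(Q)$ at $\psi$ is the pushforward $\mu_\psi(B) = \int_{g^{-1}(B)}|\psi|^2\,dx$ on $J$, and $\psi \in \mathcal{H}_{ac}(g(Q))$ iff $\mu_\psi$ is absolutely continuous with respect to Lebesgue measure. If $g^{-1}$ fails to be absolutely continuous, then the pushforward measure $g_*\,dx$ has a nontrivial singular part, so there exists a bounded Borel null set $B_s \subset J$ with $E_s := g^{-1}(B_s) \subset \R$ of positive Lebesgue measure. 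For any $\psi$ not vanishing a.e.\ on $E_s$, $\mu_\psi(B_s) = \int_{E_s}|\psi|^2\,dx > 0$ while $B_s$ is null, so $\psi \notin \mathcal{H}_{ac}(g(Q))$. The Putnam inclusion therefore forces every $\psi \in \Ran C$ to vanish a.e.\ on $E_s$.

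Suppose, for contradiction, that $|E_s| > 0$. Then $\mathbf{1}_{E_s}\cdot C\phi = 0$ for every $\phi \in L^2(\R)$; since $C$ is Hilbert--Schmidt with kernel $K(x,y) = (2\pi)^{-1/2}\frac{g(x)-g(y)}{x-y}\widehat{df}(y-x)$ (by the preceding lemma), this forces $\mathbf{1}_{E_s}(x)K(x,y) = 0$ a.e.\ on $\R^2$. The trace formula $\tr C = [f][g]/(2\pi)$ together with $C \ne 0$ forces $[f] > 0$ (otherwise $f$ is constant and $C = 0$), so $\widehat{df}(0) = [f]/\sqrt{2\pi} > 0$, and by continuity of $\widehat{df}$ there is $\delta > 0$ with $\widehat{df}(\xi) \ne 0$ on $(-\delta,\delta)$. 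On the positive-measure set $\{(x,y) : x \in E_s,\ 0 < y-x < \delta\}$ the kernel formula then forces $g(x) = g(y)$, contradicting the strict monotonicity of $g$ established above. Hence $|E_s| = 0$ and $g^{-1}$ is absolutely continuous. The main subtlety is the identification of $\mathcal{H}_{ac}(g(Q))$ with $L^2(E_s^c)$ via the Lebesgue decomposition of $g_*\,dx$; once that is in place the Hilbert--Schmidt kernel argument closes the proof.
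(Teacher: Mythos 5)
Your proof is correct, and while it opens exactly as the paper does -- Putnam's theorem applied to $H=g(Q)$, $A=-f(P)$ gives $\overline{\Ran C}\subset\mathcal{H}_{ac}(g(Q))$, and $E_{g(Q)}(B)=1_{g^{-1}(B)}(Q)$ translates operator-theoretic absolute continuity into the Luzin property for $g^{-1}$ -- the middle of the argument is genuinely different. The paper upgrades the Putnam inclusion to $\mathcal{H}_{ac}(g(Q))=L^2(\R)$: using $i[f(P-a),g(Q)]=e^{iaQ}Ce^{-iaQ}$ it gets $e^{iaQ}\Ran C\subset\mathcal{H}_{ac}(g(Q))$ for every $a$, and shows these translates are total by proving $\Re(C\psi)(x)>0$ on suitable intervals (choosing $\psi\ge 0$ supported where $\Re\,\widehat{df}>0$); purity of the absolutely continuous spectrum then immediately forces $g^{-1}$ to map null sets to null sets. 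You avoid the translation/totality step entirely and argue by contradiction: a null $B_s$ with $|g^{-1}(B_s)|>0$ would make every vector of $\Ran C$ vanish a.e.\ on $E_s=g^{-1}(B_s)$, hence $1_{E_s}(x)K(x,y)=0$ a.e.\ for the Hilbert--Schmidt kernel, which is impossible on the strip $\{x\in E_s,\ 0<y-x<\delta\}$ because $\widehat{df}$ is continuous and nonzero at $0$ while $g$ is strictly increasing. The ingredients (Putnam, the kernel lemma, nonvanishing of $\widehat{df}$ near $0$, strict monotonicity and continuity of $g$ from the preceding lemmas) are the same, but your route is somewhat more economical, whereas the paper's yields the stronger by-product that $g(Q)$ has purely absolutely continuous spectrum. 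The only step you should state explicitly is the equivalence you invoke at the outset: for a continuous strictly monotone function, failure of (local) absolute continuity of $g^{-1}$ is equivalent, via Banach--Zarecki/Luzin (N), to the existence of a (bounded, Borel) null set $B_s$ with $|g^{-1}(B_s)|>0$; this uses the continuity of $g$ proved earlier and is the same identification the paper makes in its final paragraph.
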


\begin{proof}
Let $M_g$ be multiplication by $g$, that is $M_g = g(Q)$.  From Putnam's theorem we have that $\mathcal{H}_{ac}(M_g) \supset{\Ran C}$ and since for $a \in \mathbb R$ $$i[f(P-aI),g(Q)] = e^{iaQ}Ce^{-iaQ}$$ it follows that  $\mathcal{H}_{ac}(M_g) \supset{e^{iaQ}\Ran C}$.  Suppose $(\phi, e^{iaQ}C\psi) = 0$ for all $a \in \mathbb R$ and all $\psi \in L^2(\mathbb R)$.  It follows that $\phi(x) (C\psi)(x) = 0$ a.e.  Choose $a > 0$ so that $\text{Re} \ \widehat {df}(x) > 0$ for $x \in I = (-a,a)$.   Choose a non-negative  $\psi \in C_0^{\infty}(\frac{1}{2}I)$ with $\psi(0) >0$. Then for any $x_0 \in \mathbb R$, $ \text {Re}(C\psi (x)) = (2\pi)^{-1/2} \int \frac{g(x) - g(y)}{x-y}\text{Re}(\widehat{df}(y-x))\psi(y -x_0) dy > 0$ for $x \in \frac{1}{2}I + x_0$.  Thus for all $x_0$, $\phi (x) = 0 $ for  a.e. $x \in \frac{1}{2}I + x_0$, or in other words $\phi$ is the zero vector in $L^2(\mathbb R)$.  It follows that $\mathcal{H}_{ac}(M_g) = L^2(\mathbb R)$.
 
With $H = M_g$ we have that $E_H(J)$ is multiplication by $1_{\{x:g(x) \in J\}} = 1_{g^{-1}(J)}$.  So $M_g$ being absolutely continuous as an operator means that if the Borel set $J$ has  Lebesgue measure  $0$ then $E_{M_g}(J) =0$ and thus $g^{-1}(J)$ has Lebesgue measure $0$.  Thus the strictly increasing function $g^{-1}$ is absolutely continuous.  (Note that even with the fact that $g$ is strictly increasing, the absolute continuity of $M_g$ as an operator alone does not imply that $g$ is continuous.)
\end{proof}
We remark that because of (\ref{f&g}) all the results of this section apply to $f$ as well as $g$.
\section{Theorem 1.6 - $(K_b, L^{\infty}) \xrightarrow {?}(K_b, K_{\hat b})$}

\begin{proof}[Proof of Theorem 1.6] 

We assume $g$ has the integral representation (\ref{tom})

where $\mu$ is a non-zero finite positive measure with $\int |t|e^{2\hat b t}d\mu(t) < \infty$.  The proof where  $\int |t|e^{-2\hat b t}d\mu(t) < \infty$ is similar.  We are aiming to prove that $f$ has the integral representation $$f(x) = \int \tanh b(x-t) d\nu(t) + d$$ for some finite positive measure $\nu$ and with $b\hat b =\pi/2$.  We first assume that $f$ is as above but with $d\nu(t)$ replaced with $w(t) dt $ where $w\in L^1(\R) \cap L^{\infty}(\R)$ and where $w$ is real and continuous but not necessarily non-negative.  This will be justified at the end of the proof.  Our first task is to prove that $w\ge 0$.  For
the purpose of obtaining a contradiction suppose that is not the case.  Then
the sets $E_1 = \{s: \, w(s) > 0\}$ and $E_2 = \{s: \, w(s) < 0\}$ are both 
non-empty.  For $j = 1,2,$ put $w_j = 1_{E_{j}}\cdot |w|$, where $1_F$ 
denotes the characteristic function of the set $F$.  Clearly, $w = w_1-w_2$.  A computation based on (\ref{kernel}) 
shows
that if $\psi$ is in $L^2(\R)$, then
\begin{equation} \label{windowedFT}
(\psi, i[f(P),g(Q)]\psi)
= {\hat b \over \pi} \int\int \left|\int {e^{-isx} \over \cosh \hat b(x-t)} \ \psi
(x) dx \right|^2 d\mu (t) \, w(s) ds \ge 0.
\end{equation}
or    equivalently
\begin{eqnarray*}
\lefteqn{\hspace*{-30pt}
\int\int\left|\int \frac{e^{isx}}{\cosh \hat b(x-t)} \ \psi(x)dx\right|^2
w_2(s)ds \, d\mu(t)} \\[5mm]
& \leq & \int\int\left|\int \frac{e^{isx}}{\cosh \hat b(x-t)} \  \psi(x)dx\right|^2
w_1(s)ds \, d\mu(t),
\end{eqnarray*}
for all $\psi \in L^2(\R)$ (we find it convenient to change signs in the
complex exponential in (\ref{windowedFT}), and so have replaced $\psi$ by $\bar{\psi}$).  The
last inequality can be written as
\begin{equation}\label{Ajineq}
\|A_2\psi\| \leq \|A_1\psi\|,
\end{equation}
with $A_j:  \ L^2(\R) \rightarrow L^2(\nu_j \times \mu)$  defined by
$$
(A_j \psi)(x,t) = \int^\infty_{-\infty} \frac{e^{isx}}{\cosh\hat b(x-t)}
\psi(x)dx
$$
for $j = 1$ or 2, where $d\nu_j(s) = w_j(s)ds$.  These operators are in fact bounded, as one can see
by applying (\ref{windowedFT}) with $\nu = \nu_1$ or $\nu = \nu_2$ and with $f$
correspondingly modified. We see (writing $A$ for $A_1$ or $A_2$ as the
case may be)
\begin{eqnarray*}
\frac{\hat b}{\pi} \|A\psi\|^2
& = & (\bar{\psi},i[f(P),g(Q)]\bar{\psi}) \\[3mm]
& \leq & 2 \|f\|_\infty \|g\|_\infty \|\psi\|^2.
\end{eqnarray*}
Let us turn to the inequality (\ref{Ajineq}), which can be restated as $A^*_2 A_2
\leq A^*_1 A_1$.  According to a lemma of Douglas [4], $A_2 = CA_1$ for
some contraction operator $C: \, L^2(\nu_1 \times \mu) \rightarrow L^2(\nu_2 \times
\mu)$, so $A^*_2 = A^*_1 C^*$ and Ran $A^*_2 \subset$ Ran $A^*_1$. We 
will show that this range inclusion cannot occur, and to this end look closely at the 
operators $A^*_j$.  Note
that if $h \in L^2(\nu_j \times \mu)$,
$$
(A^*_j h)(x) = \int\int \frac{e^{-isx}}{\cosh \hat b(x-t)} h(s,t) w_j(s)ds \,
d\mu(t).
$$
It follows that 
$$2^{-1}e^{\hat b x}(A_j^*h)(x) = \int \int e^{-isx}(1- e^{2\hat b(t-x)}(1 + e^{2\hat b(t-x)})^{-1})h(s,t) w_j(s) ds e^{\hat b t}d\mu(t).$$ 
Further 
$$e^{2\hat b(t-x)}(1 + e^{2\hat b(t-x)})^{-1} = 1_{(-\infty,t)}(x) - U(x-t), \  \text{with}  \\\ |U(x)|\le e^{-2\hat b |x|}.$$
We see therefore that 
$$2^{-1}e^{\hat b x}(A_j^*h)(x) = I_j(x) + II_j(x)+ III_j(x)$$
where 
\begin{align}
I_j(x)& = \int\int e^{-isx} h(s,t)w_j(s)ds e^{\hat b t}d\mu(t) \\
II_j(x) & = - \int\int e^{-isx} 1_{(-\infty,t)}(x) h(s,t) w_j(s) ds e^{\hat b t}d\mu(t)\\
III_j(x)& = \int\int e^{-isx}U(x-t) h(s,t) w_j(s)ds e^{\hat b t} d\mu(t).
\end{align}
First consider $I_j$.  We define 

$$q_j(s) = \big(\int h(s,t)e^{\hat b t}d\mu(t)\big) w_j(s).$$  Then $q_j \in L^2(\R)\cap L^1(\R)$.  Indeed 
\begin{align*}
\int |q_j(s)|^2ds &= \int \Big |\int h(s,t)e^{\hat b t }d\mu(t) \Big |^2 w_j(s)^2 ds \\
 &\le ||w||_{\infty} \int e^{2\hat b t} d\mu(t) \int\int |h(s,t)|^2w_j(s)ds d\mu(t) < \infty 
\end{align*}
and 
\begin{align*}
&\int |q_j(s)|ds \le  \\
 &\int \int |h(s,t)|w_j(s)^{1/2} e^{\hat b t}w_j(s)^{1/2} d\mu(t) ds\\
 & \le \left(\int \int |h(s,t)|^2 w_j(s)d\mu(t)ds\right)^{1/2} \left(\int \int e^{2\hat b t } w_j(s)d\mu(t) ds \right)^{1/2}\\
& = ||w_j||_1^{1/2} \left (\int e^{2\hat b t} d\mu(t) \right )^{1/2} \left(\int \int |h(s,t)|^2 w_j(s) ds d\mu(t)\right)^{1/2} < \infty.
\end{align*}
Here we have used the Schwarz inequality and the facts that $w_j \in L^1(\R) \cap L^{\infty}(\R), \\ \int e^{2\hat b t} d\mu(t) < \infty,$ and $ h \in L^2 (\nu_j \times \mu)$.  We see that $I_j = \sqrt{2\pi} \hat {q_j}$ which implies $I_j \in L^2(\R)$ as well as in $C_0(\R)$, the space of continuous functions on $\R$ vanishing at infinity.

We turn to $II_j$.  We first write  $g(t,x) = \int e^{-isx}h(s,t)w_j(s) ds$ and note that by the Plancherel theorem 
\begin{equation}\label{plancherel}
\int |g(t,x)|^2 dx d\mu(t) \le 2\pi||w_j||_{\infty} ||h||_{L^2(\nu_j \times \mu)}^2
\end{equation}
We have 
\begin{align*}
II_j(x)& = - \int_{(-\infty, 0)}1_{(-\infty,t)}(x) g(t,x)e^{\hat b t}d\mu(t)\\
& - \int_{[0,\infty )}1_{(-\infty,0)}(x)g(t,x)e^{\hat b t}d\mu(t)\\
& - \int_{[0,\infty)}1_{[0,t)}(x)g(t,x)e^{\hat b t}d\mu(t)
\end{align*}
Clearly from (\ref{plancherel}) and the Schwarz inequality each of these three terms is in $L^2(\R)$.  Evidently the first two terms vanish for $x>0$ so that their sum can be written as a Fourier transform, $\hat{v}_j$, with $v_j \in H^2_-$, the Hardy space of the lower half plane \cite{Duren}.  (Here we follow the convention of identifying elements of $H^2_-$, which are analytic in the open lower half plane, with their boundary value functions which lie in $L^2(\R)$.)  Using the condition $\int_{[0,\infty)} te^{2\hat b t}d\mu(t) < \infty$ and the Schwarz inequality we see that the third piece of $II_j$ is also in $L^1(\R)$, thus of the form $\hat p_j$ with $p_j \in C_0(\R) \cap L^2(\R)$.  In summary, $II_j = \hat v_j + \hat p_j$ with $v_j \in H^2_-$ and $p_j \in C_0(\R) \cap L^2(\R)$.  Finally $III_j \in L^2(\R)$ by essentially the same argument that shows each term of $II_j \in L^2(\R)$.  Moreover $III_j\in L^1(\R)$ which can be seen as follows.  On rewriting $III_j(x)$ in terms of $g(t,x)$ defined above we have

\begin{align*}
|III_j(x)|^2& \le \int |U(x-t)|^2 e^{2\hat bt} d\mu(t)\int |g(t,x)|^2 d\mu(t)\\
\text{and thus}&\\
\int |III_j(x)| dx & \le ||U||_2\left (2\pi ||w_j||_{\infty}||h||_{L^2(\nu_j \times \mu)}^2 \int e^{2\hat b t}d\mu(t)\right)^{1/2}
\end{align*}
where we have used (\ref{plancherel}). Thus $III_j =\hat u_j$ where $u_j \in C_0(\R) \cap L^2(\R)$.

On putting the above together we have 
\begin{equation}\label{ebxeqn}
2^{-1} e^{\hat b x} (A_j^*h)(x) = (\hat{q}_j+ \hat{v}_j + \hat {p}_j+ \hat{u}_j)(x).
\end{equation}
To obtain a contradiction we now use the fact that 
\begin{equation} \label{rangeinclusion}
\text{Ran} A_2^* \subset \text{Ran} A_1^*.  
\end{equation}
Choose $h_2 \in L^2(\nu_2\times \mu)$ as $h_2(s,t) = 1_F(s)$, where $F$ is a non-degenerate compact interval contained in the (assumed) non-empty open subset $E_2$. By (\ref{rangeinclusion}), there exists $h_1 \in L^2(\nu_1\times\mu)$ with 

\begin{equation} \label{equality}
A_2^*h_2 =A_1^* h_1.
\end{equation}
For $j=1$ or $2$, associate $q_j,v_j,p_j,$ and $u_j$ to $h_j$ as they are associated to $h$ in (\ref{ebxeqn}).  By (\ref{equality}) we have 
$$\hat{q}_2+ \hat{v}_2 + \hat{p}_2 + \hat{u}_2 = \hat{q}_1+ \hat{v}_1 + \hat{p}_1 + \hat{u}_1,$$
or applying the inverse Fourier transform and rearranging
\begin{equation}\label{v1-v2}
q_2 -q_1 + p_2 -p_1 + u_2-u_1 = v_1 -v_2.
\end{equation}
Clearly $$q_2(s)  = 1_F(s) w_2(s)\int e^{\hat b t} d\mu(t) $$
has non-trivial jump discontinuities at the end points of $F$ since $w_2$ is positive and continuous on $E_2$.  Let $J \subset E_2$ be a compact interval containing one  of the endpoints of $F, x_0$, in its interior.  Since $q_1 \equiv 0$ on $E_2$ and $p_1,p_2,u_1,u_2$ are continuous, $v_1 - v_2 $ has a jump discontinuity at $x_0$ .  However this will contradict a theorem of Lindel\"of, since $v_1 - v_2 \in H_-^2$.  Indeed the Poisson integral provides the analytic extension of $v_1-v_2$ into the lower half plane.  The left side of (\ref{v1-v2}), and thus  $v_1-v_2$ are bounded in a neighborhood of $J$ in $\R$, so this analytic extension is bounded in the open half disk $B$ in the lower half plane with the interior of $J$ as its diameter.  It follows that if $\phi$ is a conformal map of the open unit disk $\mathbb{D}$ onto $B$ (such a $\phi$ must extend to a homeomorphism of $\bar{\mathbb{D}}$ onto $\bar{B}$) then $(v_1-v_2)\circ \phi$ is a bounded analytic function in $\mathbb{D}$ whose boundary value function has a jump discontinuity at $\phi^{-1}(x_0)$.  This contradicts Lindel\"of's theorem \cite {L} (see also \cite {G}) and shows that indeed $w\ge 0$.

Finally, we consider general $f$ in $L^\infty(\R)$.  Suppose our hypothesis
$i[f(P),g(Q)] \geq 0$ is in effect.  Let
$$
u_r(x) = {1 \over r\pi} \left({\sin rx \over x}\right)^2,
$$
an approximate identity on $\R$ as $r \rightarrow \infty$.  Since $u_r\ge 0$ it follows that \\ $i[f*u_r(P),g(Q)] \geq 0$.  Let $f_r = f*u_r$.  Since $f_r$ is smooth, it also follows from Theorem \ref{monotone} that
$(f_r)' \geq 0$.  Since $f_r$ is bounded, we then see $(f_r)'$ is
in $L^1(\R)$.  By direct computation $\widehat{u_r}$ has compact support, and so
$\widehat{f_r}$ (taken in the sense of tempered distributions) does as well.
The same must be true for the Fourier transform of $(f_r)'$ which is a continuous function.

Now we let $\phi(x) = \tanh bx$ and note

\begin{equation} \label{FT}
\widehat{\phi'}(y) = \hat b \sqrt{{2 \over \pi}} {y \over \sinh (\hat b y)}.
\end{equation}
Let $\sigma$ be a real even function in the Schwartz space agreeing with
$1/\widehat{\phi'}$ on the support of $\widehat{f_r'}$.  Then
$\widehat{f_r'} = \widehat{\phi'} \cdot \widehat{f_r'}\cdot \sigma,$
whence
\begin{equation}\label{2convolutions}
f_r' = {1 \over 2\pi} \phi' *f_r'*\widecheck{\sigma}.
\end{equation}
We put $w_r = {1 \over 2\pi} f_r'*\widecheck{\sigma}$,
a convolution of $L^1(\R)$-functions and so in $L^1(\R)$.
In particular,
\begin{eqnarray*}
\|w_r\|_1 & \leq & {1 \over 2\pi} \left\{\int^\infty_{-\infty}
(f_r)'(x)dx\right\}\|\ \widecheck{\sigma}\|_1 \\[4mm]
& = & {1 \over 2\pi}
\left\{(f*u_r)(\infty)-(f*u_r)(-\infty)\right\}\|\widecheck{\sigma}\|_1
\\[4mm]
& \leq & {1 \over \pi} \|f\|_\infty \|\widecheck{\sigma}\|_1.
\end{eqnarray*}
Further, $w_r$ is continuous and tends to zero at $\pm \infty$, hence
$w_r \in L^\infty(\R)$. On integrating equation (\ref{2convolutions}) we have
\begin{equation}\label{tanhrepforfr}
f_r(x) = \int^\infty_{-\infty} \tanh b(x-s) w_r(s)ds + d_r,
\end{equation}
where $d_r$ is a real constant.  Since $i[f*u_r(P),g(Q)] \geq 0$,
our previous results show that $w_r \geq 0$ a.e.
On letting $x \rightarrow \pm \infty$ in (\ref{tanhrepforfr}) we find by the dominated
convergence theorem that
$$
f_r(\infty) +f_r(-\infty) = 2d_r,
$$
and 
$$f_r(\infty) - f_r(-\infty) = 2||w_r||_1$$ \\
so that $|d_r| \leq \|f\|_\infty$ for $r > 0$. Since $f_r(\pm \infty) = f(\pm \infty)$ we have $\int w_r(s) ds = [f]/2$.   Now consider $w_r(s)ds$
as a measure on the two-point compactification $[-\infty,\infty]$ of
$\R$. Then there exists a finite positive measure $\nu$ on $\R$
and $\epsilon_1, \epsilon_2 \geq 0$, a real number $d$ and a sequence
$r_n \rightarrow \infty$ such that
$$
w_{r_{n}}(s)ds \rightarrow \nu + \epsilon_1 \delta_{-\infty} + \epsilon_2
\delta_{+\infty},
$$
in the weak-$*$ topology and $d_{r_{n}}\rightarrow d$ as $n \rightarrow
\infty$; here $\delta_{\pm \infty}$ denotes the unit point mass at $\pm
\infty$.  Then taking a limit along $r_n$ in (\ref{tanhrepforfr}) yields
\begin{eqnarray*}
f(x) & = & \lim_{n \rightarrow \infty} f_{r_{n}}(x) \\[4mm]
& = & \int^\infty_{-\infty} \tanh b(x-s)d\nu(s) +\epsilon_1 - \epsilon_2
+ d
\end{eqnarray*}
almost everywhere on $\R$.  It follows that $f \in K_{\hat b}$, and
the proof is complete.
\bigskip

\end{proof}

\section{Finite rank positive commutators}

Considering that the positive commutator is trace class it is interesting to consider the case where the positive commutator has finite rank.  The next lemma shows that, in particular, in considering the rank one case, Kato's assumption that $f$ and $g$ are absolutely continuous with $L^1(\R)$ derivatives can be proved. 
\begin{lemma}
Suppose the commutator  $C =i[f(P),g(Q)]$ is positive, non-zero, and has finite rank.  Then $f$ and $g$ can be taken differentiable with continuous derivatives.  These derivatives are everywhere non-zero.
\end{lemma}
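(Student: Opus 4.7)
The plan is to combine the finite-rank positive factorization of $C$ with the explicit kernel formula for $i[f(P),g(Q)]$ established earlier in order to upgrade the $L^2$ eigenfunctions of $C$ to continuous functions, and then extract the differentiability of $g$ by letting $y\to x$ in the kernel representation. By Theorem \ref{monotone} and Theorem 1.5 I may pass to versions where $f$ and $g$ are continuous, strictly increasing, and bounded, so that $[f]$ and $[g]$ are strictly positive. The kernel
\[
K(x,y) = \frac{1}{\sqrt{2\pi}}\,\frac{g(x)-g(y)}{x-y}\,\widehat{df}(y-x)
\]
is then jointly continuous off the diagonal, since $g$ and $\widehat{df}$ are continuous. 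Because $C$ is positive of rank $n<\infty$, there exist linearly independent $\psi_1,\dots,\psi_n\in L^2(\R)$ with $K(x,y)=\sum_k\psi_k(x)\overline{\psi_k(y)}$ almost everywhere.

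The first main step will be to replace the $\psi_k$ by continuous representatives. Since the $\psi_k$ are linearly independent in $L^2$, the map $y\mapsto(\overline{\psi_k(y)})_k$ spans $\C^n$ on a set of positive measure; I will choose points $y_1,\dots,y_n$ in this set (also lying in the full-measure set where the identity $K(\cdot,y_j)=\sum_k\psi_k(\cdot)\overline{\psi_k(y_j)}$ holds a.e.\ in $x$) so that the matrix $V:=(\overline{\psi_k(y_j)})_{k,j}$ is invertible, and invert to get
\[
\psi_k(x) = \sum_{j=1}^{n}(V^{-1})_{kj}\,K(x,y_j).
\]
By the explicit kernel formula the right-hand side is continuous in $x$ on $\R\setminus\{y_1,\dots,y_n\}$. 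To obtain continuity at any given point $x_0$, I repeat the construction with an alternative choice of $y_j$'s avoiding $x_0$; the two resulting continuous representatives agree almost everywhere, hence agree wherever both are defined, and so glue to a single continuous representative of $\psi_k$ on all of $\R$.

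Once the $\psi_k$ are continuous, the sum $\sum_k\psi_k(x)\overline{\psi_k(y)}$ extends $K$ continuously across the diagonal, and $\widehat{df}(y-x)>0$ in a neighborhood of the diagonal. Rearranging the kernel formula there gives
\[
\frac{g(x)-g(y)}{x-y} = \sqrt{2\pi}\,\frac{K(x,y)}{\widehat{df}(y-x)},
\]
and the right side has the continuous pointwise limit $(2\pi/[f])\sum_k|\psi_k(x)|^2$ as $y\to x$. Hence $g$ is differentiable everywhere with continuous derivative $g'(x)=(2\pi/[f])\sum_k|\psi_k(x)|^2$, and the Fourier-conjugation identity (\ref{f&g}) yields the same conclusion for $f$.

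Finally, if $g'(x_0)=0$ then every $\psi_k(x_0)=0$, so $K(x_0,y)=0$ for all $y$; since $g$ is strictly increasing, the kernel formula forces $\widehat{df}(y-x_0)=0$ for every $y\ne x_0$, and by continuity $\widehat{df}\equiv 0$. But then $f$ is constant and $C=0$, contradicting $C\ne 0$. The symmetric argument gives $f'\ne 0$ everywhere. The principal technical obstacle is the matrix-inversion step producing continuous representatives for the $\psi_k$; once this is in hand, the rest is a direct passage to the limit along the diagonal together with a short contradiction.
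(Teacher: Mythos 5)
Your proposal is correct and follows essentially the same route as the paper: identify the explicit kernel with the finite-rank expansion, solve a matrix equation to express the rank-one components through slices of the kernel (hence obtain continuous versions), read off $g'(x)=(2\pi/[f])\sum_k|\psi_k(x)|^2$ from the diagonal limit, use strict monotonicity plus $\widehat{df}(0)\neq 0$ for non-vanishing, and transfer to $f$ by Fourier conjugation. The only difference is that you invert point evaluations $K(\cdot,y_j)$ at almost-everywhere chosen points (with the Fubini/gluing justification, and noting $\widehat{df}$ is merely non-vanishing, not positive, near $0$), whereas the paper smears against $C_0^{\infty}$ functions supported away from a small interval; this is an implementation detail, not a different argument.
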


\begin{proof}

We can assume that $f$ and $g$ are increasing.
The kernel of the commutator  $C =i[f(P),g(Q)]$, given by
$$K(x,y) = \frac{1}{\sqrt{2\pi}} \frac{g(x) - g(y)}{x-y} \widehat {df}(y-x)$$ has the form
\begin{equation}\label{finiterankformula}
K(x,y) = \sum_{j=1}^N \phi_j(x)\overline{\phi_j}(y)
\end{equation}
where $\{\phi_1, \phi_2, \dots, \phi_N\}$ is a linearly independent set of $L^2(\R)$ functions. Let $x_0 \in \R$ and $I$ a small open interval centered at $x_0$.  Denote the complement of $I$ as $J$ and the restriction of $\phi_i$ to $J$ as $\phi_{i,J}$.  The $N \times N$ matrix with elements $(\phi_{i,J},\phi_{k,J})$ is continuous in $|I|$ so we can choose a small such interval $I$ such that the determinant of this matrix is non-zero and thus the vectors $\phi_{1,J}, \phi_{2,J}, \dots, \phi_{N,J}$ are linearly independent.  We can thus find $N$ functions $\psi_i \in C_0^{\infty}(J^{\mathrm o})$ so that the matrix with matrix elements $(\phi_{i,J}, \psi_k) = (\phi_i,\psi_k)$ has non-zero determinant.  
If we define $\gamma_j(x)$ for all $x \ \in \R$ by
$$\gamma_j(x) = \sum_i (\phi_i,\psi_j) \phi_i(x),$$
the $\phi_i$ are thus linear combinations of the $\gamma_j$ and at least for $x$ outside the support of the $\psi_j$ we have 
$$\gamma_j(x) = \int K(x,y)\psi_j(y) dy .$$  
It follows from the explicit form of $K$ that the $\phi_i$ can be taken continuous in a neighborhood of $\overline I$.  Since we can cover $\R$ with a countable number of such intervals we can assume all the $\phi_i$ are continuous functions.  In addition since $\widehat {df}(u)$ is continuous and $\widehat {df}(0) = [f]/\sqrt{2\pi}$ is non-zero, we can see from the explicit form of $K$ that $g$ is $C^1$ with 
$$ g'(x) = 2\pi \sum_j|\phi_j(x)|^2 /[f].$$
If $\phi_j(x) = 0$ for all $j$ this contradicts the fact that $g$ is strictly increasing and that $\widehat {df}(y-x)$ is non-zero for $|y-x|$ small.  Thus $g'(x)$ is positive for all $x$.

 Since $\mathcal{F}f(P)\mathcal{F}^{-1} = f(Q)$ and $\mathcal{F}g(Q)\mathcal{F}^{-1} = g(-P)$, $i[-g(-P), f(Q)]$ has the integral kernel 

$$\tilde{K}(\xi,\eta) =\sum_j \widehat {\phi}_j(\xi)\overline{\widehat {\phi}_j}(\eta). $$
It follows that what we have proved for $g$ is also true for $f$.  In particular $f \in C^1(\R)$ with 
\begin{equation}\label{fprime}
f'(\xi) =  2\pi \sum_j|\widehat{\phi_j}(\xi)|^2/[g].
\end{equation}
\end{proof}
We can actually prove much more:

\begin{thm}
Suppose the commutator $C = i[f(P),g(Q)]$ is positive, non-zero, and finite rank. Then the functions $f'$ and $g'$  are in $\mathcal{S}(\R)$.
\end{thm}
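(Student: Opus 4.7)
From the preceding lemma, when $C$ has rank $N$ the range of $C$ is spanned by continuous $L^2$ functions $\phi_1,\ldots,\phi_N$ with
\[ g'(x) = \frac{2\pi}{[f]}\sum_{j=1}^N |\phi_j(x)|^2, \qquad f'(\xi) = \frac{2\pi}{[g]}\sum_{j=1}^N |\hat\phi_j(\xi)|^2. \]
Since $\mathcal{S}(\R)$ is closed under finite sums and products, the theorem reduces to showing $\phi_j \in \mathcal{S}(\R)$ for each $j$; the identity (\ref{f&g}) then gives the Schwartz property of $f'$ by symmetry in the Fourier picture.

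The plan is a double bootstrap in smoothness and polynomial decay. For smoothness, I will use the eigenvalue equation
\[ \lambda_j \phi_j(x) = \int K(x,y)\phi_j(y)\,dy, \qquad K(x,y)=\frac{g(x)-g(y)}{x-y}\cdot\frac{\widehat{f'}(y-x)}{\sqrt{2\pi}}, \]
combined with the explicit kernel, to differentiate $\phi_j$ under the integral once $g$ and $\widehat{f'}$ are sufficiently regular. Starting from $g,f\in C^1$ (given by the previous lemma), each cycle of the bootstrap---differentiate $\phi_j$ via the integral formula, then upgrade $g'$ and $f'$ through the identities above---gains one derivative. The removable singularity of $K$ at $x = y$ requires $g$ (or $f$) to be $C^{k+1}$ locally, which is supplied by the symmetric step via (\ref{f&g}). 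Iterating yields $\phi_j, \hat\phi_j \in C^\infty$.

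For decay, the commutator $[Q,C]$ has the bounded kernel $(g(x)-g(y))\widehat{f'}(y-x)/\sqrt{2\pi}$, read off directly from $K$. More generally $[Q^n,C]$ has kernel $(x^n-y^n)K(x,y)$, and the finite-rank structure forces the range of these iterated commutators to lie in finite-dimensional subspaces built from $\{x^a\phi_j\}_{a \le n}$ together with their $C$-images. Inductive estimates using these bounded kernels will produce $x^n \phi_j \in L^2$ for all $n$. The Fourier-dual analysis, in which $P$ and $Q$ swap roles via (\ref{f&g}), then gives $\xi^n \hat\phi_j \in L^2$ for all $n$, equivalently $\phi_j \in H^n$ for all $n$. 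Combining polynomial decay with smoothness gives $\phi_j \in \mathcal{S}(\R)$, completing the proof.

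The hard part will be closing the decay bootstrap rigorously. The identity $[Q,C] = -[f'(P), g(Q)]$ is only formal unless $f' \in L^\infty$, which is not immediate from the hypotheses (we know $f'$ is continuous and in $L^1$, but not necessarily bounded). The way around is to avoid the operator identity and work directly with the bounded integral kernel of $[Q,C]$ (and of $[Q^n, C]$), using the finite-rank structure to convert these kernel bounds into $L^2$-bounds on $Q^n \phi_j$ step by step. A parallel delicacy appears in the smoothness step in handling the diagonal singularity of $K$, which is why smoothness for $g$ and for $f$ must be improved in tandem rather than independently.
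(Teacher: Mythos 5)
Your opening reduction is the same as the paper's: write $K(x,y)=\sum_{j=1}^N\phi_j(x)\overline{\phi_j}(y)$, use $g'=\frac{2\pi}{[f]}\sum_j|\phi_j|^2$ and $f'=\frac{2\pi}{[g]}\sum_j|\widehat\phi_j|^2$, and transfer between $f$ and $g$ by Fourier conjugation. But both engines of your bootstrap have genuine gaps. For smoothness: differentiating the eigenvalue equation $\lambda_j\phi_j(x)=\int K(x,y)\phi_j(y)\,dy$ in $x$ puts derivatives on $\widehat{f'}(y-x)$, and differentiability of $\widehat{f'}$ is equivalent to moment conditions $\int|\xi|^k f'(\xi)\,d\xi<\infty$, i.e.\ to decay of $f'$ --- essentially the conclusion of the theorem. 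The ``symmetric step'' supplies smoothness of $f'$, which is not the same as smoothness of $\widehat{f'}$, so it does not rescue this. If instead you change variables $y=x+u$ so that no derivative hits $\widehat{f'}(u)$, the derivatives land on $\phi_j(x+u)$ itself and the argument is circular. The paper breaks this circle with a device missing from your plan: it tests $K$ against a \emph{fixed} family of smooth compactly supported functions $\psi_j$ (mollified point masses at points where the vectors $(\phi_1(y_j),\dots,\phi_N(y_j))$ are independent), so that $\gamma_j(x)=\int K(x,y)\psi_j(y)\,dy=\sum_k(\phi_k,\psi_j)\phi_k(x)$ still spans the same $N$-dimensional space; for $x$ outside the union $I$ of the supports, all $x$-derivatives fall on $g$ and on $\psi_j$, never on $\widehat{f'}$ or on $\phi_j$, and the separation of $x$ from the compact supports yields $|D^{m+1}\gamma_j(x)|\le C(1+|x|)^{-1}$, hence $L^2\cap L^\infty$ bounds on $I^c$; a second family of base points, with closed neighborhood disjoint from $\overline I$, covers the rest of $\R$. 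The induction then runs through $g$ via $g'=c\sum_j|\phi_j|^2$ much as you envisage, but only because of this localization.

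For decay, your route via $[Q^n,C]$ does not close. You rightly flag that $[Q,C]=-[f'(P),g(Q)]$ is formal, but the proposed substitute --- that the kernel $(x-y)K(x,y)=(g(x)-g(y))\widehat{df}(y-x)/\sqrt{2\pi}$ is bounded and that the finite-rank structure converts this into $x\phi_j\in L^2$ --- fails quantitatively: pairing that kernel against compactly supported test functions produces functions bounded by $2\|g\|_\infty\|\widehat{df}\|_\infty\|\psi\|_1$ with no rate of decay at infinity, so one only gets $x\gamma_j\in L^\infty+L^2$, not $x\gamma_j\in L^2$; boundedness of a kernel is far too weak to give square-integrability of $Q^n\phi_j$, and nothing in the proposal supplies the missing decay. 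The paper never argues this way: the decay of $\phi_j$ is obtained as \emph{smoothness of} $\widehat\phi_j$, by running the identical localized bootstrap on the kernel $\tilde K(\xi,\eta)=\sum_j\widehat\phi_j(\xi)\overline{\widehat\phi_j}(\eta)$ of $i[-g(-P),f(Q)]$, after which $\|D^m\phi_j\|_2+\|D^m\widehat\phi_j\|_2<\infty$ for all $m$ gives $\phi_j\in\mathcal{S}(\R)$ and hence $f',g'\in\mathcal{S}(\R)$. As it stands, your outline identifies the right target but lacks the two ideas (the $\gamma_j$/$\psi_j$ localization with a second covering family, and obtaining decay on the Fourier side rather than from $[Q^n,C]$) that make the argument work.
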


\begin{proof}
We write $$K(x,y) = \sum_{j=1}^N \phi_j(x)\overline{\phi_j}(y)$$ where the functions $\phi_j$ are in $L^2(\R)$ and linearly independent.  We have shown that these functions can be assumed continuous.  Let $v(y) = <\phi_1(y), \cdots, \phi_N(y)> \ \in \C^N$.  The span of $\{v(y): y \in \R\}$ is $\C^N$ for otherwise there is a non-zero vector $c \in \C^N$ perpendicular to this span. This implies $\sum \overline c_j \phi_j(y) = 0$ contradicting the fact that $C$ has rank $N$. Choose $N$ points $y_j$ so that $v(y_1), \cdots, v(y_N)$ are linearly independent.  Let $\zeta \in C_0^{\infty}(-1,1)$ be a real non-negative function with $\int \zeta(y)dy = 1$.  Let $\zeta_t(y) = t^{-1}\zeta(y/t)$ and $\psi_j (y)= \zeta_t(y_j-y)$.  Then $v_j := \zeta_t* v(y_j) = \int \psi_j(y) v(y) dy =  \\ <\int \psi_j(y) \phi_1(y)dy,\cdots, \int \psi_j(y)\phi_N(y)dy>$.  We choose $t>0$ small enough so that the $v_j$ span $\C^N$. Let 
$$\gamma_j(x) = \sum_k \phi_k(x) (\phi_k, \psi_j) = (v_j, v(x)).$$  The $\phi_j(x)$ are linear combinations of the $\gamma_j(x)$.  In fact a standard linear algebra calculation gives
$$v(x) = \sum_{j,l} A^{-1}_{j l} \gamma_l(x) v_j, $$ where $A_{i j} = (v_i,v_j) $

We will show that $||D^m\phi_j||_2 + ||D^m\widehat{\phi_j}||_2 < \infty$ for all $m$..  We proceed to show $D^m \phi_j \in L^2(\R) \cap L^{\infty}(\R)$ by induction. For $x \notin I: = \cup_j (y_j -t, y_j + t) $ we have 
$$\gamma_j(x) = \int_{\R}\frac{g(y) - g(x)}{y-x} \widehat{f'}(y-x) \psi_j(y)dy$$ or changing variables

$$\gamma_j(x) = \int_{\R}\frac{g(x+u) - g(x)}{u} \widehat{f'}(u) \psi_j(x+u) du.$$
We note that for all $i$, $\psi_i= 0 $ in a neighborhood of $I^c$ so that \\ $|\gamma_j(x)| \le c \int |\psi_j(y)|/|x-y|dy$ and thus $\phi_j(x)$ and thus $g'(x)$ are bounded in a neighborhood of $I^c$.  Then 

$$\gamma_j'(x) = \int_{\R}[\frac{g'(x+u) - g'(x)}{u}\psi_j(x+u) + \frac{g(x+u) - g(x)}{u}\psi_j'(x+u) ] \widehat{f'}(u) du$$  or

$$|\gamma_j'(x)| \le c\int \frac{|g'(y) - g'(x)|}{|y-x|}|\psi_j(y)|dy+ c\int \frac{|g(y) - g(x)|}{|y-x|}|\psi'_j(y)|dy$$

$$\le C(|x|+ 1)^{-1}$$ for $x$ in a neighborhood of $I^c$.   
It follows that $\phi_j'\in L^2(I^c)\cap L^{\infty}(I^c)$.

We now choose a different set of $y_j$, call them $\tilde y_j$,  and replace $I$ with \\ $J = \cup_j (\tilde y_j -t, \tilde y_j + t) $.  If for example $0 < |\tilde y_j - y_j|$ is small enough, the $v(\tilde y_j)$ will be linearly independent and if $t$ is small enough the $v_j$ will also be linearly independent and $\overline I \cap \overline J = \emptyset$.

Using the same technique we find $\phi_j'\in L^2(J^c)\cap L^{\infty}(J^c)$. Thus we finally conclude the first step:  The $\phi_j$ are differentiable and  $\phi_j, \phi'_j \in  L^2(\R)\cap L^{\infty}(\R)$.  We now suppose $D^n\phi_j \in  L^2(\R)\cap L^{\infty}(\R)$ for $1\le n\le m$ and $1 \le j \le N$.  Using $g'(x) = 2\pi(\sum_j|\phi_j(x)|^2)/[f] $, $c = 2\pi/[f]$, we calculate  $$D^{n+1}g(x) =c \sum_{j,k} {n \choose k} [D^k\phi_j(x) D^{n-k}\overline \phi_j(x)].$$  This gives $D^{n+1}g \in L^1(\R)\cap L^{\infty}(\R)$ for $1 \le n \le m$.  Again using
$$\gamma_j(x) = \int_{\R}\frac{g(x+u) - g(x)}{u} \widehat{f'}(u) \psi_j(x+u) du,$$ we see that $\gamma_j$ is $m + 1 $ times continuously differentiable in a neighborhood of $I^c$ with 
$$D^{m+1}\gamma_j(x)= \sum_k{m+1 \choose k}\int \frac{D^kg(x+u) - D^kg(x)}{u} \widehat{f'}(u)D^{m+1-k} \psi_j(x+u) du$$ which gives
$$|D^{m+1}\gamma_j(x)| \le C\sum_k \int \frac{|D^kg(x+u) - D^kg(x)|}{|u|} |D^{m+1-k} \psi_j(x+u) |du$$
$$= C\sum_k \int \frac{|D^kg(y) - D^kg(x)|}{|y-x|} |D^{m+1-k} \psi_j(y) |dy \le C'(|x| + 1)^{-1}$$ in a neighborhood of  $I^c$.  So $D^{m+1}\gamma_j$ and hence the $D^{m+1}\phi_k$ are in $L^2(I^c)\cap L^{\infty}(I^c)$.  Using $J$ instead of $I$ we complete the induction step to learn  $D^{m+1}\phi_j \in L^2(\R)\cap L^{\infty}(\R)$.

As before $\mathcal{F}f(P)\mathcal{F}^{-1} = f(Q)$ and $\mathcal{F}g(Q)\mathcal{F}^{-1} = g(-P)$, so $i[-g(-P), f(Q)]$ has the integral kernel 

$$\tilde{K}(\xi,\eta) =\sum_j \widehat {\phi}_j(\xi)\overline{\widehat {\phi}_j}(\eta). $$
It follows that what we have proved for $\phi_j$ is also true for $\widehat \phi_j$.  This completes the proof.

\end{proof}

In case the commutator $C$ has finite rank with kernel as in (\ref{finiterankformula}) an important part of the kernel can be expressed entirely in terms of the $\phi_j$'s.  In particular we have

 $$\widehat {f'}(y-x) = (2\pi/[g])\sum_j \int \phi_j(x+u) \overline {\phi_j}(y+u)du.$$  There is another way of seeing this but let us calculate directly.  Using the formula (\ref{fprime}) for $f'$ we should calculate
$$ \int e^{-i(y-x)\xi}|\widehat \phi(\xi)|^2d\xi /\sqrt{2\pi} = \int \int\int e^{i(x-y)\xi}\phi(x_1)e^{-i\xi x_1} e^{i\xi x_2} \overline \phi(x_2)dx_1dx_2d\xi(2\pi)^{-3/2}$$
$$= \int \phi(x-y +x_2)\overline \phi(x_2)/\sqrt{2\pi} $$
$$= \int \phi(x+u) \overline {\phi}(y+u)du/\sqrt{2\pi} .$$

\section{Operator monotone functions}
The following proposition gives another connection between operator monotone functions and positive commutators. 
\begin{proposition}
Suppose $f$ and $g$ are bounded measurable functions with \\
 $i[f(P), g(Q)] \ge 0$.  Suppose $F$ is an operator monotone function on an open interval containing the range of $f$.  Then $i[F(f(P)),g(Q)] \ge 0$. 
\end{proposition}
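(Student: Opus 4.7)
The plan is to use the Nevanlinna--Loewner integral representation of the operator monotone function $F$ and compute the commutator term by term. By Loewner's theorem as stated above, $F$ extends analytically to $\C \setminus \big((-\infty,a] \cup [b,\infty)\big)$ (where $(a,b)$ denotes the open interval of operator monotonicity containing the range of $f$) as a Nevanlinna function with real boundary values on $(a,b)$. It therefore admits the standard representation
\begin{equation*}
F(x) = \alpha + \beta x + \int_{\R \setminus (a,b)} \left(\frac{1}{\lambda - x} - \frac{\lambda}{1+\lambda^2}\right) d\mu(\lambda), \qquad x \in (a,b),
\end{equation*}
with $\alpha \in \R$, $\beta \ge 0$, and $\mu$ a positive Borel measure on $\R \setminus (a,b)$ satisfying $\int (1+\lambda^2)^{-1} d\mu(\lambda) < \infty$. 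Since $f \in L^\infty(\R)$ has essential range contained in a compact subinterval $[c,d] \subset (a,b)$, the resolvent $R(\lambda) := (\lambda - f(P))^{-1}$ is bounded and self-adjoint for every $\lambda \in \R \setminus (a,b)$, with $\|R(\lambda)\|$ uniformly $O(1/(1+|\lambda|))$. Applying the functional calculus yields the norm-convergent identity
\begin{equation*}
F(f(P)) = \alpha I + \beta f(P) + \int_{\R\setminus(a,b)} \Big(R(\lambda) - \tfrac{\lambda}{1+\lambda^2} I\Big) \, d\mu(\lambda).
\end{equation*}

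Next I would commute each piece with $g(Q)$. Constants and the scalar multiples of $I$ drop out; $\beta f(P)$ contributes $\beta \cdot i[f(P), g(Q)] \ge 0$ by hypothesis, since $\beta \ge 0$. For each $\lambda$ in the support of $\mu$, the elementary identity $[A^{-1},B] = -A^{-1}[A,B]A^{-1}$ applied with $A = \lambda - f(P)$ gives
\begin{equation*}
i[R(\lambda), g(Q)] = R(\lambda)\, i[f(P), g(Q)] \, R(\lambda),
\end{equation*}
which is of the form $R(\lambda)^* C R(\lambda)$ with $C = i[f(P),g(Q)] \ge 0$ and $R(\lambda) = R(\lambda)^*$, hence non-negative. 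Summing the contributions one obtains
\begin{equation*}
i[F(f(P)), g(Q)] = \beta\, i[f(P), g(Q)] + \int_{\R\setminus(a,b)} R(\lambda)\, i[f(P), g(Q)]\, R(\lambda)\, d\mu(\lambda),
\end{equation*}
which is manifestly non-negative as an integral of positive operators against a positive measure.

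The only real technical point, and what I expect to be the one genuine obstacle, is the justification for interchanging the commutator with $g(Q)$ and the $\mu$-integral defining $F(f(P))$. This should be routine: the uniform estimate $\|R(\lambda)\, C\, R(\lambda)\| \le \|C\| \, \|R(\lambda)\|^2 \lesssim (1+\lambda^2)^{-1} \|C\|$ on $\R \setminus (a,b)$, combined with $\int (1+\lambda^2)^{-1} d\mu(\lambda) < \infty$, provides the integrable majorant needed to pass the commutator through a dominated-convergence / Riemann-sum argument. Once this interchange is granted, the displayed identity above delivers the desired positivity.
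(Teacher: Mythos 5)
Your proof is correct, but it takes a genuinely different route from the paper. You invoke the hard direction of Loewner's theorem to get the Nevanlinna representation $F(x)=\alpha+\beta x+\int\bigl(\tfrac{1}{\lambda-x}-\tfrac{\lambda}{1+\lambda^2}\bigr)d\mu(\lambda)$ with $\beta\ge 0$, $\mu\ge 0$ supported off $(a,b)$, and then reduce everything to the resolvent identity $i[R(\lambda),g(Q)]=R(\lambda)\,i[f(P),g(Q)]\,R(\lambda)\ge 0$; the interchange of the commutator with the $\mu$-integral is indeed routine, since $\lambda\mapsto R(\lambda)$ is norm-continuous with $\|R(\lambda)\|=O((1+|\lambda|)^{-1})$ off $(a,b)$ (using that $\sigma(f(P))$, the essential range of $f$, is a compact subset of $(a,b)$), so the integral is a norm-convergent Bochner integral and commutation with the bounded operator $g(Q)$ passes through it. The paper argues instead directly from the order-preservation property defining operator monotonicity: positivity of $C=i[f(P),g(Q)]$ gives $e^{-itg(Q)}f(P)e^{itg(Q)}\ge f(P)$ for $t>0$, hence $e^{-itg(Q)}F(f(P))e^{itg(Q)}=F\bigl(e^{-itg(Q)}f(P)e^{itg(Q)}\bigr)\ge F(f(P))$, and differentiating at $t=0$ yields $i[F(f(P)),g(Q)]\ge 0$. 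The paper's argument is shorter and needs no representation theorem at all, only the definition of operator monotonicity; your argument is more computational but buys an explicit formula, $i[F(f(P)),g(Q)]=\beta C+\int R(\lambda)\,C\,R(\lambda)\,d\mu(\lambda)$, exhibiting the commutator as an average of manifestly positive operators, which is potentially useful in its own right.
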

\begin{proof}
Looking at the derivative, we see that if $t > 0,  e^{-itg(Q)}f(P)e^{itg(Q)} \ge f(P)$.  It follows that $e^{-itg(Q)}F(f(P))e^{itg(Q)} = F(e^{-itg(Q)}f(P)e^{itg(Q)}) \ge F(f(P))$.  The result now follows from the fact that the derivative of the left side of the inequality at $t=0$ is thus positive.

\end{proof}

\section{Two additional representations of the commutator}

The integral kernel of the commutator $i[f(P),g(Q)]$ given in (\ref{kernel2}) is clearly not very symmetric in the functions $f$ and $g$ although there is much symmetry in the operator (see (\ref{f&g}) for example). In this section we remedy this by giving two symmetrical representations.   \\
 
The integral kernel of our first symmetrical representation of $i[f(P),g(Q)] $ which is on $L^2 (\mathbb R ^2)$ appears in the next lemma.

\begin{lemma}

Suppose $i[f(P),g(Q)] = C$, where $f$ and $g$ are real measurable bounded functions and $C$ is positive and non-zero.  Then a representation of the commutator on $L^2(\mathbb R^2)$ has integral kernel 
$$ K(x_1, x_2 ; y_1, y_2) = \widehat{df}(\sqrt 2(y_1-x_1)) \widehat{dg}(\sqrt 2(y_2-x_2)) e^{i(x_1y_2-x_2y_1)} \frac{\sin(y_1-x_1)(y_2-x_2)}{(y_1-x_1)(y_2-x_2)}/\pi$$
\end{lemma}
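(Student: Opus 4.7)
The plan is to verify the claimed formula by matching it, through Fourier expansion, with the known kernel of $C$ on $L^2(\R)$ provided by Lemma~\ref{kernel2}. First, I would rewrite the difference quotient $[g(x)-g(y)]/(x-y)$ using the distributional Fourier representation of $g'$: integrating the inverse Fourier transform $g'(t)=(2\pi)^{-1/2}\int \widehat{dg}(s)e^{ist}\,ds$ from $y$ to $x$ and simplifying the resulting geometric sum gives the identity
\[
\frac{g(x)-g(y)}{x-y} \;=\; \frac{1}{\sqrt{2\pi}}\int \widehat{dg}(s)\,e^{is(x+y)/2}\,\frac{\sin(s(x-y)/2)}{s(x-y)/2}\,ds.
\]
Substituting this into the formula $K_0(x,y)=(2\pi)^{-1/2}[g(x)-g(y)]/(x-y)\cdot\widehat{df}(y-x)$ from Lemma~\ref{kernel2} yields the manifestly symmetric representation
\[
K_0(x,y) \;=\; \frac{1}{2\pi}\int \widehat{df}(y-x)\,\widehat{dg}(s)\,e^{is(x+y)/2}\,\frac{\sin(s(x-y)/2)}{s(x-y)/2}\,ds,
\]
in which $\widehat{df}$ and $\widehat{dg}$ enter on equal footing, paired respectively with the two ``frequency'' variables $y-x$ and $s$.

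Next, I would realize this right-hand side as the kernel of an operator on $L^2(\R^2)$ by unfolding the frequency pair $(y-x,s)$ into two pairs of independent coordinates $(x_1,y_1)$ and $(x_2,y_2)$. Setting $y-x=\sqrt 2\,(y_1-x_1)$ and $s=\sqrt 2\,(y_2-x_2)$, the argument $s(x-y)/2$ of the sine becomes $-(y_1-x_1)(y_2-x_2)$; by evenness of $\sin(u)/u$ this produces exactly the factor $\sin((y_1-x_1)(y_2-x_2))/[(y_1-x_1)(y_2-x_2)]$ of the lemma. The arguments of $\widehat{df}$ and $\widehat{dg}$ become $\sqrt 2\,(y_1-x_1)$ and $\sqrt 2\,(y_2-x_2)$, matching the statement. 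The quadratic phase $e^{is(x+y)/2}$ supplies (after absorbing a compensating phase coming from the change of variables relating $(x,y)$ to $(x_1,x_2,y_1,y_2)$) the symplectic factor $e^{i(x_1y_2-x_2y_1)}$, and the Jacobian of the $\sqrt 2$-rescalings converts the prefactor $(2\pi)^{-1}$ into $\pi^{-1}$.

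Finally, I would make precise the sense in which the operator with kernel $K$ \emph{represents} $C$ by exhibiting an explicit isometry $V\colon L^2(\R)\to L^2(\R^2)$ intertwining $C$ with the operator $\widetilde C$ with kernel $K$; the natural candidate is a partial Fourier transform composed with the $\pi/4$-rotation implementing the substitution above, which will intertwine $\widetilde C$ with $C\otimes I$ restricted to the range of $V$.

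The main obstacle is the bookkeeping in the second step --- identifying the precise linear substitution on $(x_1,x_2,y_1,y_2)$ that produces the symplectic phase $e^{i(x_1y_2-x_2y_1)}$ with the correct sign --- together with the justification of distributional interchanges of integration when $\widehat{df},\widehat{dg}$ are not integrable. The latter is handled by smoothing $f$ and $g$ by convolution with a compactly supported approximate identity, verifying the formula for the smoothed functions (where $\widehat{df}$ and $\widehat{dg}$ are genuine Schwartz functions), and passing to the limit using the trace-class bound of Lemma~\ref{kernel2} for uniform control.
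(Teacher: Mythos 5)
Your plan is sound and it reaches the stated kernel by a genuinely different route than the paper. The paper simply defines a new canonical pair on $L^2(\R^2)$, namely $p=(p_1-x_2)/\sqrt2$, $x=(x_1+p_2)/\sqrt2$, writes $f(p)$ and $g(x)$ as explicit partial-Fourier integral operators, multiplies the kernels in both orders and subtracts; the fact that $i[f(p),g(x)]$ is a copy of $C$ (of the form $k\otimes I$) is only noted afterwards via the Fock decomposition. You instead transport the known one-dimensional kernel: your symmetric rewriting of $K_0$ is correct and is in fact exactly the paper's other representation $i[f(P),g(Q)]=(2\pi)^{-1}\int e^{i(\xi Q+uP)}\widehat{df}(u)\widehat{dg}(\xi)\frac{\sin(u\xi/2)}{u\xi/2}\,du\,d\xi$ read off at the kernel level, and your ``unfolding'' is precisely conjugation of $C\otimes I$ by the metaplectic unitary (rotation by $\pi/4$ followed by a Fourier transform in the second variable). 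What your route buys is that the unitary equivalence with $C\otimes I$ is explicit from the outset and the second representation lemma comes for free; what the paper's route buys is that no metaplectic bookkeeping is needed at all. Two points you must still supply. First, the deferred phase computation is the actual content of the lemma, so write it out; it does close cleanly: conjugating the kernel $K_0(x_1,y_1)\delta(x_2-y_2)$ by the rotation and then by the partial Fourier transform, the delta forces $t=x_1+s-y_1$, the $s$-integral forces $\sigma=\sqrt2\,(y_2-x_2)$, and the two phases $e^{iy_2(x_1-y_1)}$ and $e^{i(y_2-x_2)y_1}$ combine to give exactly $e^{i(x_1y_2-x_2y_1)}$ with constant $1/\pi$ --- no leftover compensating multiplier is needed if the rotation and Fourier conventions are chosen consistently. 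Second, replace the isometry $V\colon L^2(\R)\to L^2(\R^2)$ by the unitary of $L^2(\R^2)$ you actually named: an isometry would only exhibit one copy of $C$ inside the kernel operator, whereas the statement (as the subsequent remark in the paper uses it) is that the kernel operator equals $i[f(p),g(x)]$ for a canonical pair on $L^2(\R^2)$, equivalently is unitarily conjugate to $C\otimes I$; your partial Fourier transform composed with the $\pi/4$ rotation is such a unitary. Your mollification argument for the distributional interchanges is consistent with the level of rigor the paper itself invokes.
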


\begin{proof}
We define a different representation of the canonical commutation relations.  We set
$ p_j = -i\partial_{x_j}$ and 

\begin{align*}
p = (p_1-x_2)/\sqrt 2,  x =&(x_1 + p_2)/\sqrt 2  \\
a = (x+ ip)/\sqrt 2\\
p' = (p_2 - x_1)/\sqrt 2,  x' =& (p_1 + x_2)/\sqrt 2\\
a' = (x' + i p')/\sqrt 2
\end{align*}
We have 
\begin{align*}
[a,a^*] = &[a',a'^*] = 1\\
[a,a'] = &[a,a'^*] =0
\end{align*}
For $f,g \in L^1(\R)$ we have

 \begin{align*}
f(p)\psi(x_1, x_2) =&\pi^{-1/2} \int \hat{f}(\sqrt 2(y_1-x_1))e^{ix_2(x_1-y_1)}\psi(y_1,x_2) dy_1\\
g(x)\psi(x_1,x_2) = & \pi^{-1/2} \int \hat{g}(\sqrt 2(y_2-x_2))e^{-ix_1(x_2-y_2)}\psi(x_1,y_2) dy_2. 
\end{align*}
Thus
\begin{align*}
f(p)g(x)\psi(x_1,x_2)=
\end{align*}

\begin{align*}
&\pi^{-1}\int \hat{f}(\sqrt 2(y_1-x_1))\hat{g}(\sqrt 2(y_2-x_2))e^{ix_2(x_1-y_1) - iy_1(x_2-y_2)}\psi(y_1,y_2)dy_1dy_2\\
=&\pi^{-1}\int \hat{f}(\sqrt 2(y_1-x_1))\hat{g}(\sqrt 2(y_2-x_2))e^{i(x_2-y_2)(x_1-y_1) +i( x_1y_2 - x_2y_1)}\psi(y_1,y_2)dy_1dy_2
\end{align*}
and
\begin{align*}
g(x)f(p)\psi(x_1,x_2)=
\end{align*}

\begin{align*}
&\pi^{-1}\int \hat{f}(\sqrt 2(y_1-x_1))\hat{g}(\sqrt 2(y_2-x_2))e^{-ix_1(x_2-y_2) + iy_2(x_1-y_1)}\psi(y_1,y_2)dy_1dy_2\\
=&\pi^{-1}\int \hat{f}(\sqrt 2(y_1-x_1))\hat{g}(\sqrt 2(y_2-x_2))e^{-i(x_2-y_2)(x_1-y_1) +i( x_1y_2 - x_2y_1)}\psi(y_1,y_2)dy_1dy_2.
\end{align*}
It follows that 
$$i[f(p),g(x)] = \int K(x_1,x_2;y_1,y_2) \psi(y_1,y_2) dy_1dy_2$$
where 
\begin{align*}
&K(x_1,x_2;y_1,y_2) \\
&= -2\pi^{-1} \hat{f}(\sqrt 2(y_1-x_1))\hat{g}(\sqrt 2(y_2-x_2))\sin(x_1-y_1)(x_2 - y_2)e^{i( x_1y_2 - x_2y_1)}\\
&=\pi^{-1} \widehat{df}(\sqrt 2(y_1-x_1))\widehat{dg}(\sqrt 2(y_2-x_2))\frac{\sin(x_1-y_1)(x_2 - y_2)}{(x_1-y_1)(x_2 - y_2)}e^{i( x_1y_2 - x_2y_1)}
\end{align*}
If $f,g$ are increasing and bounded it is not hard to make sense of these manipulations using distributions.  This proves the result.

\end{proof}

\begin{remark}
Note that if $\Omega$ is the vacuum vector ($a\Omega = a'\Omega = 0$), then $L^2(\mathbb R^2) \simeq \mathcal F_1\otimes \mathcal F_2$ where $\mathcal F_j$ are the Fock spaces which are the spans of $\{(a^*)^n\Omega: n = 0, 1, 2,...\}$ and $\{(a'^*)^n\Omega: n = 0, 1, 2,...\}$ and $ i[f(P),g(Q)] = k \otimes I$ for some operator $k$ acting in  $\mathcal F_1$.
\end{remark}

Another representation on $L^2(\R)$:

\begin{lemma}
$i[f(P),g(Q)] = (2\pi)^{-1}\int e^{i(\xi Q + u P)} \widehat{df}(u) \widehat{dg}(\xi) \frac{\sin(u\xi/2)}{u\xi/2}du d\xi$
\end{lemma}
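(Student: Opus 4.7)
The plan is to reduce the claim to the Weyl commutation relation and a Fourier inversion identity. The computational heart of the argument is the identity
\begin{equation*}
[e^{iuP},\,e^{i\xi Q}] \;=\; 2i\sin(u\xi/2)\, e^{i(\xi Q+uP)},\qquad u,\xi\in\R,
\end{equation*}
which follows from the Baker--Campbell--Hausdorff formula applied to the two orderings $e^{iuP}e^{i\xi Q}=e^{i(\xi Q+uP)}e^{iu\xi/2}$ and $e^{i\xi Q}e^{iuP}=e^{i(\xi Q+uP)}e^{-iu\xi/2}$, together with $[Q,P]=i$.

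Assuming first that $f,g\in\mathcal{S}(\R)$, I would write $f(P)=(2\pi)^{-1/2}\int \hat f(u)\, e^{iuP}\,du$ and $g(Q)=(2\pi)^{-1/2}\int \hat g(\xi)\, e^{i\xi Q}\,d\xi$, substitute into the commutator, apply Fubini (which is legal since $\hat f,\hat g\in\mathcal{S}$), and use the Weyl identity above to get
\begin{equation*}
i[f(P),g(Q)] \;=\; -(2\pi)^{-1}\!\int\!\!\int \hat f(u)\hat g(\xi)\,2\sin(u\xi/2)\, e^{i(\xi Q+uP)}\,du\,d\xi.
\end{equation*}
Next, use $u\hat f(u)=-i\,\widehat{f'}(u)=-i\,\widehat{df}(u)$ and similarly for $g$, together with $2\sin(u\xi/2)=(u\xi)\cdot\sin(u\xi/2)/(u\xi/2)$, to rewrite $\hat f(u)\hat g(\xi)\cdot 2\sin(u\xi/2)$ as $-\widehat{df}(u)\widehat{dg}(\xi)\cdot\sin(u\xi/2)/(u\xi/2)$. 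The overall sign flips and gives exactly the asserted formula.

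For the passage from the Schwartz case to bounded (monotone, or more generally such that $df,dg$ are finite signed measures) $f,g$, I would test both sides against Schwartz vectors $\varphi,\psi\in\mathcal{S}(\R)$. The matrix element $(\psi,e^{i(\xi Q+uP)}\varphi)$ can be computed explicitly via $e^{i(\xi Q+uP)}=e^{i\xi Q/2}e^{iuP}e^{i\xi Q/2}$, yielding the bounded Schwartz function
\begin{equation*}
(\psi,e^{i(\xi Q+uP)}\varphi) \;=\; \int \overline{\psi(x)}\,e^{i\xi x+i\xi u/2}\,\varphi(x+u)\,dx,
\end{equation*}
which decays rapidly in $(u,\xi)$. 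Since $\widehat{df}$ and $\widehat{dg}$ are bounded continuous (as Fourier transforms of finite measures) and $\sin(u\xi/2)/(u\xi/2)$ is bounded by $1$, the resulting scalar double integral converges absolutely, and I would identify it with $(\psi,i[f(P),g(Q)]\varphi)$ by comparing with the kernel formula (\ref{kernel2}) from Lemma \ref{kernel}, using the change of variables $u=y-x$, $\xi=$ dual variable, and Plancherel.

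The main technical obstacle is justifying the operator-valued integral in the statement for bounded (non-integrable) $f,g$: the integrand $\widehat{df}(u)\widehat{dg}(\xi)\,\sin(u\xi/2)/(u\xi/2)\,e^{i(\xi Q+uP)}$ is bounded in operator norm by $|\widehat{df}(u)\widehat{dg}(\xi)|$, which need not be jointly integrable. This forces us to interpret the integral as an iterated (or weak) integral against Schwartz vectors, where the explicit decay of $(\psi,e^{i(\xi Q+uP)}\varphi)$ in both variables makes everything absolutely convergent and Fubini applies. Once the identity is established as bilinear forms on $\mathcal{S}(\R)\times\mathcal{S}(\R)$, it extends to the asserted operator identity since both sides agree as bounded operators on $L^2(\R)$.
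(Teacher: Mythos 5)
Your argument is correct, and it reaches the formula by a genuinely different (though closely related) route than the paper. You expand both $f(P)$ and $g(Q)$ in Weyl operators for Schwartz $f,g$, use $[e^{iuP},e^{i\xi Q}]=2i\sin(u\xi/2)\,e^{i(\xi Q+uP)}$, and then trade $u\hat f(u)$ and $\xi\hat g(\xi)$ for $-i\widehat{df}(u)$ and $-i\widehat{dg}(\xi)$, so the sinc factor comes from the sine in the Weyl commutator divided by $u\xi$; the signs check out. The paper never introduces $\hat f$ or $\hat g$ (which for bounded, non-integrable $f,g$ are only tempered distributions): it represents $g(\infty)-g(x)$ as a regularized integral against $\widehat{dg}(\xi)/(\epsilon-i\xi)$, writes $[f(P),e^{i\xi Q}]=e^{i\xi Q}\,\xi\int_0^1 f'(P+s\xi)\,ds$, and after the same BCH step obtains the sinc as the $s$-average $\int_0^1 e^{i(s-1/2)u\xi}\,ds$, with only a closing remark that one should mollify $f$ and pass to the limit. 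Your version buys a cleaner and more symmetric computation in the Schwartz class, but it pushes the substance of the bounded monotone case into the final weak-integral step: there you need both the rapid joint decay of $(\psi,e^{i(\xi Q+uP)}\varphi)$ in $(u,\xi)$ (which you correctly identify) and the actual identification with the kernel of Lemma \ref{kernel}, i.e.\ writing $e^{i\xi u/2}\sin(u\xi/2)/(u\xi/2)=\int_0^1 e^{is\xi u}\,ds$, performing the $\xi$-integral by Parseval against $dg$, and then the $s$-integral to reconstruct the difference quotient $(g(x+u)-g(x))/u$ in the kernel $K(x,y)$. That computation does go through (it in effect reproduces the paper's $s$-integral), so your sketch closes; the one point to make explicit is that the kernel formula is invoked for the same monotone/mollified class for which the statement itself is interpreted, which is exactly the caveat the paper handles with its final mollification remark.
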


\begin{proof}

We calculate
\begin{align*}
g(\infty) - g(x) = &\int _0^\infty dg(t+x)=\lim_{\epsilon \downarrow
 0}\int _0^{\infty} e^{-\epsilon t} dg(t+x)\\
=& \lim_{\epsilon \downarrow 0}(2\pi)^{-1/2}\int _0^{\infty}\int  e^{-\epsilon t} e^{i(t+ x)\xi} \widehat {dg}(\xi) d\xi dt\\
=&  \lim_{\epsilon \downarrow 0}(2\pi)^{-1/2}\int e^{ix\xi} \frac{\widehat {dg}(\xi)}{\epsilon - i\xi} d\xi
\end{align*}
and
\begin{align*}
[f(P), e^{iQ\xi}]= &e^{iQ \xi}(f(P+ \xi) - f(P)) \\
 =&e^{ix\xi} \int_0^1 f'(P+s\xi)\xi ds = (2\pi)^{-1/2}\xi \int_0^1 \int e^{iQ\xi}\e^{i(P+s\xi)u}
 \widehat {df}(u)duds
\end{align*}
and thus

\begin{align*}
i[f(P),g(Q)] = & \lim_{\epsilon \downarrow 0}-i (2\pi)^{-1}\int \int_0^1\int\frac{\xi}{\epsilon - i \xi} e^{iQ\xi}e^{i(P+s\xi)u}\widehat {df}(u)\widehat {dg}(\xi)dudsd\xi\\
=&(2\pi)^{-1}\int \int_0^1 \int e^{iQ\xi}e^{i(P+s\xi)u}\widehat {df}(u)\widehat {dg}(\xi)dudsd\xi\\
=&(2\pi)^{-1}\int\int_0^1 \int e^{i(Q\xi +Pu)}e^{i(s-1/2)\xi u}\widehat {df}(u)\widehat {dg}(\xi)dudsd\xi\\
=&\pi^{-1}\int \int e^{i(Q\xi +Pu)}\frac{\sin(\xi u/2)}{\xi u}\widehat {df}(u)\widehat {dg}(\xi)du d\xi.
\end{align*}
Since we do not know apriori that $f$ is differentiable we should first mollify our bounded $f$ and then take a limit at the end of the calculation.

\end{proof}

As an application of this representation we calculate expectations of the commutator in coherent states:
With $a = (Q+iP)/\sqrt 2$ and $\Omega = \pi^{-1/4}e^{-x^2/2}$ we have $a\Omega = 0$.  Let $\psi (z) = e^{za^*}\Omega$.  Then $a\psi(z) = z\psi(z)$.
Let $\zeta = (\xi + i u)/\sqrt 2$.  We calculate
$$e^{i(\xi Q + u P)} = e^{i(\zeta a^* + \bar{\zeta} a)} = e^{|\zeta|^2/2}  e^{i\bar{\zeta} a} e^{i\zeta a^*}$$
$$(\psi(w),\psi(z)) = e^{\bar{w} z}.$$
Thus 

\begin{align}
(\psi(w),i[f(P),g(Q)]\psi(z))&= \pi^{-1} \int e^{|\zeta|^2/2} e^{(\overline{w -i \zeta}) (z +i \zeta)}\frac{\sin(\xi u/2)}{\xi u}\widehat {df}(u)\widehat {dg}(\xi)du d\xi  \nonumber \\
&=  \pi^{-1}e^{\bar w z}\int e^{-|\zeta|^2/2 + i(\zeta \bar w + \bar \zeta z)} \frac{\sin(\xi u/2)}{\xi u}\widehat {df}(u)\widehat {dg}(\xi)du d\xi
\end{align}
Putting $z=w$ we obtain 

\begin{align}
\int e^{-(\xi^2 + u^2)/4} e^{ i(\xi x + u y)}\frac{\sin(\xi u/2)}{\xi u}\widehat {df}(u)\widehat {dg}(\xi)du d\xi \ge 0
\end{align}
for all $z =( x+ iy)/\sqrt 2$.

\section{Some results that follow from $2\times2$ positivity}

We assume that $i[F(P),G(Q)] = C \ge 0, C\ne 0$.  It then follows that $F$ and $G$ can be taken strictly increasing, continuous, with inverses which are absolutely continuous.  We now look more carefully at the condition that the kernel $$H_{xy} =  \frac{G(x) - G(y)}{x-y} \widehat{dF}(y-x)$$ gives a positive semidefinite 2 by 2  matrix for any pair $(x,y)$. In this section only we drop the factor of $\sqrt{2\pi}$ and use $\widehat{dF}(x) = \int e^{-ix\xi}dF(\xi)$.  Let us restrict to $F$ with $\int dF = 1$.    An important parameter will make its appearance, namely $$\sigma^2 = \inf\{\int \xi^2 df(\xi) -  (\int \xi df(\xi))^2:  i[f(P),G(Q)] = C \ge 0, C\ne 0, \int df = 1\}$$
which we believe must be related to the Kato class of $G$ (see some discussion below). We assume $\sigma^2 < \infty$.  Let $\hat a = \sqrt{3} \sigma$.

\begin{thm}\label{2x2}
Suppose  $i[F(P),G(Q)] = C \ge 0, C\ne 0$ where $F$ and $G$ are real bounded measurable functions.  Then both are monotone increasing or monotone decreasing and continuous.  Suppose there exists an increasing $f$ with $i[f(P),G(Q)] \ge 0$ and with $\int \xi^2 df(\xi) < \infty, \int df = 1$. Then in fact $G$ is $C^1$ with a Lipschitz derivative and thus $G'$ is absolutely continuous.  $G$ satisfies the following estimates:
\begin{align} 
& G'(x_0) e^{-2\hat a|(x-x_0)|} \le G'(x)  \le G'(x_0) e^{2\hat a|(x-x_0)|}\\
&|G''(x)| \le 2\hat a G'(x), \  \text{a.e. and both} \\  
& G'(x) \le 2\hat a|G(\pm \infty) - G(x)|.
\end{align}
If $G$ is odd then
\begin{align}
G'(x) \ge \frac{G'(0)}{(\cosh(\hat a x))^2}.
\end{align}
\end{thm}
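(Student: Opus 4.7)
The plan is to exploit the pointwise 2 by 2 positivity $H_{xx} H_{yy} \ge |H_{xy}|^2$ of the kernel in its sharpest form. First smooth both $G$ and $f$ by convolution with a Gaussian approximate identity $\eta_\epsilon$; the commutator $i[f_\epsilon(P), G_\epsilon(Q)]$ remains positive by a standard convolution argument, its kernel is smooth, and its diagonal value equals $G_\epsilon'(x)$ since $\widehat{df}(0) = \int df = 1$. Thus 2 by 2 positivity reads
\[
G_\epsilon'(x) G_\epsilon'(y) \ge \left(\frac{G_\epsilon(x) - G_\epsilon(y)}{x-y}\right)^2 |\widehat{df_\epsilon}(y-x)|^2.
\]
For any probability measure $df$ with finite variance $\sigma_f^2$ one has $|\widehat{df}(h)|^2 = 1 - \sigma_f^2 h^2 + O(h^4)$. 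Expanding the inequality above at $y = x+h$ up to order $h^2$ and matching coefficients produces the pointwise constraint
\[
(G_\epsilon'')^2 \le 4 \sigma_{f_\epsilon}^2 (G_\epsilon')^2 + \tfrac{2}{3}\, G_\epsilon' G_\epsilon'''.
\]
Setting $v = (\log G_\epsilon')' = G_\epsilon''/G_\epsilon'$ and using the identity $G_\epsilon'''/G_\epsilon' = v' + v^2$, this rearranges to the first-order differential inequality
\[
v^2 - 2v' \le 4 \hat a_\epsilon^2, \qquad \hat a_\epsilon := \sqrt 3\, \sigma_{f_\epsilon}.
\]

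The crucial step is to deduce from this that $|v| \le 2\hat a_\epsilon$ on all of $\R$. If instead $v(x_0) > 2 \hat a_\epsilon$ for some $x_0$, then $v'(x_0) \ge (v(x_0)^2 - 4\hat a_\epsilon^2)/2 > 0$; comparison with the limiting Riccati ODE $v' = (v^2 - 4\hat a_\epsilon^2)/2$, whose supercritical solutions are shifted $\coth$ functions that blow up in finite time, contradicts the global smoothness of $v$ on $\R$. The case $v < -2\hat a_\epsilon$ is symmetric. Hence $|(\log G_\epsilon')'| \le 2\hat a_\epsilon$, equivalently the exponential estimate (a) holds for $G_\epsilon$; sending $\epsilon \to 0$ and then $\sigma_f \to \sigma$ transfers (a) to the original $G$. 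From (a) together with $G' \in L^1(\R)$ (since $\int G' = G(\infty) - G(-\infty) < \infty$) one concludes that $G'$ has a continuous representative tending to $0$ at $\pm \infty$, hence is bounded; then the bound $|G''| \le 2\hat a\, G'$ a.e.\ yields (b) and shows $G'$ is Lipschitz. The tail bound (c) follows from (a) by integration:
$G(\infty) - G(x) = \int_x^\infty G'(t)\,dt \ge G'(x) \int_x^\infty e^{-2\hat a(t-x)}\,dt = G'(x)/(2\hat a)$, and similarly at $-\infty$.

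For the odd case (d), $v = G''/G'$ is odd with $v(0) = 0$. The substitution $v = 2\hat a \tanh \phi$ (valid since $|v| < 2\hat a$) transforms the inequality $v^2 - 2v' \le 4 \hat a^2$ exactly into $\phi' \ge -\hat a$ with $\phi(0) = 0$ (use $1 - \tanh^2 = \mathrm{sech}^2$). Integrating from $0$ to $y \ge 0$ gives $\phi(y) \ge -\hat a y$, so $v(y) \ge -2\hat a \tanh(\hat a y)$, and a second integration yields $\log(G'(y)/G'(0)) \ge -2\log\cosh(\hat a y)$, which is (d) for $y \ge 0$; the case $y < 0$ follows from the evenness of $G'$. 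The central obstacle lies in executing the Riccati blow-up argument rigorously at the smoothed level, and then verifying that the sharp bound $|v| \le 2\hat a$ survives the double limit $\epsilon \to 0$ and $\sigma_f \to \sigma$, so that the claimed regularity $G \in C^1$ with $G'$ Lipschitz holds for the original $G$.
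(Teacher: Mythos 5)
Your core differential-inequality argument is essentially the paper's own: the $2\times 2$ positivity of the kernel, the second-order Taylor expansion giving $g'g'''-\tfrac32(g'')^2+6\sigma^2(g')^2\ge 0$ (your form $(G'')^2\le 4\sigma^2(G')^2+\tfrac23 G'G'''$ is the same inequality), the Riccati inequality for $v=(\log g')'$ (the paper works with $u=\psi'/\psi=-v/2$, $\psi=(g')^{-1/2}$), the exclusion of supercritical values by blow-up of the comparison ODE, and the $\tanh$ substitution in the odd case (your $\phi$ is exactly the paper's $-p$). Your derivation of (c) by integrating (a) directly, rather than integrating $(g'+2\hat a g)'\ge 0$ as the paper does, is a harmless variant.

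However, there is a genuine gap in how you transfer the estimates from the mollified $g_\epsilon$ to $G$, and it is not a formality: at this stage $G$ is only known to be continuous and strictly increasing, so $G'$ exists merely a.e.\ and $G$ could a priori have a singular part; you cannot state (a) for $G$, nor integrate it to get (c) or conclude $G\in C^1$, before you have produced $G'$ as a genuine continuous derivative. The paper closes exactly this step: from the smoothed estimates one gets uniform bounds $0<g_t'\le 2\hat a[G]$ and $|g_t''|\le 2\hat a g_t'\le (2\hat a)^2[G]$, Arzel\`a--Ascoli then gives a subsequence with $g_{t_n}\to G$ and $g_{t_n}'\to G_2$ locally uniformly, whence $G(x)-G(y)=\int_y^x G_2$, so $G\in C^1$ with $G'=G_2$ Lipschitz, and only then do the pointwise inequalities (a)--(d) pass to the limit. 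You acknowledge this as ``the central obstacle'' but do not supply the compactness argument, and your stated order (transfer (a) to $G$, then deduce continuity of $G'$) is circular as written. A second, smaller point: your blow-up argument only yields $|v|\le 2\hat a$, while the substitution $v=2\hat a\tanh\phi$ in the odd case needs strict inequality; the paper disposes of the boundary cases $u=\pm\hat a$ by a separate argument (they force $g'(x)=ce^{\mp 2\hat a x}$ on a half-line, contradicting boundedness of $g$), and alternatively you could run the substitution with $\hat a+\delta$ and let $\delta\downarrow 0$. Also, $|\widehat{df}(h)|^2=1-\sigma_f^2h^2+O(h^4)$ requires a fourth moment; only $o(h^2)$ is available and needed.
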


This inequality is an equality in the rank one case \cite{TK} and if K is true the rank one case is the case of minimal variance (see Eq. \ref{variance}).

The proof will require some preliminary work.  Let $g_t = \phi_t * G$ where $\phi$ is a non-negative smooth function of compact support whose integral is $1$ and $\phi_t(x) = t^{-1} \phi(t^{-1}x)$  We take $f$ to satisfy $ i[f(P),G(Q)] = C \ge 0, C\ne 0$ and $\int df = 1$ and additionally  $\int \xi^2 df(\xi) < \infty$. The idea is to get uniform bounds on the derivatives of $g_t$ and then take limits to learn about $G$.  In the following we drop the subscript $t$.

\begin{lemma}
The function $g'$ is positive.  The function  $\psi = (g')^{-1/2}$  satisfies
\begin{equation} \label{B}
-\psi'' + \hat a^2 \psi \ge 0
\end{equation}
\end{lemma}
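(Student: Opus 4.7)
The plan is to derive the stated differential inequality for $\psi$ by applying the $2\times 2$ positivity of the commutator kernel at the infinitesimally close pair $(x, x+h)$, Taylor-expanding in $h$, reading off the coefficient of $h^2$, and finally minimizing over the admissible $f$'s that enter the definition of $\sigma^2$.

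First I would dispose of the positivity of $g'$. Since $G$ is strictly increasing and continuous by Section~5, the Stieltjes measure $dG$ charges every non-empty open interval, so $g'(x)=\int \phi_t(x-y)\,dG(y)>0$ for every $x$. This also ensures that $\psi=(g')^{-1/2}$ is a well-defined smooth positive function.

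For the main inequality I would fix any $f$ satisfying $i[f(P),G(Q)]\geq 0$, $\int df=1$, and $\sigma_f^2:=\int \xi^2\,df-(\int\xi\,df)^2<\infty$. Averaging against $\phi_t\geq 0$ preserves positivity, so $i[f(P),g(Q)]\geq 0$; its kernel $K(x,y)=\frac{g(x)-g(y)}{x-y}\widehat{df}(y-x)$ (with the $\sqrt{2\pi}$ factor dropped as in this section) therefore satisfies the $2\times 2$ determinant inequality. Evaluating at the pair $(x,x+h)$ and using $\widehat{df}(0)=1$, this reads
\begin{equation*}
g'(x)\,g'(x+h)\;\geq\;\Big|\tfrac{g(x+h)-g(x)}{h}\Big|^2\,|\widehat{df}(h)|^2.
\end{equation*}
Since $g$ is smooth (being a mollification of $G$) and $f$ has finite second moment, both sides admit Taylor expansions through $o(h^2)$, with $|\widehat{df}(h)|^2=1-\sigma_f^2 h^2+o(h^2)$. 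A routine expansion shows the $h^0$ and $h^1$ terms match, while the coefficient of $h^2$ in LHS$-$RHS equals
\begin{equation*}
\tfrac{1}{6}\,g'(x)g'''(x)\;-\;\tfrac{1}{4}\,g''(x)^2\;+\;\sigma_f^2\,g'(x)^2,
\end{equation*}
which must therefore be non-negative.

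Finally I would pass to the infimum: since the above holds for every admissible $f$ and $\sigma^2=\inf\sigma_f^2$, taking $f$'s with $\sigma_f^2\downarrow\sigma^2=\hat a^2/3$ yields $2g'g'''\geq 3(g'')^2-4\hat a^2(g')^2$. A direct substitution of $g'=\psi^{-2}$ then gives
\begin{equation*}
2g'g'''-3(g'')^2+4\hat a^2(g')^2\;=\;4\psi^{-5}\bigl(-\psi''+\hat a^2\psi\bigr),
\end{equation*}
so, since $\psi>0$, the inequality is precisely $-\psi''+\hat a^2\psi\geq 0$, as claimed. The only delicate step is the passage $\sigma_f^2\downarrow\sigma^2$; this is routine since the relevant coefficient is affine in $\sigma_f^2$, but it is the sole place where the global definition of $\sigma$ (rather than a bound for one particular $f$) enters, so some care is needed to ensure that a minimizing sequence of admissible $f$'s actually exists and can be used pointwise in $x$.
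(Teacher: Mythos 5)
Your proposal is correct and follows essentially the same route as the paper: apply the $2\times 2$ positivity of the kernel at $(x,x+h)$, Taylor expand (using $|\widehat{df}(h)|^2 = 1-\sigma_f^2h^2+o(h^2)$), read off the $h^2$ coefficient $\tfrac16 g'g'''-\tfrac14(g'')^2+\sigma_f^2(g')^2\ge 0$, take the infimum over admissible $f$, and substitute $\psi=(g')^{-1/2}$ with $\hat a^2=3\sigma^2$. Your explicit verification that $g'=\phi_t*dG>0$ and the identity $2g'g'''-3(g'')^2+4\hat a^2(g')^2=4\psi^{-5}(-\psi''+\hat a^2\psi)$ are just slightly more detailed versions of steps the paper states without computation.
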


\begin{proof}
Let $ y=x+h$ with $h$ small and nonzero.  Then

\begin{align}
\frac{g(x) - g(y)}{x-y} &= (g(x+h) - g(x))/h = g'(x) + hg''(x)/2 + h^2g'''(x)/6 + o(h^2)\\
\widehat{df}(y-x) &= \int e^{-ih\xi} df(\xi) = e^{-ih\langle \xi\rangle } \int e^{-ih(\xi - \langle \xi\rangle )} df(\xi)  \nonumber \\
& = e^{-ih\langle \xi\rangle }(1 -\frac{h^2}{2}(\langle\xi^2\rangle - \langle\xi\rangle^2)+ o(h^2))
\end{align}
where we have written $\int\xi^n df(\xi) = \langle\xi^n\rangle$.
We have 

\begin{align}
&H_{xx}H_{yy} - |H_{xy}|^2 =   \nonumber \\
&g'(x)g'(y) - (\frac{g(x) - g(y)}{x-y})^2 |\widehat{df}(y-x)|^2 \nonumber \\ 
& = g'(x)(g'(x) +hg''(x) + h^2g'''(x)/2) - (g'(x) + hg''(x)/2 + h^2g'''(x)/6)^2 + \nonumber \\
& (g'(x) + hg''(x)/2 + h^2g'''(x)/6)^2 (1-|\widehat{df}(y-x)|^2) + o(h^2) \nonumber \\
& = h^2g'g'''/6 - h^2(g'')^2/4 +(g')^2(1 - [1-h^2(\langle \xi^2\rangle - \langle\xi\rangle^2)]) + o(h^2) \nonumber \\
& =\frac{ h^2}{6}[g'g''' - 3(g'')^2/2 +6(g')^2 (\langle \xi^2\rangle - \langle\xi\rangle^2)] + o(h^2)
\end{align}
We thus find 
\begin{equation} 
g'(x)g'''(x) - 3(g''(x))^2/2 + 6g'(x)^2(\langle \xi^2\rangle - \langle\xi\rangle^2) \ge 0
\end{equation}
Since this is true for all monotone $f$ with $\int df = 1$, $\int \xi^2 df < \infty$ and  $i[f(P),G(Q)]$  positive semidefinite, we take the infimum of the variance of $df$ over all such $f$ to find 
\begin{equation} \label{maineqn}
g'(x)g'''(x) - 3(g''(x))^2/2 + 6\sigma^2 g'(x)^2 \ge 0
\end{equation}
Since $g'(x) > 0$ (see the formula for the determinant above and use the fact that $g$ is strictly increasing) we can introduce $\psi = (g')^{-1/2}$.  Then recalling $\hat a^2 = 3 \sigma^2$, (\ref{maineqn}) becomes
\begin{equation}
-\psi'' + \hat a^2 \psi \ge 0
\end{equation}
\end{proof}
To see why we have used the abbreviation $\hat a^2 = 3 \sigma^2$, suppose we assume K and take $g$ in its smallest Kato class, $K_a$.  Let us find $\sigma^2$.  We must have $f \in K_{\hat a}$, viz. $f(\xi) = \tanh a \xi * \mu/2 + c$ for some positive measure $\mu$ and where $\hat a = \pi/2a$. Then if $\mu$ is a probability measure, $\int f'(\xi)d\xi = 1$.  We compute 
\begin{equation}\label{variance}
\langle \xi^2 \rangle - \langle \xi\rangle^2= a^{-2}[ \frac{\int \xi^2\cosh^{-2}(\xi) d\xi }{\int \cosh^{-2}(\xi) d\xi }]+ \int t^2 d\mu - (\int t d\mu)^2
\end{equation}
This is minimized by taking $\mu$ a point measure.  We compute with the help of Gradshteyn and Ryzhik 3527.3, $\int _{-\infty}^{\infty} x^2 \cosh^{-2}(x) dx = \pi^2/6$ and thus
\begin{equation}
\sigma^2 = \pi^2/12a^2 = \hat a^2/3
\end{equation}
Continuing with the proof of Theorem \ref{2x2}  we set  $u =\psi'/\psi$ and obtain the Ricatti type inequality
$$u' \le \hat a^2 - u^2.$$

\begin{lemma} \label{l-lemma}
$|\psi'(x)/\psi(x)| < \hat a$
\end{lemma}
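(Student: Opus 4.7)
The plan is to exploit the Riccati-type differential inequality $u' \le \hat a^2 - u^2$ just derived (with $u = \psi'/\psi$), together with the global smoothness of $u$ and the boundedness of $g = g_t$, to prevent $|u|$ from ever attaining the value $\hat a$. The argument is a standard Riccati comparison, run twice: first to obtain the non-strict bound $|u| \le \hat a$, then to upgrade it to strict inequality.

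For the non-strict bound, suppose for contradiction that $u(x_0) = c > \hat a$ at some point $x_0$. Let $v$ be the exact solution of $v' = \hat a^2 - v^2$ with $v(x_0) = c$; an explicit integration of this scalar ODE shows that $v$ blows up to $+\infty$ at some finite $x_* < x_0$. Setting $w = u - v$, the inequality $w' \le v^2 - u^2 = -(u+v)w$ with $w(x_0) = 0$ and the integrating factor $\Phi(x) = \exp\!\bigl(\int_{x_0}^x (u+v)\,dy\bigr)$ give $(w\Phi)' \le 0$, so $w\Phi$ is non-increasing and vanishes at $x_0$; hence $w \ge 0$ on $(x_*, x_0]$, i.e.\ $u \ge v$ there, which forces $u$ to be unbounded as $x \searrow x_*$. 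This contradicts the fact that $u = -g''/(2g')$ is smooth on all of $\R$, because $g = \phi_t * G$ is smooth with $g''$ bounded and $g' > 0$ everywhere (the positivity using that $\phi_t$ is strictly positive near $0$ and $G$ is strictly increasing by the previous section). The symmetric comparison on $[x_0, \infty)$ against a Riccati solution starting from a value below $-\hat a$ and blowing down to $-\infty$ in finite time gives $u \ge -\hat a$.

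To upgrade to strict inequality, suppose $u(x_0) = \hat a$. The same integrating-factor comparison against the constant solution $v \equiv \hat a$ gives $u \ge \hat a$ on $(-\infty, x_0]$, which combined with the already-established $u \le \hat a$ forces $u \equiv \hat a$ on that half-line. Integrating $\psi'/\psi = \hat a$ yields $\psi(x) = \psi(x_0)\,e^{\hat a(x - x_0)}$ and therefore $g'(x) = \psi(x_0)^{-2} e^{-2\hat a(x - x_0)}$ on $(-\infty, x_0]$. A further integration then shows $g(x) \to -\infty$ as $x \to -\infty$, contradicting the boundedness of $g = \phi_t * G$. The case $u(x_0) = -\hat a$ is treated symmetrically on $[x_0, \infty)$ by comparison with $v \equiv -\hat a$, forcing $g(x) \to +\infty$ as $x \to +\infty$, which is again impossible.

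There is no substantive obstacle; the argument is essentially a textbook Riccati blow-up. The point requiring care is that $u$ be genuinely smooth and locally bounded on all of $\R$, so that the blow-up predicted by comparison really is a contradiction; this reduces to verifying $g' > 0$ everywhere, which in turn uses the strict monotonicity of $G$ established earlier together with the strict positivity of the mollifier $\phi_t$ in a neighborhood of the origin.
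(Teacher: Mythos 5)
Your proof is correct and follows essentially the same route as the paper: a Riccati comparison forcing finite-time blow-up of $u=\psi'/\psi$ if $|u|>\hat a$ anywhere, and, in the boundary case $u(x_0)=\pm\hat a$, propagation of the equality along a half-line giving $g'(x)=c\,e^{\mp 2\hat a x}$ there, which contradicts the boundedness of $g$. The only cosmetic difference is that you compare with the explicit $\coth$ solution and the constant solution via an integrating factor, whereas the paper uses the cruder bound $w'\le -w^2$ and an $\epsilon$-perturbed monotone quantity $\frac{\hat a'+u}{\hat a'-u}e^{-2\hat a' x}$ for the equality case; both yield the same conclusions.
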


\begin{proof}
If $u(x_0) > \hat a$ then as long as $u(x) \ge \hat a, \hat a^2 - u(x)^2 \le -(u(x) - \hat a)^2$ so that if $w = u - \hat a, w' \le -w^2$.  Thus $w$ tends to $ +\infty$ for some $x < x_0$, a contradiction.  If $u(x_0) < -\hat a$ then as long as $u(x) \le -\hat a, u' \le -(u+ b)^2$.  This leads to $u(x)$ tending to $+ \infty$ for some $x > x_0$, another contradiction.  Suppose  $u(x_0) = \hat{a}$. We know that $|u(x)| \le \hat{a}$ for all $x$.  If $\epsilon > 0 $ let $\hat{a}' = \hat{a} + \epsilon$.  We know that $-\psi'' + \hat{a}'^2 \psi \ge 0$. A computation shows that $\frac{\hat{a}' +u}{\hat{a}' -u} e^{-2\hat{a}' x} $ is monotone decreasing.  Thus if $x < x_0$,  $\frac{\hat{a}' +u(x) }{\hat{a}' -u(x)}e^{-2\hat{a}' x} \ge  \frac{\hat{a}' +\hat{a} }{\epsilon}e^{-2\hat{a} x_0}$ .  If we take $\epsilon \to 0$ we find $u(x) = \hat{a}$.  Integrating we find $\psi(x) = ce^{\hat{a} x}$ for $x < x_0$ or $g'(x) = e^{-2\hat{a} x}/c^2$.  This contradicts the boundedness of $g$.  Similarly if $u(x_0) = -\hat{a}$, $u(x) = - \hat{a}$ for all $x> x_0$ and this also contradicts the boundedness of $g$.  So we have shown $-\hat{a} < \psi'/\psi < \hat{a}$ for all $x$.

\end{proof}

Let $m(t) = \frac{\hat a + u}{\hat a-u},p(t) = \frac{1}{2}\log m(t)$.  Then for any $x_0 \in \R$
\begin{lemma}
\begin{equation} \label{summarizingformula}
g'(x) = g'(x_0) e^{-2\hat a \int_{x_{0}}^x \tanh(p(t))dt}; \   p'(t) \le \hat a.
\end{equation}

\begin{align} \label{g-inequalities}
& g'(x_0) e^{-2\hat a|(x-x_0)|} \le g'(x)  \le g'(x_0) e^{2\hat a|(x-x_0)|}  \nonumber \\
&|g''(x)| \le 2\hat a g'(x), \  \text{and both} \nonumber \\
& g'(x) \le 2\hat a|g(\pm \infty) - g(x)|. \nonumber \\ 
\end{align}

\end{lemma}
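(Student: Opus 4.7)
The plan is to reduce all five conclusions to two elementary identities: the algebraic relation $\tanh p(t) = u(t)/\hat a$, and the derivative formula $u(t) = \psi'(t)/\psi(t) = -g''(t)/(2g'(t))$ coming from $\psi = (g')^{-1/2}$. For the first, I would use $\tanh\bigl(\tfrac12 \log m\bigr) = (m-1)/(m+1)$ and substitute $m = (\hat a+u)/(\hat a-u)$ to obtain $\tanh p = u/\hat a$. Combined with the derivative formula, $-2\hat a \tanh p(t) = -2u(t) = (g')'(t)/g'(t)$; integrating from $x_0$ to $x$ and exponentiating gives the first half of (\ref{summarizingformula}).

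For the bound $p'\le \hat a$, direct differentiation of $p = \tfrac12 \log\bigl((\hat a+u)/(\hat a-u)\bigr)$ yields $p'(t) = \hat a\, u'(t)/(\hat a^2 - u^2(t))$. The Riccati-type inequality $u'\le \hat a^2-u^2$ obtained from $-\psi''+\hat a^2\psi \ge 0$, combined with positivity of $\hat a^2 - u^2$ (guaranteed by Lemma \ref{l-lemma}, which asserts $|u|<\hat a$), gives $p'\le \hat a$ at once. Given (\ref{summarizingformula}), the two-sided bound $g'(x_0) e^{-2\hat a|x-x_0|}\le g'(x) \le g'(x_0) e^{2\hat a |x-x_0|}$ is then immediate from $|\tanh p|<1$, while $|g''|\le 2\hat a g'$ is just the rewriting of $|u|<\hat a$ via $g''/g' = -2u$.

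The last estimate $g'(x)\le 2\hat a |g(\pm\infty)-g(x)|$ is the only one requiring a further step. Using the derivative bound just obtained, I would verify that $t\mapsto g'(t)e^{2\hat a(t-x)}$ is non-decreasing (its derivative is $(g''+2\hat a g')e^{2\hat a(t-x)}\ge 0$) and, symmetrically, that $t\mapsto g'(t)e^{-2\hat a(t-x)}$ is non-increasing. The first monotonicity yields $g'(t)\ge g'(x)e^{-2\hat a(t-x)}$ for $t\ge x$, and integrating over $[x,\infty)$ produces $g(\infty)-g(x)\ge g'(x)/(2\hat a)$. The symmetric conclusion $g(x)-g(-\infty)\ge g'(x)/(2\hat a)$ follows from the second monotonicity on $(-\infty,x]$. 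Since the lemma concerns the mollified $g_t$, which is smooth with bounded derivatives, and boundedness of $G$ (and hence of $g_t$) ensures the limits $g_t(\pm\infty)$ exist, no regularity issues arise; the only real point of care is the algebraic identification $\tanh p = u/\hat a$, and I do not foresee a substantive obstacle.
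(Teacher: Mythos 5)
Your proposal is correct, and for the displayed formula and the first two estimates it is essentially the paper's argument made explicit: the identity $\tanh p = u/\hat a$ together with $u = -\tfrac12(\log g')'$ gives the integral formula, the Riccati inequality $u'\le \hat a^2-u^2$ plus $|u|<\hat a$ from Lemma \ref{l-lemma} gives $p'\le\hat a$, and $|u|<\hat a$ yields both $|g''|\le 2\hat a g'$ and the two-sided exponential bound. The only place you diverge is the final estimate $g'(x)\le 2\hat a|g(\pm\infty)-g(x)|$: the paper first shows $\lim_{x\to\pm\infty}g'(x)=0$ (arguing by contradiction from the two-sided bound and integrability of $g'$) and then integrates $(g'+2\hat a g)'\ge 0$ from $x$ to $+\infty$, whereas you integrate the pointwise lower bound $g'(t)\ge g'(x)e^{-2\hat a(t-x)}$ over $[x,\infty)$ (and symmetrically on $(-\infty,x]$), which gives $g(\infty)-g(x)\ge g'(x)/(2\hat a)$ directly and lets you bypass the preliminary step $g'(\pm\infty)=0$ entirely. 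Both routes rest on the same differential inequality $\mp g''\le 2\hat a g'$; yours is slightly more economical at this point, the paper's makes the decay of $g'$ at infinity explicit. No gaps.
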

\vspace{.1in}

\begin{proof}
The fact that $p'(t) \le \hat a$ is a computation which uses $u' \le \hat a^2 -u^2$.  We compute $u = \frac{m-1}{m+1}\hat a = -\frac{1}{2} (\log g')'$.  Integration gives (\ref{summarizingformula}).  
The first two lines of (\ref{g-inequalities}) follow directly from (\ref{summarizingformula}). To prove the last inequalities first note that that $\lim_{x \to \infty} g'(x) = 0$.  To see this suppose the contrary.  Then there is a sequence $x_n$ with $x_{n+1} > x_n + 1$ so that $g'(x_n)  \ge \delta > 0$.  But then from the first inequality of the lemma $g'(x) \ge \delta e^{-2\hat a}$ for $x \in [x_n, x_n + 1]$.  This contradicts the integrability of $g'$.  Similarly  $\lim_{x \to - \infty} g'(x) = 0$.  From the second line of (\ref{g-inequalities}),
\begin{align*}
&- g''(x) \le 2\hat a  g'(x)  \ \text{or} \\
& (g'(x) + 2 \hat a g(x))' \ge 0. 
\end{align*}
Integrating from $x$ to infinity gives one of the last inequalities.  The other follows in a similar way.  
\end{proof}
To make a connection with the conjecture K consider the case where $g'$ is even.  
\begin{corollary} \label{coshinequality}

\begin{align}
g'(x) \ge \frac{g'(0)}{(\cosh \hat a x)^2}
\end{align}

\end{corollary}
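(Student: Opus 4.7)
The formula from the preceding lemma,
$$g'(x) = g'(0)\exp\!\left(-2\hat a \int_0^x \tanh p(t)\,dt\right),$$
together with the inequality $p'(t) \le \hat a$ already established, reduces the task to an upper bound on $\int_0^x \tanh p(t)\,dt$. First I would observe that if $g'$ is even, then $\log g'$ is even, so $u = -\tfrac12(\log g')'$ is odd, which forces $p = \tfrac12\log\frac{\hat a + u}{\hat a - u}$ to be odd as well. In particular, $p(0) = 0$. Integrating $p'(t) \le \hat a$ from $0$ to $x \ge 0$ then yields $p(t) \le \hat a t$ on $[0,\infty)$.

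Since $\tanh$ is monotone, $\tanh p(t) \le \tanh \hat a t$ on $[0,\infty)$, and
$$\int_0^x \tanh \hat a t\, dt = \frac{1}{\hat a}\log\cosh \hat a x.$$
Substituting into the exponential formula gives $g'(x) \ge g'(0)/\cosh^2(\hat a x)$ for $x \ge 0$. For $x < 0$, the oddness of $p$ (and hence of $\tanh p$) together with the substitution $t = -s$ converts $\int_0^x \tanh p(t)\,dt$ into $\int_0^{|x|}\tanh p(s)\,ds$, and the positive case applies; alternatively one may just invoke evenness of the right-hand side.

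To pass from the mollified $g = g_t$ back to the original $G$, I would choose the mollifier $\phi$ to be even, so that $g_t' = \phi_t * G'$ remains even whenever $G'$ is; the bound on each $g_t'$ then transfers to $G'$ in the limit $t \downarrow 0$, using the Lipschitz regularity of $G'$ already given by Theorem \ref{2x2}. There is no genuine obstacle: the whole argument is a direct consequence of the Riccati-type bound $p' \le \hat a$ combined with the symmetry-enforced initial condition $p(0) = 0$. The only mild care is needed in handling $x < 0$ and in the final mollifier limit, both of which are essentially automatic.
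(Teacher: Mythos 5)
Your argument is correct and is exactly the argument the paper intends: with $g'$ even, $u=-\tfrac12(\log g')'$ is odd, so $p(0)=0$, and integrating $p'\le\hat a$ and then $\tanh p(t)\le\tanh(\hat a t)$ in the formula $g'(x)=g'(0)e^{-2\hat a\int_0^x\tanh p(t)\,dt}$ gives the bound (the paper's proof is just a one-line citation of these two facts, which you have filled in, including the routine $x<0$ case and the mollifier limit that the paper defers to the proof of Theorem \ref{2x2}). No gaps.
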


\begin{proof}
This follows directly from $p'(t) \le \hat a$ and the formula in (\ref{summarizingformula}).  This inequality is an equality in the rank one case \cite{TK} and if K is true the rank one case is the case of minimal variance (see Eq. \ref{variance}).
\end{proof}

\begin{proof}[Proof of Theorem \ref{2x2}]
We have $g =g_t= \phi_t* G$ satisfying $ 0 < g_t'(x) \le 2 \hat a[G], |g_t''(x)| \\ \le (2 \hat a)^2 [G]^2$.  By the Arzela - Ascoli theorem there is a sequence $g_n = g_{t_{n}}, t_n \downarrow 0$, such that $g_n \to G_1$ and $g'_n \to G_2$ uniformly on compact subsets of $\mathbb{R}$.  Clearly $G_1 = G$ and $g_n(x) - g_n(y) = \int_x^y g_n'(t) dt$ so that $G(x) - G(y) = \int_x^y G_2(t)dt.$  Since $G_2$ is continuous it follows that $G$ is $C^1$ and $g_n' \to G'$ uniformly on compacts.  We also have $|g_n'(x) - g'_n(y)| = |\int_x^y g_n''(t) dt| \le (2 \hat a)^2[G]^2|x-y|$ giving $|G'(x) - G'(y)| \le (2 \hat a)^2[G]^2|x-y|$.  Thus $G'$ is absolutely continuous.  We have $g_n'(x+h) - g_n'(x) = \int_x^{x+h}g_n''(u)du \le \int_x^{x+h} 2\hat a g_n'(u) du$.  Thus  $G'(x+h) - G'(x) \le \int_x^{x+h} 2\hat a G'(u) du$.  It follows that $G''(x) \le 2\hat a G'(x)$.  Similarly $-G''(x) \le 2\hat a G'(x)$.  We have used that $g_n(\pm \infty) = G(\pm \infty)$. 
We take $\phi$ to be even and then $g = g_t$ is odd.  Thus the last inequality follows simply from Corollary \ref{coshinequality}.

 \end{proof}

If we now saturate (\ref{B}) by making the inequality an equality we obtain $g'^{-1/2} = \psi = be^{jx} + ce^{-jx}$ where $j = \sqrt 3 \sigma$.  Taking $b$ and $c$ positive to insure that $g'$ does not blow up at $\pm \infty$ we can write $\psi = \hat{a} \cosh(j (x-t)) $ for some $\hat{a} > 0$ and $t \in \mathbb{R}$.  This gives $g(x) = \hat{a}^{-2} \tanh (j(x-t)) + c$ for some constant $c$.  We can check that indeed $j = \hat a$ so that $g \in K_a$.  This is to be expected from Theorem \ref{half} which in particular states that if $i[f(P), g(Q)] \ge 0 $ for all $f \in K_{\hat a}$, then $g \in K_a$.  But it does show that the inequality (\ref{B}) is in some sense sharp, so that cases of equality occur if the Kato conjecture is correct.

\section{An interesting formula}

Here is an interesting formula:  Suppose $g$ is a bounded real function with a bounded analytic continuation to a strip of width $> 2\lambda$.  Consider 
\begin{align} \label{formula}
&i[\tanh \lambda P, g(Q)] = -2i[(1+ e^{2\lambda P})^{-1}, g(Q)]  \nonumber\\
& = 2i(1+ e^{2\lambda P})^{-1}[e^{2\lambda P}, g(Q)] (1+ e^{2\lambda P})^{-1} \nonumber\\
& = 2i (1+ e^{2\lambda P})^{-1}\Big(e^{\lambda P} (e^{\lambda P} g(Q) e^{-\lambda P}) e^{\lambda P} - e^{\lambda P} (e^{-\lambda P} g(Q)e^{\lambda P}) e^{\lambda P} \Big)(1+ e^{2\lambda P})^{-1} \nonumber \\
& = (i/2) (\cosh \lambda P)^{-1} \big(e^{\lambda P} g(Q) e^{-\lambda P} - e^{-\lambda P} g(Q) e^{\lambda P}\big)(\cosh\lambda P)^{-1} \nonumber\\
&=  (i/2) (\cosh \lambda P)^{-1} (g(Q -i\lambda) - g(Q+ i \lambda)) (\cosh \lambda P)^{-1} \nonumber \\
&=  (\cosh \lambda P)^{-1} \text{Im}  g(Q+i\lambda) (\cosh \lambda P)^{-1} 
\end{align}
Given the problems which occurred with manipulating unbounded functions of $P$ in Section \ref{finiterankC}, we assure the reader that the computations above can be made rigorous.\\

Perhaps (\ref{formula}) makes Theorem \ref{half} intuitive.\\

\end{document}